\newtheorem{theorem}{Theorem}[section]
\newtheorem{lemma}[theorem]{Lemma}
\newtheorem{proposition}[theorem]{Proposition}
\newtheorem{claim}[theorem]{Claim}
\newtheorem{definition}[theorem]{Definition}
\def\squarebox#1{\hbox to #1{\hfill\vbox to #1{\vfill}}}
\newcommand{\qed}{\hspace*{\fill}
\vbox{\hrule\hbox{\vrule\squarebox{.667em}\vrule}\hrule}\smallskip}
\newenvironment{proof}{\noindent{\bf Proof:~~}}{\(\qed\)}
\begin{document}

\title{Combinatorial Cost Sharing}



\author{
	Shahar Dobzinski
	\and 
	Shahar Ovadia\thanks{Weizmann Institute of Science. The first author is the incumbent of the Lilian and George Lyttle Career Development Chair. Work supported in part by the I-CORE program of the planning and budgeting committee and the Israel Science Foundation 4/11 and by EU CIG grant 618128. Emails: \texttt{shahar.dobzinski@weizmann.ac.il, shaharov59@gmail.com}.  } }
\maketitle

\begin{abstract}
We introduce a combinatorial variant of the cost sharing problem: several services can be provided to each player and each player values every combination of services differently. A publicly known cost function specifies the cost of providing every possible combination of services. A combinatorial cost sharing mechanism is a protocol that decides which services each player gets and at what price. We look for dominant strategy mechanisms that are (economically) efficient and cover the cost, ideally without overcharging (i.e., budget balanced). Note that unlike the standard cost sharing setting, combinatorial cost sharing is a multi-parameter domain. This makes designing dominant strategy mechanisms with good guarantees a challenging task.

We present the Potential Mechanism -- a combination of the VCG mechanism and a well-known tool from the theory of cooperative games: Hart and Mas-Colell's potential function. The potential mechanism is a dominant strategy mechanism that always covers the incurred cost. When the cost function is subadditive the same mechanism is also approximately efficient. Our main technical contribution shows that when the cost function is submodular the potential mechanism is approximately budget balanced in three settings: supermodular valuations, symmetric cost function and general symmetric valuations, and two players with general valuations.
\end{abstract}

\section{Introduction}

In the cost sharing problem the cost of a public good (e.g., a bridge, a park, or a networking infrastructure) has to be partitioned among the players. The challenge is to decide whether to provision the public good, and if so, which players will receive usage permissions and for what prices, all while making sure that the payments cover the cost. 

The problem was extensively studied in both economics and algorithmic game theory and many variants were suggested and analyzed (e.g., \cite{littlechild1973simple,moulin1999incremental,moulin2001strategyproof,roughgarden2009quantifying,bleischwitz2007or,bleischwitz2008group,deb1999voluntary,mehta2007beyond,hashimoto2015strategy}). However, in almost all variants only a single good is considered.

In this paper we attempt to fill this lacuna and introduce \emph{combinatorial cost sharing}, where multiple goods can be provisioned and both costs and preferences depend on the selected combination of goods. While combinatorial cost sharing is a natural generalization of the basic cost sharing scenario, from a technical perspective it is radically different as we leave the relatively safe single parameter world and cross the bridge to the realm of multi parameter mechanism design. Nevertheless, we will see that good mechanisms for combinatorial cost sharing do exist.

\subsection*{Simple Cost Sharing}

The standard cost sharing setting (from now on, ``simple cost sharing'') involves a set $N$ of players ($|N|=n$), where the value of player $i$ is $v_i$ if player $i$ receives a usage permission and $0$ otherwise. A known cost function $C:2^N\rightarrow \mathbb R^+$ specifies the cost of serving each subset of the players. A (direct) mechanism for this problem receives as input the valuations of the players and outputs the set of served players and the payment $p_i$ of every player $i$. It is standard to assume that the mechanism is individually rational (for every $i$, $p_i\leq v_i$), that $p_i\geq 0$ (no positive transfers) and moreover $p_i=0$ if player $i$ is not served.

Work on cost sharing in the AGT community mostly focuses on incentive compatible mechanisms, either dominant strategy or groupstrategyproof, that at the very least always cover the cost. That is, in an instance where the mechanism serves the set $ALG$ of players, $C(ALG)\leq \Sigma_ip_i$. Ideally, we will also not overcharge the players, at least not by much: a mechanism is $\beta$-budget balanced if in every instance $C(ALG)\leq \Sigma_ip_i\leq \beta \cdot C(ALG)$.

We look for mechanisms that are economically efficient. Economic efficiency can be interpreted in several ways, with the \emph{social welfare} being the standard definition: $SW(S)=\Sigma_{i\in S}v_i-C(S)$. Unfortunately, Feigenbaum et al. \cite{feigenbaum2002hardness} show that no dominant strategy and budget balanced mechanism provides a finite approximation to the social welfare, and this result holds even if we only require the mechanism to cover the cost\footnote{This impossibility is typical for mixed sign objectives, like the social welfare. The simple proof is instructive: consider two players, each with $v_i=1$. The cost of serving any non-empty set is $1$. The social welfare of the optimal allocation (which is to serve both players) is $1$, whereas the social welfare of any other allocation is $0$. Therefore, in any mechanism that provides a finite approximation to the welfare both players are served. To cover the cost, one of the players, without loss of generality player $1$, must pay $p>0$. Consider now an instance with $v_1=\frac p 2>0$ and $v_2=1$. The mechanism is incentive compatible so player $1$ is not served. Hence, no matter whether the mechanism serves player $2$ or none of the players, the social welfare is $0$. However, the optimal social welfare is $v_1>0$.}.

To overcome this, Roughgarden and Sundararajan \cite{roughgarden2009quantifying} suggest an alternative quantification of efficiency. First, one can use additive approximations, i.e., if $OPT$ is the allocation that maximizes the welfare, then in every instance $SW(OPT)-SW(ALG)\leq \alpha\cdot C(OPT)$, for some reasonable $\alpha>0$. This benchmark allows the designer to shift the focus from instances with ``low'' welfare -- which are the raison d'etre of the impossibilities -- to instances in which using a cost sharing mechanism should yield a noticeable improvement in the efficiency.

A related notion discussed in \cite{roughgarden2009quantifying} is minimizing the \emph{social cost}, $\pi(S)=C(S)+\Sigma_{i\notin S}v_i$, which is the construction cost plus the ``lost value'' from not serving some of the players. Interestingly, the social cost and the social welfare induce the same order on the allocations (i.e., if $SW(S)\geq SW(T)$ then $\pi(S)\leq \pi(T)$). Moreover, additive approximations to the social welfare imply multiplicative guarantees on the social cost: $SW(OPT)-SW(ALG)\leq \alpha\cdot C(OPT)$ implies $\pi(ALG)\leq (\alpha + 1)\cdot \pi(OPT)$. Following \cite{roughgarden2009quantifying}, many papers study social cost minimization in various settings (e.g., \cite{chawla2006optimal,mehta2007beyond,bleischwitz2007or,bleischwitz2008group,brenner2007cost,gupta2015efficient}. For the sake of compatibility with the literature, we state our results in terms of approximation to the social cost, but in fact we prove essentially the same guarantees with respect to the stronger notion of additive approximation.

The cost sharing literature is rich in beautiful results, but the jewel in the crown is probably the Shapley value mechanism \cite{moulin2001strategyproof}, which is a groupstrategyproof mechanism that exactly shares the cost whenever $C$ is a submodular function \cite{moulin2001strategyproof}. Roughgarden and Sundararajan \cite{roughgarden2009quantifying} show that it gives an approximation ratio of $ \mathcal H_n=\Sigma_{i=1}^n\frac 1 n$ to the social cost. It is known that the approximation ratio of any mechanism that always covers the cost is $\Omega(\log n)$ and that this is true even if we relax the incentive compatibility requirement to strategyproofness \cite{dobzinski2008shapley}\footnote{This impossibility of \cite{dobzinski2008shapley} is stated for budget balanced mechanisms, but the proof applies even to cost recovering mechanisms.}.

\subsection*{The Model}

The main purpose of this paper is to go beyond the single good setting. Toward this end, we introduce \emph{combinatorial cost sharing}. We present two formulations of combinatorial cost sharing. The first is a more direct formulation which might help the reader to digest the setting more easily. The second formulation -- which is the one that is studied throughout the paper -- is equivalent in power but is notationally more involved. We use it since it makes the technical proofs more readable.

\paragraph{A First Attempt.} As in simple cost sharing, there is a set $N$ which consists of $n$ players, but now there is a set $M$ of public goods (for example, a pool, a gym, etc.). The mechanism has to decide which goods to construct and if a good is constructed, which subset of the players will be allowed to use it. Players might have complicated preferences over the goods in $M$ (e.g., a combined membership for the pool and the gym might be more valuable than the sum of the values of each membership alone), thus the private valuation of player $i$ is $v_i:2^{M}\rightarrow \mathbb R$. Note that this assumes that there are no externalities in the sense that value of each player is determined only by the goods he is served (in particular, the value does not depend on the other players who use these services).

Let $C':(2^N)^m\rightarrow \mathbb R$ be a known function that specifies the cost of every possible combination of services. For example, $C(S_1,\ldots, S_n)$ is the cost of serving the first good in $M$ to the players in $S_1$ while serving the second good in $M$ to the players in $S_2$, and so on. We stress that we do not make any assumptions on $S_1,\ldots, S_n$ and in particular these sets are typically not disjoint.

\paragraph{The Main Formulation.} The issue with the first formulation is that we often would like to assume that the cost function belongs to some standard class, e.g., $C'$ is submodular or subadditive. However, as defined $C'$ is not even a set function (its domain is $(2^N)^m$). We thus use a different formulation that is equivalent in power. Define -- for notational convenience -- for each player $i$ a set $M_i$ with $M_i\cap M_{i'}=\emptyset$ for $i\neq i'$, where each $j\in M_i$ represents a permission to consume a different good. For example, if $M$ is the set of public goods that can be constructed, we define for each player $i$ a set $M_i$, $|M_i|=|M|$, and think about the $j$'th item in $M_i$ as permission \emph{for player $i$} to use the $j$'th public good in $G$. In particular player $i$ is never interested in items from $M_{i'}$, for $i\neq i'$. The private valuation of player $i$ is $v_i:2^{M_i}\rightarrow \mathbb R$. The cost function $C:2^{M_1}\times\cdots\times 2^{M_n}\rightarrow \mathbb R$ specifies the cost of every possible combination of services. Note that it is straightforward to express every cost function in the first formulation as a cost function in the main formulation. In particular, $C$ is now a set function (the set of items is $M_1\cup \cdots\cup M_n$ -- recall that $M_i\cap M_j$ for $i\neq j$), so standard notions such as subadditivity and submodularity are defined in the usual sense.


\paragraph{Other Requirements.} Due to the combinatorial richness of our domain, we are not able to develop good \emph{group}strategyproof mechanisms. Instead, we focus on dominant strategy ones. Following the literature, we focus on mechanisms that satisfy individual rationality, no positive transfers and always cover the cost or maybe are even $\beta$-budget balanced, for some reasonable $\beta$. We naturally extend the social cost definition of \cite{roughgarden2009quantifying} to multiple goods: $\pi(\vec{S})=C(\vec{S})+\sum_{i\in N}\left[v_{i}(M_{i})-v_{i}(S_{i})\right]$. I.e., we still want to minimize the construction cost plus the lost value. We note that the extended definition preserves the properties discussed above, e.g., additive approximations to the welfare lead to multiplicative approximations to the social cost. We refer the reader to the preliminaries section for exact definitions.


\subsection*{Cost Recovering Mechanisms and the Potential Mechanism}

The simple cost sharing domain is a single parameter one, where the private information of every player consists of one number. Thus, to design a dominant strategy mechanism one can focus on the quite powerful family of monotone algorithms. In fact, the literature contains powerful techniques for designing groupstrategyproof mechanisms, for various notions of groupstrategyproofness (e.g., the Moulin family of mechanisms \cite{moulin1999incremental} and acyclic mechanisms \cite{mehta2007beyond}).

In contrast, the combinatorial cost sharing domain is a multi-parameter one. The difficulty of designing useful mechanisms for multi-parameter domains is well known. The root of evil is the lack of general design techniques except the VCG family. For example, if the domain is unrestricted, then the only possible dominant strategy mechanisms are affine maximizers \cite{roberts1979aggregation}, a slight variation of VCG mechanisms. In general, more restricted domains as ours do exhibit non VCG mechanisms, but the VCG family remains the main tool at our disposal.

However, while VCG is effective for welfare maximization, in cost sharing settings we also need to cover the construction cost. Unfortunately, conventional wisdom has it that the revenue of the VCG mechanism is uncontrollable and tends to be low\footnote{Some papers attempt to control the revenue of VCG in simpler auction settings by rebating the players, e.g., \cite{Moulin200996, guo2009worst, cavallo2006optimal}. Also relevant is the work of Blumrosen and Dobzinski \cite{blumrosen2014reallocation} which is the closest in spirit to ours (and in fact is the inspiration to our work). One of their results essentially provide a cost-recovering VCG based mechanism for the excludable public good problem. Their mechanism can be derived as a special case of our constructions.} \cite{ausubel2006lovely}. The main technical contribution of this paper challenges this -- we do manage to ``tame'' the VCG beast and obtain VCG based mechanisms that are approximately budget balanced.

For simplicity, we start our journey by constructing VCG based mechanisms that always cover the cost for simple cost sharing, so the valuation of each player $i$ can be described by a single number: $v_i$ if served and $0$ otherwise. In general, affine maximizers can have both (multiplicative) player weights and (additive) allocations weights. The former does not seem to be very useful, so we focus on affine maximizers of the form:
\begin{equation}
\arg\max_{S\subseteq N}\sum_{i\in S}v_{i}-H(S)\label{eq:VCG-Intro}
\end{equation}
where $H:2^N\rightarrow \mathbb R$ is a function that does not depend on the $v_{i}$'s. If $S$ is the allocation that maximizes (\ref{eq:VCG-Intro}) for the valuation profile $v$, then the payment of player $i$ is:
\begin{equation}
p_{i}=\sum_{j\in S_{-i}}v_{j}-H(S_{-i})-\left[\sum_{j\in S-\{i\}}v_{j}-H(S)\right]\label{eq:VCG-Payments-Intro}
\end{equation}
where $S_{-i}$ is the allocation that maximizes (\ref{eq:VCG-Intro})
when the valuation of player $i$ is identically $0$.

When $H$ is the cost function $C$, we get a welfare maximizing mechanism. However, it is common for this mechanism to run a deficit, e.g., in the special case of excludable public good ($C(S)=1$ for every $S\neq \emptyset$), if $v_{i}>1$ for every player $i$, the revenue is $0$. 

Thus, to cover the incurred cost we need some other function $H\neq C$. Notice that from the definition of $S_{-i}$ that $\sum_{j\in S_{-i}}v_{j}-H(S_{-i})\geq \sum_{j\in S-\{i\}}v_{j}-H(S-\{i\})$. Hence, the payment of the $i$'th player $p_{i}$ is at least $H(S)-H(S-\{i\})$. A function $H$ with the property that for every set $S\subseteq N$ it holds that $\sum_{i\in S}H(S)-H(S-\{i\})\ge C(S)$ will lead to a dominant-strategy mechanism for simple cost sharing that always collects payments that cover the incurred cost. For example, in the special case of excludable public good, we can choose $H(S)=\mathcal H_{|S|}$, so if a set $S$ is selected the marginal cost to $H$ of each player $i\in S$ is at least $H(S)-H(S-\{i\})= \frac 1 {|S|}$ and the total payment is at least $1$ (this special case was analyzed by Blumrosen and Dobzinski \cite{blumrosen2014reallocation}).

Interestingly, for \emph{every} cost function $C$ the \emph{potential function} of Hart and Mas-Colell \cite{hart1989potential} satisfies these properties. Given a cooperative game with a set $N$ of players and a cost function $C$, the potential function is the unique
function for which the sum of marginal contributions of every coalition equals its cost, i.e., for every $S\subseteq N$, $\sum_{i\in S}P_{C}(S)-P_{C}(S-\{i\})=C(S)$. 

The potential function has several interesting properties, e.g., the marginal contribution $P_C(i|S)$ coincides with the Shapley value of player $i$ in the coalition $S\cup \{i\}$. In addition, it is also known that the worst-case welfare loss of the Shapley mechanism for simple cost sharing is given by $P_{C}(OPT)$ \cite{moulin2001strategyproof}.

We generalize and adapt the potential function to our needs: the potential function as defined in \cite{hart1989potential} considers cooperative games, i.e., the cost function defined on subsets of $N$. Our generalization considers allocations. Specifically, we define the marginal contribution of player $i$ to the allocation $(S_{1},...,S_{n})$ by $P_{C}(S_{1},...,S_{i-1},S_{i},S_{i+1},...,S_{n})-P_{C}(S_{1},...,S_{i-1},\emptyset,S_{i+1},...,S_{n})$. 

We set the function $H$ as in (\ref{eq:VCG-Intro}) to be our generalization of the potential function, and name the new mechanism, a VCG mechanism using the potential function, the \emph{Potential Mechanism. }Notice that this gives a dominant strategy cost-recovering mechanism for \emph{every} cost function $C$. We are also able to prove efficiency guarantees if the cost function $C$ is subadditive. Combined together, we get the following general result:

\vspace{0.1in} \noindent \textbf{Theorem: }Let $C$ be a subadditive cost function. Then, the Potential mechanism always recovers the cost and provides an approximation ratio of $2\mathcal H_n$ to the social cost. If $C$ is submodular (or even XOS) the approximation ratio improves to $\mathcal H_n$.

\vspace{0.1in} \noindent Again, the approximation ratio is essentially tight due to the impossibility of \cite{dobzinski2008shapley}. For simple cost sharing, there is a dominant-strategy cost recovering mechanism that provides an approximation ratio of $\mathcal H_n$ for any cost function, by running the Shapley value mechanism ``on top'' of the VCG mechanism \cite{sundararajan2009trade}. Other works that focus on cost recovering mechanisms are \cite{georgiou2013black, fu2013cost}. In particular, \cite{georgiou2013black} handles some multi-parameter domains. However, their results are obtained via a reduction to the single parameter setting, e.g., by assuming additive valuations or considering ``all or nothing'' solutions. This approach leads to poor approximation ratios -- linear in the number of goods.

\subsection*{The Main Result}

The main technical effort of this work is in identifying three settings in which the potential mechanism is not only efficient and cost recovering, but also budget balanced.

\vspace{0.1in} \noindent \textbf{Theorem: }Let $C$ be a submodular cost function. The Potential Mechanism is $\mathcal{H}_{n}$-budget-balanced and provides an approximation ratio of $\mathcal{H}_{n}$ to the social cost in each of the following settings:
\begin{itemize}
\item Supermodular valuations.
\item General symmetric valuations and player-wise symmetric cost function\footnote{A valuation is symmetric if $v_i(S)=v_i(T)$ whenever $|S|=|T|$. A cost function is player-symmetric if $C(\vec S)=C(\vec T)$ whenever $|S_i|=|T_i|$ for all $i$.}.
\item Two players ($n=2$) with general valuations.
\end{itemize}

\vspace{0.1in} \noindent Some results were already known for some special cases of the symmetric setting, but even then submodular valuations are required: Mehta et al. \cite{mehta2007beyond} consider \emph{non-metric fault-tolerant uncapacitated facility location}. Bleischwitz and Schoppmann \cite{bleischwitz2008group} study \emph{metric fault tolerant uncapacitated facility location}. The mechanisms of Mehta et al. and of Bleischwitz and Schoppmann are actually generalizations of Moulin mechanisms \cite{moulin1999incremental,moulin2001strategyproof} and rely in their core on pushing single parameter cost sharing techniques to their limit. In contrast, in the symmetric setting our mechanism does not make any assumption on the valuations, except monotonicity. 


\subsection*{Limitations and Impossibilities}

Our results are tight in several aspects. First, even for excludable public good there is an instance in which the approximation ratio of the potential mechanism is $\Theta(\log n)$ and the mechanism is $\Theta(\log n)$ budget balanced. Actually, this is true for every VCG-based mechanism that uses a symmetric function\footnote{Note that the potential mechanism is symmetric when $C$ is symmetric, and in particular is symmetric for excludable public good.} (i.e., for $\vec S,\vec T$ with $|\vec S|=|\vec T|$, $H(\vec S)=H(\vec T)$): 

\vspace{0.1in} \noindent \textbf{Theorem: }Let $A$ be a symmetric VCG-based mechanism for the excludable public good problem (i.e., a mechanism that maximizes $\sum_{i}v_{i}(S_{i})-H(\vec{S})$ for some symmetric $H$) which always covers the incurred cost. Suppose that $A$ provides an approximation ratio of $n^{1-\epsilon}$ to the social cost, for some $\epsilon>0$. Then, there is an instance in which both the approximation ratio is $\Omega(\log n)$ and the sum of payments collected by the mechanism is $\Omega(\log n)$ times the incurred cost.

\vspace{0.1in} \noindent Next we show that one cannot forgo the assumption of a restricted cost function since there is a (non-subadditive) cost function for which the approximation ratio of the potential mechanism is $\Omega(n)$. Moreover, the potential mechanism is $\Omega(n)$-budget-balanced for a submodular cost function and unit demand valuations (a simple case of submodular valuations). A full mapping of the families for which the potential mechanism provides good approximation ratios is an open problem.

We also show that while one have many possible choices for the function $H$ in (\ref{eq:VCG-Intro}), the choice to use the potential function is not arbitrary: consider a cost-recovering VCG based mechanism $\sum_{i}v_{i}(S_{i})-H(\vec{S})$ where $H$ is normalized. Then, $H(\vec{S})\ge P_C(\vec{S})$ for every $\vec{S}$.

\subsection*{Open Questions}

Our work leaves many exciting questions open. We discuss some of them now.

\paragraph{Computational Issues.} In this work we focused on proving the existence of mechanisms with good guarantees. An interesting question is to understand whether there are mechanisms with good guarantees that are computationally efficient. We do not know the answer in general, but we do note that in the case of supermodular valuations and a submodular cost function the potential mechanism is computationally efficient whenever the value of the potential function is easy to compute. To see this, we point out that one of our results states that the potential function of a submodular cost function is submodular as well. Therefore, maximizing $\sum_{i}v_{i}(S_{i})-P_C(\vec{S})$ reduces to the problem of maximizing a supermodular function, which is equivalent to minimizing submodular function. Minimizing a submodular function can be done in polynomial time \cite{schrijver2000combinatorial} assuming oracle access to the potential function. 

Now observe that the potential function is easy to evaluate given the Shapley values (since the marginals of the potential function correspond to Shapley values). We conclude that in this case the potential mechanism is easy to compute whenever computing Shapley values is easy.

\paragraph{The Performance of the Potential Mechanism.} We identified three settings in which the potential mechanism is $\mathcal{H}_{n}$-budget-balanced and provides an approximation ratio
of $\mathcal{H}_{n}$ to the social cost. When the cost function is subadditive, the social cost approximation is $2\mathcal H_n$, but it is not clear whether we get budget balance for some interesting classes of valuations. A first step in this direction will be to determine the overcharging of the potential mechanism for simple cost sharing when the cost function is subadditive.

\paragraph{Other Mechanisms for Combinatorial Cost Sharing.} The potential mechanism is a VCG-based mechanism. Are there non VCG-based mechanisms with good guarantees, especially for settings in which the potential mechanism fails to deliver, e.g., submodular cost and valuation functions?

\paragraph{The Power of GSP vs. Strategyproof Mechanisms.} We know very little about groupstrategyproof mechanisms for combinatorial cost sharing. We provide a mechanism that guarantees a poor approximation ratio of $\Omega(n)$ (see Section \ref{sec-gsp}). Are there better groupstrategyproof mechanisms?

\paragraph{Impossibilities for Combinatorial Cost Sharing.}
The potential mechanism provides an approximation ratio of $\mathcal H_n$ to the social cost and is $\mathcal H_n$-budget-balanced for supermodular valuations and submodular cost functions. Is there a strategyproof combinatorial cost-sharing mechanism for submodular cost function and supermodular valuation functions which is $\beta$-budget-balanced and provides approximation ratio of $\rho$ to the social cost, where $\beta,\rho<\mathcal{H}_{n}$?

In particular, the only unrestricted impossibility result that applies to combinatorial cost sharing is the $\Omega(\log n)$ bound to the social cost of \cite{dobzinski2008shapley} which is proved for the restricted case of excludable public good. It seems that completely new ideas are needed to prove the optimality of the potential mechanism in richer combinatorial settings.


\section{Preliminaries}



In combinatorial cost sharing we have a set $N$ of $n$ players and $M=\bigcup_{i=1}^{n}M_{i}$ a set of services, where for
every $i\neq j$: $M_{i}\cap M_{j}=\emptyset$. Every player $i\in N$
has a valuation function
$v_{i}:2^{M_{i}}\to\mathbb{R}^{+}$. There is a cost
function $C:2^{M_{1}}\times\ldots\times2^{M_{n}}\to\mathbb{R}^{+}$ that
specifies the cost of every possible allocation of services. The output
is $(\vec{S},\vec{p})$ where $S_{i}\subseteq M_{i}$ is the allocation
of services to player $i$, and $p_{i}\ge0$ is player $i$'s payment. 

We assume that the cost function and the valuations
are \emph{monotone }(i.e., $S\subseteq T$ implies $C(S)\le C(T)$
and $v_{i}(S)\le v_{i}(T)$) and \emph{normalized }(i.e., $C(\emptyset)=0$
and $v_{i}(\emptyset)=0$).

We note that since the $M_{i}$'s are disjoint then $2^{M}=2^{M_{1}}\times\ldots\times2^{M_{n}}$.
Thus, we sometimes refer to functions from $2^{M_{1}}\times\ldots\times2^{M_{n}}$
as functions from $2^{M}$, e.g., $C(S_{1},\ldots,S_{n})$ will be denoted
$C\left(\bigcup_{i=1}^{n}S_{i}\right)$, where $S_{i}\subseteq M_{i}$
for each $1\le i\le n$. In addition, For a vector
of pairwise disjoint sets $\vec{S}=(S_{1},\ldots,S_{n})$ and a set
$T\subseteq M_i$ we let $\vec{S}-T$ be the vector
$(S_{1},...,S_{i-1},S_i-T,S_{i+1},...,S_{n})$. We similarly define the usual set operations (e.g., union) between two vectors of disjoint sets $\vec S,\vec T$ as applying this operation on all pairs of the form $(S_i, T_i)$.

Note that simple cost sharing is a special case of combinatorial cost sharing: set $|M_{i}|=1$
for every player $i$. In this case $|M|=|N|$ and we may think of $S\subseteq N$ as a set of served players: $i\in S$ if $S_{i}\neq\emptyset$. An important special case of simple cost sharing is \emph{excludable public good}: $C(\vec{S})=1$ for every $\vec{S}\neq(\emptyset,...,\emptyset)$ and $C(\emptyset)=0$. 


A (direct) combinatorial cost sharing mechanism takes the valuations of the players and outputs an allocation of services and prices. In this paper we consider only mechanisms that satisfy \emph{individual
rationality} (IR), if the mechanism charges player $i$ an amount $p_i$, then $p_{i}\le v_{i}(S_{i})$. A second requirement is \emph{no positive transfers} (NPT), meaning that $p_i\geq 0$. We now discuss other desired properties of mechanisms.

\paragraph{Incentive Compatibility} The focus of most this paper is in dominant-strategy mechanisms: a combinatorial cost-sharing mechanism is \emph{strategyproof}
or \emph{truthful} (in dominant strategies), if for every player $i$ with valuation function
$v_{i}$ and for every other valuation function $v_{i}^{\prime}$
it holds that $v_{i}(S_{i})-p_{i}\ge v_{i}(S_{i}^{\prime})-p_{i}^{\prime}$
where $(\vec{S},\vec{p})$ and $(\vec{S}^{\prime},\vec{p}^{\prime})$
are the outputs of the mechanism for $\vec{v}$ and $(v_{i}^{\prime},\vec{v}_{-i})$,
respectively. Groupstrategyproofness will be defined when needed, in Section \ref{sec-gsp}.

\paragraph*{Budget Balance.} A cost-sharing mechanism is \emph{budget-balanced} if the payments
charged from the players cover exactly the incurred cost, i.e. if
the mechanism's output is $(\vec{S},\vec{p})$ then $\sum_{i\in N}p_{i}=C(\vec{S})$.
This notion can be relaxed: a cost-sharing mechanism is \emph{$\beta$-budget-balanced} (for $\beta\ge1$) if for every valuation profile and every cost function the mechanism
outputs a services allocation $\vec{S}$ and payments vector $\vec{p}$
such that $C(\vec{S})\le\sum_{i\in N}p_{i}\le\beta\cdot C(\vec{S})$.

\paragraph{Economic Efficiency.} Roughgarden and Sundararajan \cite{roughgarden2009quantifying} suggested to quantify the inefficiency of cost-sharing mechanisms by the notion
of \emph{social cost, }denoted\emph{ $\pi(S)$}, which is the incurred
cost of the mechanism plus the excluded values. We extend the definition
of the social cost to the combinatorial setting: 
\[
\pi(\vec{S})=C(\vec{S})+\sum_{i\in N}\left[v_{i}(M_{i})-v_{i}(S_{i})\right]
\]

A cost sharing mechanism is a $\rho$-approximation to
the social cost if for every valuation profile the social cost of
its outcome $\vec{S}$ is at most $\rho\ge1$ times that of the allocation $\overrightarrow{OPT}$ that minimizes the social cost: $\pi(\vec{S})\le\rho\cdot\pi(\overrightarrow{OPT})$.

\subsubsection*{Cost and Valuation Functions}

A function $C:2^{M}\to\mathbb{R}^{+}$ is submodular if for every
two sets $S,T\subseteq M$, $C(S)+C(T)\ge C(S\cup T)+C(S\cap T)$.
Equivalently, a function $C$ is submodular if for every two sets
$T\subseteq S\subseteq M$ and an element $x\in M- S$ it
holds that $C(T\cup\left\{ x\right\} )-C(T)\ge C(S\cup\left\{ x\right\} )-C(S)$.
A cost function $C$ is \emph{subadditive }if for every $S,T\subseteq M$,
$C(S)+C(T)\ge C(S\cup T)$. A cost function $C$ is \emph{player-wise symmetric
}if for every two allocations $\vec{S},\vec{T}\in2^{M_{1}}\times\cdots\times2^{M_{n}}$
with $|S_{i}|=|T_{i}|$ for every $i\in N$ in holds that $C(\vec{T})=C(\vec{S})$.

$v$ is \emph{supermodular} if for every two sets $S,T\subseteq M_{i}$, it holds that $v(S)+v(T)\le v(S\cup T)+v(S\cap T)$.
Equivalently, for every two sets $T\subseteq S$ and
an element $x\in M- S$ it holds that $v(T\cup\{x\})-v(T)\le v(S\cup\{x\})-v(S)$.
A function $v:2^{M}\to\mathbb{R}$ is \emph{symmetric }if
for every two sets $S,T\subseteq M_{i}$ with $|S|=|T|$ it holds
that $v(S)=v(T)$. $v$ is \emph{additive} if for every $S$ it holds that
$v(S)=\sum_{j\in S}v(\{j\})$. $v$ is XOS if there exists additive valuations $a_1,\ldots, a_t$ such that for every $S$, $v(S)=\max_ra_r(S)$. The class of submodular valuations is contained in the class of XOS valuations which is contained in the class of subadditive valuations, and these containments are strict.


\section{The Potential Function and its Properties}\label{sec:function}

The potential mechanism is an affine maximizer whose output allocation is defined by:
\[
\arg\max_{\vec{S}=(S_1,\ldots, S_n)}\sum_{i}v_{i}(S_{i})-H(\vec{S})
\]
where $H:2^{M_{1}}\times\cdots \times2^{M_{n}}\to\mathbb{R}$ is a publicly known function. In this section we describe the function $H$ we use and its properties. 

The function $H$ we use is an adaptation of the \emph{Potential
Function} of Hart and Mas-Colell \cite{hart1989potential}. Given a cooperative game with $N$ players and cost function $C$, Hart and Mas-Colell define a potential function to be a normalized function in which the sum of discrete gradients of any set of players $N'$ equals the cost $C(N')$. We naturally extend this definition: 

\begin{definition}[potential function for combinatorial cost sharing]
Given a normalized cost function $C:2^{M_{1}}\times\cdots\times2^{M_{n}}\to\mathbb{R}^{+}$ we define a potential function $P_{C}:2^{M_{1}}\times\cdots\times2^{M_{n}}\to\mathbb{R}^{+}$ to be any normalized function with the following property:
\begin{itemize}
\item For every allocation $\vec{S}=(S_{1},...,S_{n})$ such that $S_{i}\subseteq M_{i}$,
the sum of discrete gradients of the function is exactly the cost: $\sum_{i\in N}\left[P_{C}\left(\vec{S}\right)-P_{C}\left(\vec{S}-S_{i}\right)\right]=C\left(\vec{S}\right)$.
\end{itemize}
\end{definition}
Notice that for simple cost sharing the definitions of \cite{hart1989potential} and ours coincide. The proof of the next proposition essentially follows from \cite{hart1989potential}. We bring it in Appendix \ref{appendix:ExplicitProof} for completeness.
\begin{proposition} \label{prop:uniquePc}
For every normalized cost function $C$ there is a unique potential
function $P_{C}$. Moreover, it holds that $P_{C}(\vec{S})=\sum_{I\subseteq N}\frac{C\left(\cup_{i\in I}S_i\right)}{|I|\cdot{n \choose \left|I\right|}}$.
\end{proposition}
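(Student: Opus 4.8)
The plan is to get uniqueness (and existence) from a simple well-founded recursion, and then to verify the closed-form expression by recognizing a single discrete gradient of the right-hand side as a Shapley value and invoking the efficiency (``adds up'') property of the Shapley value.

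\emph{Uniqueness via recursion.} For an allocation $\vec S$ let its \emph{support} be $T(\vec S)=\{i\in N : S_i\neq\emptyset\}$ and put $k=|T(\vec S)|$. If $i\notin T(\vec S)$ then $\vec S-S_i=\vec S$, so the defining identity collapses to $k\,P_C(\vec S)-\sum_{i\in T(\vec S)}P_C(\vec S-S_i)=C(\vec S)$; for $k\ge 1$ this is equivalent to
\[
P_C(\vec S)=\frac{1}{k}\Bigl(C(\vec S)+\sum_{i\in T(\vec S)}P_C(\vec S-S_i)\Bigr),
\]
while normalization forces $P_C(\vec\emptyset)=0$. Since $\vec S-S_i$ has support $T(\vec S)\setminus\{i\}$ of size $k-1$, the right-hand side involves only values of $P_C$ on allocations of strictly smaller support. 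Hence strong induction on $k$ shows that at most one $P_C$ satisfies the definition, and running the same recursion forward produces a normalized (and, once the closed form below is available, nonnegative) function satisfying the identity, so $P_C$ exists and is unique.

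\emph{The closed form.} Write $Q(\vec S)=\sum_{\emptyset\neq I\subseteq N}\frac{C(\cup_{i\in I}S_i)}{|I|\binom{n}{|I|}}$ for the claimed expression (the $I=\emptyset$ term, which is formally $0/0$, is discarded as usual). By the uniqueness just proved it suffices to check that $Q$ is normalized --- immediate, since every numerator is $C(\emptyset)=0$ --- and satisfies the defining identity. Fix $\vec S$ and introduce the cooperative game $w=w_{\vec S}:2^N\to\mathbb R^{+}$ with $w(K)=C\bigl(\cup_{j\in K}S_j\bigr)$; note $w(N)=C(\vec S)$, and that $\vec S-S_i$ induces the game $K\mapsto w(K\setminus\{i\})$ (here the disjointness of the $M_j$'s is used). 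Therefore
\[
Q(\vec S)-Q(\vec S-S_i)=\sum_{\emptyset\neq I\subseteq N}\frac{w(I)-w(I\setminus\{i\})}{|I|\binom{n}{|I|}}=\sum_{I\ni i}\frac{w(I)-w(I\setminus\{i\})}{|I|\binom{n}{|I|}},
\]
since the summand vanishes whenever $i\notin I$. Substituting $I=K\cup\{i\}$ with $K\subseteq N\setminus\{i\}$ and using $\frac{1}{(|K|+1)\binom{n}{|K|+1}}=\frac{|K|!\,(n-1-|K|)!}{n!}$ turns the last sum into exactly the Shapley value $\phi_i(w)$ of player $i$ in $w$. Summing over $i$ and using that the Shapley values of a game sum to the worth of the grand coalition gives $\sum_{i\in N}\bigl[Q(\vec S)-Q(\vec S-S_i)\bigr]=\sum_{i\in N}\phi_i(w)=w(N)=C(\vec S)$, which is the defining identity; hence $Q=P_C$.

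\emph{Where the difficulty lies.} Conceptually this is the combinatorial analogue of Hart and Mas-Colell's identity, so no deep idea is needed; the care is in the bookkeeping --- handling allocations $\vec S\in 2^{M_1}\times\cdots\times 2^{M_n}$ rather than subsets of one ground set, using disjointness of the $M_j$'s to pass cleanly between $\vec S$ and the induced game $w_{\vec S}$, and tracking the support so that the recursion is genuinely well founded. The only external fact invoked is efficiency of the Shapley value, which itself has a one-line proof via expectation over a uniformly random ordering of the players (telescoping along the order).
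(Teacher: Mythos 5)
Your proof is correct, and it splits the same way as the paper's: uniqueness (and existence) via the recursion on the support size is essentially identical to what the paper does. Where you genuinely diverge is in verifying that the closed form satisfies the defining identity. The paper's appendix proof is a self-contained coefficient count: it rewrites the double sum $\sum_{i\in N}\bigl[P_C(\vec S)-P_C(\vec S-S_i)\bigr]$ as $\sum_{I\subseteq N}a_I\,C(\cup_{j\in I}S_j)$ and shows by direct factorial manipulation that $a_I=0$ for every nonempty $I\neq N$ and $a_N=1$. You instead recognize the single gradient $Q(\vec S)-Q(\vec S-S_i)$ as the Shapley value $\phi_i(w_{\vec S})$ of the induced game $w_{\vec S}(K)=C(\cup_{j\in K}S_j)$ and then invoke the efficiency property $\sum_i\phi_i(w)=w(N)$. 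Your route is shorter and conceptually cleaner, and it has the side benefit of delivering, along the way, the ``marginals are Shapley values'' observation that the paper states separately in Section \ref{sec:function} and uses later; its only cost is the appeal to Shapley efficiency, which (as you note) is itself a one-line telescoping argument over a random order, so nothing essential is outsourced. The paper's computation, by contrast, needs no cooperative-game facts at all but hides the structure inside the $a_I$ bookkeeping. Both the substitution $I=K\cup\{i\}$ with weight $\tfrac{1}{(|K|+1)\binom{n}{|K|+1}}=\tfrac{|K|!\,(n-1-|K|)!}{n!}$ and your handling of the empty-support case and of players with $S_i=\emptyset$ in the recursion are correct, so there is no gap.
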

Following \cite{hart1989potential}, there are a couple of useful observations related to the potential function. First, by rearranging:
$$P_{C}(\vec{S})=\sum_{I\subseteq N}\frac{C\left(\cup_{i\in I}S_i\right)}{|I|\cdot{n \choose \left|I\right|}}=\sum_{l=1}^n\sum_{I\subseteq N,|I|=l}\frac{C\left(\cup_{i\in I}S_i\right)}{|I|\cdot{n \choose \left|I\right|}}$$
Let $D(\vec S,l)=\sum_{I\subseteq N,|I|=l}\frac{C\left(\cup_{i\in I}S_i\right)}{|I|\cdot{n \choose \left|I\right|}}$. Notice the $D(\vec S,l)$ is the expected density (cost divided by the number of players) of a set $I$ of players that is chosen uniformly at random among all sets of size $l$, when each player $i\in I$ is served $S_i$ and player $i\notin I$ is not served at all. Thus, $P_C(\vec S)$ equals the sum of expected densities of sets of size $1,\ldots, n$.

A second observation (again, following \cite{hart1989potential}) is that the marginal utility $P_C(\vec S)-P_C(\vec S-S_i)$ equals the Shapley value of player $i$ in the cooperative game with $N$ players and cost function $C'$, where the cost of serving a set $T$ is $C'(T)=C(\cup_{i\in T}S_i)$.

We now use these observations to prove several useful properties of our potential function.

\begin{claim}\label{prop:C-submodular->PSubmodular}
Let $C$ be a monotone cost function. The potential function $P_C$ is monotone as well. Furthermore, if $C$ is submodular then $P_C$ is submodular as well.
\end{claim}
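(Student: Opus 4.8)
The plan is to argue everything directly from the closed form of Proposition~\ref{prop:uniquePc}, namely
\[
P_C(\vec S)=\sum_{I\subseteq N}\frac{C\left(\cup_{i\in I}S_i\right)}{|I|\cdot\binom{n}{|I|}},
\]
which displays $P_C$ as a \emph{nonnegative} linear combination of ``copies'' of $C$ precomposed with the union maps $\vec S\mapsto\cup_{i\in I}S_i$. Since $C\ge 0$, this formula immediately gives $P_C\ge 0$ (so $P_C$ is indeed $\mathbb{R}^+$-valued), and monotonicity then falls out termwise: if $\vec S\subseteq\vec T$, i.e.\ $S_i\subseteq T_i$ for every $i$, then $\cup_{i\in I}S_i\subseteq\cup_{i\in I}T_i$ for every $I$, so monotonicity of $C$ yields $C(\cup_{i\in I}S_i)\le C(\cup_{i\in I}T_i)$, and summing against the positive weights $1/(|I|\binom{n}{|I|})$ gives $P_C(\vec S)\le P_C(\vec T)$.

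For submodularity I would use the single-element marginal characterization recalled in the preliminaries: it suffices to show that for disjoint-set vectors $\vec T\subseteq\vec S$ and an element $x\in M\setminus(\cup_i S_i)$ we have $P_C(\vec T\cup\{x\})-P_C(\vec T)\ge P_C(\vec S\cup\{x\})-P_C(\vec S)$. Say $x\in M_k$. The point is that adjoining $x$ to coordinate $k$ changes $\cup_{i\in I}(\cdot)_i$ only for those $I$ with $k\in I$, so the closed form telescopes to
\[
P_C(\vec T\cup\{x\})-P_C(\vec T)=\sum_{I\subseteq N,\,k\in I}\frac{C\big((\cup_{i\in I}T_i)\cup\{x\}\big)-C(\cup_{i\in I}T_i)}{|I|\cdot\binom{n}{|I|}},
\]
and likewise with $\vec S$ in place of $\vec T$. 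For each such $I$ we have $\cup_{i\in I}T_i\subseteq\cup_{i\in I}S_i$ (as $\vec T\subseteq\vec S$) and $x\notin\cup_{i\in I}S_i$ (since $x\in M_k$, $x\notin S_k$, and $M_k\cap M_i=\emptyset$ for $i\ne k$), so submodularity of $C$ gives $C\big((\cup_{i\in I}T_i)\cup\{x\}\big)-C(\cup_{i\in I}T_i)\ge C\big((\cup_{i\in I}S_i)\cup\{x\}\big)-C(\cup_{i\in I}S_i)$. Summing over all $I\ni k$ with nonnegative coefficients finishes the proof.

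The only delicate point — and it is genuinely minor — is the bookkeeping that isolates exactly the terms of the sum that change when $x$ is adjoined, and the observation that $x$ stays outside all the relevant unions, which is where disjointness of the $M_i$'s is used. Apart from that, both halves of the claim reduce to applying the corresponding property of $C$ term by term and invoking nonnegativity of the weights. An alternative derivation would go through the second observation following Proposition~\ref{prop:uniquePc}, identifying $P_C(\vec S)-P_C(\vec S-S_k)$ with the Shapley value of player $k$ in the restricted game $C'(T)=C(\cup_{i\in T}S_i)$ and using that Shapley values are monotone/``diminishing'' in a suitable sense, but the direct computation above is the shortest route.
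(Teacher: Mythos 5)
Your proof is correct and takes essentially the same route as the paper: both use the closed form of Proposition~\ref{prop:uniquePc} to write $P_C$ as a nonnegative linear combination of the functions $\vec S\mapsto C\left(\cup_{i\in I}S_i\right)$ and transfer monotonicity/submodularity of $C$ term by term. The paper simply asserts that each summand inherits submodularity from $C$, while you verify this explicitly via the single-element marginal characterization -- a harmless elaboration of the same argument.
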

\begin{proof}
Observe that $P_C(\cdot)$ is a sum of functions ${C\left(\cup_{i\in I}S_i\right)}$, each multiplied by a constant $\frac 1 {|I|\cdot{n \choose \left|I\right|}}$. Each of these functions is monotone because $C$ is monotone. Since a sum of monotone functions is monotone as well, we conclude that $P_C$ is monotone. 

The second part of the claim is almost identical, by replacing ``monotone'' with ``submodular'' in the previous paragraph.
\end{proof}

Next we prove that for every submodular cost function $C$ and $\vec S$, $\mathcal H_n\cdot C(\vec S)\geq P_{C}(\vec S)\geq C(\vec S)$. We also prove a similar bound for subadditive cost functions. 
\begin{proposition}\label{prop-bound-below}
\label{prop:C<P}Let $C$ be an XOS (in particular, submodular) cost function. Then, for every $\vec S$, $P_{C}(\vec S)\geq C(\vec S)$. If $C$ is subadditive then for every $\vec S$, $P_C(\vec S)\geq \frac {C(\vec S)} 2$.
\end{proposition}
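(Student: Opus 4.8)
The plan is to use the representation $P_C(\vec S)=\sum_{l=1}^n D(\vec S,l)$, where $D(\vec S,l)$ is the expected density of a random size-$l$ subset $I\subseteq N$ with player $i\in I$ served $S_i$. Since every term $D(\vec S,l)$ is nonnegative, it suffices to show that the single term $D(\vec S,n)$ already dominates $C(\vec S)$ (for XOS), respectively $\tfrac12 C(\vec S)$ (for subadditive). Note $D(\vec S,n)=\tfrac1n C(\cup_i S_i)=\tfrac1n C(\vec S)$, which is far too small by itself, so this crude approach will not work; instead I will need to combine several of the low-density terms. The right quantity to look at is $\sum_{l=1}^n D(\vec S,l)$ directly and lower bound the whole sum.

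For the XOS case, write $v:=C$ and fix an additive function $a=a_r$ witnessing $C(\cup_i S_i)=\max_r a_r(\cup_i S_i)$ on the full ground set $\cup_i S_i$; say $a(T)=\sum_{j\in T} a(\{j\})\le C(T)$ for all $T$. For a random size-$l$ set $I$ and the served allocation $\{S_i\}_{i\in I}$, we have $C(\cup_{i\in I}S_i)\ge a(\cup_{i\in I}S_i)=\sum_{i\in I} a(S_i)$. Hence
\[
D(\vec S,l)=\mathbb E_{|I|=l}\!\left[\frac{C(\cup_{i\in I}S_i)}{l}\right]\ \ge\ \mathbb E_{|I|=l}\!\left[\frac1l\sum_{i\in I}a(S_i)\right]\ =\ \frac1l\cdot\frac ln\sum_{i\in N}a(S_i)\ =\ \frac1n\sum_{i\in N}a(S_i),
\]
using that each player lies in a uniformly random size-$l$ set with probability $l/n$. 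Summing over $l=1,\dots,n$ gives $P_C(\vec S)\ge \sum_{i\in N} a(S_i)=a(\cup_i S_i)=C(\vec S)$, as desired. (The submodular case follows since submodular $\subseteq$ XOS, which the paper states.)

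For the subadditive case the additive lower bound $a$ is not available, so I replace it by a weaker averaging argument. The key subadditivity consequence I plan to use is: for any partition-free family, $C(\cup_{i\in I}S_i)\ge \tfrac1{|J|}\,C(\cup_{i\in I}S_i)$ trivially — that is not enough, so instead I will exploit that for a random size-$l$ set $I$, by a symmetry/averaging over which element is "dropped," one can relate $C(\cup_{i\in I}S_i)$ to $C(\cup_i S_i)$. Concretely, I expect to show $D(\vec S,l)\ge \tfrac1n\cdot\tfrac{l}{n}\,C(\vec S)$ is false in general; the cleaner route is: among all size-$l$ sets, the average of $C(\cup_{i\in I}S_i)$ is at least $\tfrac{l}{n}C(\vec S)$? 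This also fails for small $l$. The honest plan: split the sum $\sum_l D(\vec S,l)$ and bound only the top half of the range, $l> n/2$, where subadditivity gives $C(\cup_{i\in I}S_i)\ge C(\vec S)-C(\cup_{i\notin I}S_i)\ge C(\vec S)-\sum_{i\notin I}C(S_i)$, combined with $\sum_i C(S_i)\le$ (something controlled). After the dust settles I expect the bound $\sum_{l}D(\vec S,l)\ge \tfrac12 C(\vec S)$ to drop out, possibly after also using the XOS-type estimate on a sub-range.

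The main obstacle is the subadditive case: unlike XOS there is no clean linear lower bound, so the argument must balance the fact that $D(\vec S,l)$ is small for small $l$ (only $\tfrac1n C(\vec S)$ at $l=1$) against the fact that it grows toward $l=n$. I anticipate the crux is proving a per-$l$ inequality of the form $D(\vec S,l)\ge \tfrac{c_l}{n}C(\vec S)$ with $\sum_l c_l\ge n/2$, which should come from averaging the subadditive inequality $C(\cup_{i\in I}S_i)+C(\cup_{i\in I'}S_i)\ge C(\vec S)$ over complementary (or near-complementary) pairs $I,I'$ of size $l$ and $n-l$; the constant $\tfrac12$ strongly suggests pairing each term at level $l$ with a term at level $n-l$. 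The XOS case, by contrast, is a short averaging computation and I expect no difficulty there.
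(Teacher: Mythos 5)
Your XOS argument is correct and is essentially the paper's proof: the paper passes to the induced player-level function $C'(I)=C(\cup_{i\in I}S_i)$ and uses an additive XOS witness together with linearity of expectation to get $E[C'(T)]\ge \frac{l}{n}C'(N)$ for a uniform size-$l$ set $T$, exactly the averaging computation you wrote directly in terms of $a$ on the items. That half is complete.

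The subadditive half, however, is a genuine gap: as written you prove nothing there. You list candidate per-level inequalities, observe that the naive ones fail, and then state that you ``expect'' a bound of the form $D(\vec S,l)\ge \frac{c_l}{n}C(\vec S)$ with $\sum_l c_l\ge n/2$ to emerge from pairing levels $l$ and $n-l$ --- but no such inequality is established, so the claimed factor $\tfrac12$ is never derived. The missing ingredient in the paper is a concrete per-level bound (its Claim 3.4, part 2): randomly partition $N$ into $\lceil n/l\rceil$ blocks of size $l$ (padding the last block with random extra players), apply subadditivity across the blocks to get $\lceil n/l\rceil\cdot E[C'(T)]\ge C'(N)$ for a uniform size-$l$ set $T$, and then $P_C(\vec S)\ge\sum_{l=1}^{n}\frac{C(\vec S)}{\lceil n/l\rceil\, l}\ge \frac{C(\vec S)}{2}$. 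Alternatively, your own pairing idea can in fact be closed, but it yields a \emph{weighted} inequality rather than the per-level one you anticipated: averaging $C(\cup_{i\in I}S_i)+C(\cup_{i\notin I}S_i)\ge C(\vec S)$ over uniform size-$l$ sets $I$ (whose complements are uniform size-$(n-l)$ sets) gives $l\,D(\vec S,l)+(n-l)\,D(\vec S,n-l)\ge C(\vec S)$ for $1\le l\le n-1$; summing over $l$ gives $2\sum_{l=1}^{n-1}l\,D(\vec S,l)\ge (n-1)\,C(\vec S)$, and since $l\le n-1$ this yields $\sum_{l=1}^{n-1}D(\vec S,l)\ge \frac{C(\vec S)}{2}$, hence $P_C(\vec S)\ge\frac{C(\vec S)}{2}$. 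Either route works, but one of them has to be carried out; the proposal as submitted stops at the plan.
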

\begin{proof}
We will need the following folklore claim:
\begin{claim}\label{claim-random-subset}
Let $C':2^N\rightarrow \mathbb R$ be a monotone function. Choose a set $T$ of size $l$ uniformly at random among all such sets. Then:
\begin{enumerate}
\item If $C'$ is XOS then $\frac n l\cdot  E[C'(T)]\geq  C'(N)$.
\item If $C'$ is subadditive then $\lceil \frac n l\rceil \cdot  E[C'(T)]\geq  C'(N)$.
\end{enumerate}
\end{claim}
\begin{proof}(of Claim \ref{claim-random-subset})
\begin{enumerate}
\item By the definition of an XOS valuation, there is some additive valuation $a$ such that $a(N)=C'(N)$. Moreover, for every $S$, $C'(S)\geq a(S)$. Consider choosing at random a set $T$ of size $l$. Each element $j$ is selected with probability $\frac l N$. Let $A_j$ be the random variable that gets the value $a(\{j\})$ if $j$ is selected and $0$ otherwise. Observe that:
$$
E[C'(T)]\geq E[a(T)]= \Sigma_{j\in N}A_j=\Sigma_{j\in N}\frac l n\cdot a(\{j\})=\frac l n \cdot C'(N)
$$
\item Let $t=\lfloor \frac n l\rfloor$. Choose $t$ disjoint sets $T_1,\ldots, T_t$ each of size $l$, uniformly at random. If $n$ does not divide $l$, construct an additional set $T_{t+1}$ from taking $N-\cup_jT_j$ in addition to $l-|N-\cup_jT_j|$ items that are chosen uniformly at random. Now choose a set $R$ by choosing one of the sets $T_1,\ldots, T_{\lceil \frac n l\rceil}$ uniformly at random. Notice that by subadditivity $\Sigma_{i=1}^{\lceil \frac n l\rceil}C'(T_i)\geq C'(N)$. Thus, $E[R]\geq \frac 1 {{\lceil \frac n l\rceil }}\cdot C'(N)$. The proposition follows since $R$ is a set of size $l$ which is chosen uniformly at random.
\end{enumerate}
\end{proof}

We can now prove Proposition \ref{prop-bound-below}. Fix $\vec S=(S_1,\ldots, S_n)$. Obtain a function $C':2^N\rightarrow \mathbb R$ by $C'(I)=C(\cup_{i\in I}S_i)$. Observe that if $C$ is subadditive (XOS) then $C'$ is subadditive (XOS). For every $l$, let $D'(l)=D(\vec S,l)$ denote the expected density of a set of size $l$ in $C'$. Notice that by Claim \ref{claim-random-subset} for XOS functions we have that $D'(l)\geq \frac l n \cdot C'(N)$. Now, if $C'$ is XOS, then:
$$
P_{C}(\vec S)=\Sigma_{l=1}^nD'(l)\geq \Sigma_{l=1}^n  \frac {\frac l n \cdot C'(N)} l=\Sigma_{l=1}^n\frac {C'(N)} n=C'(N)
$$
The proof is similar if $C'$ is subadditive:
$$
P_{C}(\vec S)=\Sigma_{l=1}^nD'(l)\geq \Sigma_{l=1}^n\frac {C'(N)} {\lceil \frac n l \rceil \cdot l}\geq \frac {C'(N)} 2
$$
\end{proof}

\begin{claim}
\label{prop:P<HnC}Let $C$ be a monotone cost function. Then, for every $\vec S$, $P_{C}(\vec S)\le\mathcal{H}_{n}\cdot C(\vec S)$. 
\end{claim}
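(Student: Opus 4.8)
The plan is to use the explicit formula for the potential function from Proposition~\ref{prop:uniquePc} together with monotonicity of $C$. Recall that $P_C(\vec S)=\sum_{I\subseteq N}\frac{C(\cup_{i\in I}S_i)}{|I|\cdot\binom{n}{|I|}}$, and that grouping the sets $I$ by their size $l$ gives $P_C(\vec S)=\sum_{l=1}^n D(\vec S,l)$ with $D(\vec S,l)=\sum_{I\subseteq N,\,|I|=l}\frac{C(\cup_{i\in I}S_i)}{l\cdot\binom{n}{l}}$.

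First I would fix an arbitrary $\vec S$ and an arbitrary $I\subseteq N$, and observe that $\cup_{i\in I}S_i\subseteq\cup_{i\in N}S_i$, so monotonicity of $C$ yields $C(\cup_{i\in I}S_i)\le C(\vec S)$. Plugging this bound into the expression for $D(\vec S,l)$ and using that there are exactly $\binom{n}{l}$ subsets $I$ of size $l$ gives $D(\vec S,l)\le\binom{n}{l}\cdot\frac{C(\vec S)}{l\cdot\binom{n}{l}}=\frac{C(\vec S)}{l}$.

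Summing over $l=1,\dots,n$ then gives $P_C(\vec S)=\sum_{l=1}^n D(\vec S,l)\le\sum_{l=1}^n\frac{C(\vec S)}{l}=\mathcal H_n\cdot C(\vec S)$, which is the claim. There is no real obstacle here: the argument is a direct computation once the explicit formula for $P_C$ and the monotonicity of $C$ are in hand; the only thing to be careful about is the bookkeeping of the binomial coefficients, which cancel cleanly.
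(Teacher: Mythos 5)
Your proof is correct and follows essentially the same route as the paper: the paper also writes $P_C(\vec S)=\sum_{l=1}^n D(\vec S,l)$ and bounds each expected density by $D(\vec S,l)\le C(\vec S)/l$ using monotonicity, then sums to get $\mathcal H_n\cdot C(\vec S)$. You merely spell out the binomial bookkeeping that the paper leaves implicit.
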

\begin{proof}
$C$ is monotone, so for every $\vec S$ and $l$ we have the following bound on the average density of a random set of size $l$: $D(\vec S,l)\leq \frac {C(\vec S)} l$. Thus, $P_C(\vec S)=\Sigma_{l=1}^nD(\vec S,l)\leq \Sigma_{l=1}^nC(\vec S)\cdot \frac 1 l=C(\vec S)\cdot \mathcal H_n$.
\end{proof}

\section{The Main Result: The Potential Mechanism}\label{sec:The-Mechanism}

In this section we present our main result, the Potential Mechanism. Specifically, given a cost function $C:2^{M_{1}}\times\ldots\times2^{M_{n}}\to\mathbb{R}^{+}$
and its potential function $P_{C}$, define the \emph{Potential Mechanism} be the following affine maximizer: 
\begin{equation}
\overrightarrow{ALG}=\arg\max_{\vec{S}=(S_{1},\ldots,S_{n})}\sum_{i\in N}v_{i}(S_{i})-P_{C}(\vec{S})\label{eq:mechanismVCG}
\end{equation}
We charge player $i$ his VCG payment which is given by the formula:
\begin{equation*}
p_{i}=\left[\sum_{j\in N}v_{j}(ALG_{j}^{-i})-P_C(\overrightarrow{ALG^{-i}})\right]-\left[\sum_{j\in N\backslash\{i\}}v_{j}(ALG_{j})-P_{C}(\overrightarrow {ALG})\right]
\end{equation*}
where $\overrightarrow{ALG^{-i}}=(ALG_{1}^{-i},\ldots,ALG_{i-1}^{-i},\emptyset,ALG_{i+1}^{-i},\ldots,ALG_{n}^{-i})$ is the allocation that maximizes (\ref{eq:mechanismVCG}) with the additional constraint $ALG^{-i}_{i}=\emptyset$. Note that there might be several allocations that maximize (\ref{eq:mechanismVCG}), hence the mechanism is defined up to the implementation of a tie
breaking rule. 
\begin{theorem}
\label{thm:Potential-Mechanism}For every valuation profile and cost function $C$ the Potential Mechanism is strategyproof, individually rational, always covers the incurred cost and satisfies no positive transfers. Moreover:

\begin{enumerate}
\item When $C$ is subadditive the mechanism provides an approximation ratio of $2\mathcal H_n$ to the social cost. When $C$ is XOS (in particular, submodular) the mechanism provides an approximation ratio of $\mathcal H_n$ to the social cost.
\item When $C$ is submodular the mechanism is $\mathcal H_n$-budget-balanced in each of the following settings:
\begin{enumerate}
\item \label{enu:theorem-Potential-Mechanism:Con1}The valuation functions
are supermodular.
\item \label{enu:Theorem-Potential-Mechanism-Con2}The valuation functions
are symmetric and the cost function $C$ is player-wise
symmetric. 
\item \label{enu:Theorem-Potential-Mechanism-Set3}There are two players.
\end{enumerate}
\end{enumerate}
\end{theorem}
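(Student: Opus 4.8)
The plan is to view the Potential Mechanism as a Groves (VCG) mechanism in which an extra, fixed ``agent'' has value $-P_C(\vec S)$ for every allocation $\vec S$; strategyproofness is then the textbook Groves argument, since the displayed payment is the Clarke-pivot payment. For individual rationality, no positive transfers, and cost recovery it is cleanest to introduce the restricted-optimum function $f(T)=\max\{\sum_{j}v_{j}(S_{j})-P_{C}(\vec S):S_{j}=\emptyset\text{ for all }j\notin T\}$, which is monotone with $f(\emptyset)=0$; rearranging the payment formula gives $p_{i}=v_{i}(ALG_{i})-\big(f(N)-f(N\setminus\{i\})\big)$. Individual rationality is then immediate from $f(N)\ge f(N\setminus\{i\})$. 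Plugging the feasible allocation $\overrightarrow{ALG}-ALG_{i}$ into the definition of $f(N\setminus\{i\})$ yields $p_{i}\ge P_{C}(\overrightarrow{ALG})-P_{C}(\overrightarrow{ALG}-ALG_{i})$, which is nonnegative by monotonicity of $P_C$ (Claim~\ref{prop:C-submodular->PSubmodular}); summing over $i$ and using the defining identity of the potential, $\sum_i\big(P_C(\overrightarrow{ALG})-P_C(\overrightarrow{ALG}-ALG_i)\big)=C(\overrightarrow{ALG})$, gives cost recovery.

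\textbf{Social cost.} Since $\sum_i v_i(M_i)$ is a constant, $\overrightarrow{OPT}$ maximizes $\sum_i v_i(S_i)-C(\vec S)$ while $\overrightarrow{ALG}$ maximizes the same quantity with $C$ replaced by $P_C$. Optimality of $\overrightarrow{ALG}$ for the latter gives $\sum_i v_i(OPT_i)-\sum_i v_i(ALG_i)\le P_C(\overrightarrow{OPT})-P_C(\overrightarrow{ALG})$, hence $\pi(\overrightarrow{ALG})-\pi(\overrightarrow{OPT})\le\big(C(\overrightarrow{ALG})-P_C(\overrightarrow{ALG})\big)+\big(P_C(\overrightarrow{OPT})-C(\overrightarrow{OPT})\big)$. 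Now use Proposition~\ref{prop:C<P} and Claim~\ref{prop:P<HnC}: when $C$ is XOS the first parenthesis is $\le 0$ and the second is $\le(\mathcal H_n-1)C(\overrightarrow{OPT})\le(\mathcal H_n-1)\pi(\overrightarrow{OPT})$, giving ratio $\mathcal H_n$; when $C$ is only subadditive the first parenthesis is $\le\tfrac12 C(\overrightarrow{ALG})\le\tfrac12\pi(\overrightarrow{ALG})$, and moving it to the left side gives ratio $2\mathcal H_n$.

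\textbf{Budget balance -- the reduction.} We already have $\sum_i p_i\ge C(\overrightarrow{ALG})$, so only the upper bound remains. Using $\sum_i v_i(ALG_i)=f(N)+P_C(\overrightarrow{ALG})$ together with the formula for $p_i$, one obtains the identity $\sum_i p_i=P_C(\overrightarrow{ALG})-(n-1)f(N)+\sum_i f(N\setminus\{i\})$. Because $P_C(\overrightarrow{ALG})\le\mathcal H_n C(\overrightarrow{ALG})$ (Claim~\ref{prop:P<HnC}), it therefore suffices to prove the combinatorial inequality $\sum_i f(N\setminus\{i\})\le(n-1)f(N)$; this is independent of the tie-breaking rule, as $f(N)$ and the $f(N\setminus\{i\})$ are optimal values and the bound $P_C(\overrightarrow{ALG})\le\mathcal H_n C(\overrightarrow{ALG})$ holds for every optimal allocation. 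The inequality holds whenever $f$ is supermodular on $2^N$: telescoping $f(N)=f(N)-f(\emptyset)$ along a maximal chain and bounding each increment $f(\{1,\dots,i\})-f(\{1,\dots,i-1\})\le f(N)-f(N\setminus\{i\})$ by supermodularity gives exactly $f(N)\le\sum_i\big(f(N)-f(N\setminus\{i\})\big)$. So the whole problem reduces to showing that $f$ is supermodular in each of the three settings.

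\textbf{Budget balance -- supermodularity of $f$.} The tool is that partial maximization preserves supermodularity: if $g$ is supermodular on a finite lattice $L$ then $\vec u\mapsto\max_{x\le\vec u}g(x)$ is supermodular on $L$; combined with the fact that $T\mapsto\vec 1_T$ (the allocation giving player $j$ all of $M_j$ iff $j\in T$) is a lattice embedding of $2^N$, this shows $f$ is supermodular whenever the objective $g(\vec S)=\sum_i v_i(S_i)-P_C(\vec S)$ is supermodular on the relevant lattice. \emph{Supermodular valuations:} $g$ is supermodular on $2^M$ since each $v_i$ is supermodular and $P_C$ is submodular (Claim~\ref{prop:C-submodular->PSubmodular}). \emph{Symmetric valuations, player-wise symmetric cost:} player-wise symmetry makes $P_C$ a function of the size vector $(|S_1|,\dots,|S_n|)$ (from Proposition~\ref{prop:uniquePc}), and submodularity of $P_C$ descends, by choosing nested witnesses inside each $M_i$, to lattice-submodularity of this function on the box $\prod_i\{0,\dots,|M_i|\}$; since each symmetric $v_i$ is modular on that box, $g$ is supermodular there and the same argument applies with the box in place of $2^M$. \emph{Two players:} for $n=2$, Proposition~\ref{prop:uniquePc} gives $P_C(S_1,S_2)=\tfrac12\big(C(S_1,\emptyset)+C(\emptyset,S_2)+C(S_1,S_2)\big)$, so $P_C(S_1,\emptyset)=C(S_1,\emptyset)$ and, by subadditivity of $C$, $P_C(S_1,S_2)\le C(S_1,\emptyset)+C(\emptyset,S_2)$; hence $f(\{1\})=\max_{S_1}\big(v_1(S_1)-C(S_1,\emptyset)\big)$ and similarly for $f(\{2\})$, and evaluating $f(\{1,2\})$ at the two single-player maximizers with the subadditivity bound gives $f(\{1\})+f(\{2\})\le f(\{1,2\})$, which is supermodularity of $f$ on the two-element lattice. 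I expect the main obstacle to be spotting this reduction (the identity for $\sum_i p_i$, and that it suffices to make $f$ supermodular together with the partial-maximization lemma), and realizing that for general symmetric valuations the needed supermodular structure lives not on $2^M$ but on the size box.
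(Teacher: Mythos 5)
Your proposal is correct; the strategyproofness/IR/NPT, cost-recovery and social-cost parts are essentially the paper's own arguments (Lemmas \ref{lem:PotMech-no-deficit} and \ref{lem:HnEFFApprox}), but your budget-balance proof takes a genuinely different route. The paper also reduces to the inequality $\sum_i f(N\setminus\{i\})\le (n-1)\,f(N)$ for your restricted-optimum function $f(T)=\max\{\sum_j v_j(S_j)-P_C(\vec S)\,:\,S_j=\emptyset\ \text{for } j\notin T\}$ --- this is exactly Claim \ref{claim:helps-lemma-BB}, written with the particular maximizers $\overrightarrow{ALG}$ and $\overrightarrow{ALG^{-i}}$ --- and then concludes $\sum_i p_i\le P_C(\overrightarrow{ALG})\le \mathcal H_n\cdot C(\overrightarrow{ALG})$ just as you do. But the paper proves that inequality by explicitly merging the maximizers: Claim \ref{claim:supermodularBoundbyN-1} is applied separately to $-P_C$ (Claim \ref{claim:bound potential}) and to the valuations (Claims \ref{claim:bound valuations-supermodular} and \ref{claim:bound-v-multi-unit}), producing $n-1$ feasible allocations whose objective values are each at most $f(N)$; in the symmetric setting this forces a modified tie-breaking rule so that each player's bundles $ALG_j^{-i}$ form a chain. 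You instead establish supermodularity of $f$ on $2^N$ via a Topkis-style ``partial maximization over lower sets preserves supermodularity'' lemma, applied on the right lattice in each setting: $2^M$ for supermodular valuations, the size box $\prod_i\{0,\dots,|M_i|\}$ for symmetric valuations with player-wise symmetric cost (where single-coordinate symmetric $v_i$'s are modular and submodularity of $P_C$ descends via nested witnesses), and a direct computation for $n=2$ using $P_C(S_1,S_2)=\tfrac12\left[C(S_1,\emptyset)+C(\emptyset,S_2)+C(S_1,S_2)\right]$. I checked the steps you leave implicit --- the payment identity $\sum_i p_i=P_C(\overrightarrow{ALG})-(n-1)f(N)+\sum_i f(N\setminus\{i\})$, the one-line proof of the preservation lemma (if $x\le u$ and $y\le w$ are maximizers, then $x\vee y\le u\vee w$ and $x\wedge y\le u\wedge w$), the lattice embeddings $T\mapsto\cup_{i\in T}M_i$ and $T\mapsto u_T$, and the telescoping from supermodularity with $f(\emptyset)=0$ --- and they are all sound. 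Your route buys tie-breaking independence (the paper's modified rule becomes unnecessary, since you argue about optimal values rather than particular optimal allocations) and a unified view of the three settings as ``exhibit a lattice on which the affine-maximizer objective is supermodular''; the paper's route is more hands-on, needs no abstract preservation lemma, and its $n=2$ case is immediate from the disjointness of the supports of $\overrightarrow{ALG^{-1}}$ and $\overrightarrow{ALG^{-2}}$, whereas yours uses the explicit two-player potential formula plus subadditivity.
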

Notice that the potential mechanism exhibits a ``one size fits all'' property: the performance guarantees improve as the more restricted the functions are although the mechanism is defined identically for all cost and valuation functions.

We note that strategyproofness, individual rationality, and no positive transfers of the potential mechanism follow directly from the properties of the VCG mechanism. The rest of the section is devoted to proving that the mechanism always recovers the cost and is approximately efficient and budget balanced in certain settings and in Section \ref{appendix-limitations-potential} we discuss some of the limitations of the potential mechanism.

\subsection*{The Potential Mechanism: Cost Recovering and Efficiency}

\begin{lemma}
	\label{lem:PotMech-no-deficit}The potential mechanism always covers the incurred cost,
	i.e. for every outcome $\left(\overrightarrow{ALG},\vec{p}\right)$,
	it holds that $\sum p_{i}\ge C(\overrightarrow{ALG})$.
\end{lemma}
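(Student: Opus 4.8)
The plan is to show that each individual VCG payment $p_i$ is at least the discrete gradient of the potential function at the chosen allocation, and then sum over all players and invoke the defining property of $P_C$.

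First I would simplify the payment formula. Recall that $\overrightarrow{ALG^{-i}}$ maximizes $(\ref{eq:mechanismVCG})$ subject to the constraint that its $i$-th coordinate is $\emptyset$; in particular, since valuations are normalized, $v_i(ALG_i^{-i})=v_i(\emptyset)=0$, so $\sum_{j\in N}v_j(ALG_j^{-i})=\sum_{j\in N\setminus\{i\}}v_j(ALG_j^{-i})$. The key observation is then that the allocation $\overrightarrow{ALG}-ALG_i$ (i.e.\ $\overrightarrow{ALG}$ with the $i$-th coordinate replaced by $\emptyset$) is itself a feasible candidate for the constrained optimization problem defining $\overrightarrow{ALG^{-i}}$. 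Hence, by optimality of $\overrightarrow{ALG^{-i}}$ and using $v_i(\emptyset)=0$ again,
\[
\sum_{j\in N\setminus\{i\}}v_j(ALG_j^{-i})-P_C(\overrightarrow{ALG^{-i}})\ \ge\ \sum_{j\in N\setminus\{i\}}v_j(ALG_j)-P_C(\overrightarrow{ALG}-ALG_i).
\]
Substituting this lower bound into the payment formula, the terms $\sum_{j\neq i}v_j(ALG_j)$ cancel and we are left with
\[
p_i\ \ge\ P_C(\overrightarrow{ALG})-P_C(\overrightarrow{ALG}-ALG_i),
\]
i.e.\ $p_i$ is at least the marginal contribution (discrete gradient) of player $i$ to the allocation $\overrightarrow{ALG}$.

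Finally, I would sum this inequality over all $i\in N$ and apply the defining property of the potential function, $\sum_{i\in N}\left[P_C(\overrightarrow{ALG})-P_C(\overrightarrow{ALG}-ALG_i)\right]=C(\overrightarrow{ALG})$, to conclude $\sum_{i\in N}p_i\ge C(\overrightarrow{ALG})$, as claimed.

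\textbf{Main obstacle.} There is no serious obstacle here: this is the standard fact that VCG payments dominate the agent's "threshold" marginal, combined with the exact cost-sharing identity that characterizes $P_C$. The only points requiring care are verifying that $\overrightarrow{ALG}-ALG_i$ is admissible for the constrained problem defining $\overrightarrow{ALG^{-i}}$ and correctly using the normalization $v_i(\emptyset)=0$ on both sides when rewriting the sums.
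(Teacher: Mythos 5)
Your proposal is correct and is essentially identical to the paper's own proof: both bound $p_i$ from below by $P_C(\overrightarrow{ALG})-P_C(\overrightarrow{ALG}-ALG_i)$ via the optimality of $\overrightarrow{ALG^{-i}}$ against the feasible alternative $\overrightarrow{ALG}-ALG_i$, then sum and invoke the defining identity of the potential function. The explicit use of $v_i(\emptyset)=0$ is a minor bookkeeping detail the paper leaves implicit.
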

\begin{proof}
The payment of player $i$ is given by:
\[
p_{i}=\left[\sum_{j\in N}v_{j}(ALG_{j}^{-i})-P_{C}\left(\overrightarrow{ALG^{-i}}\right)\right]-\left[\sum_{j\in N\backslash\{i\}}v_{j}(ALG_{j})-P_{C}(\overrightarrow{ALG})\right]
\]
	
Notice that $\sum_{j\in N}v_{j}(ALG_{j}^{-i})-P_{C}\left(\overrightarrow{ALG^{-i}}\right)\ge\sum_{j\in N\backslash\{i\}}v_{j}(ALG_{j})-P_{C}(\overrightarrow{ALG}-ALG_{i})$ since $\overrightarrow{ALG}-ALG_{i}$ is one of the alternatives that are considered for $ALG^{-i}$, therefore 
	\begin{eqnarray*}
		p_{i} & \ge & \left[\sum_{j\in N\backslash\{i\}}v_{j}(ALG_{j})-P_{C}(\overrightarrow{ALG}-ALG_{i})\right]-\left[\sum_{j\in N\backslash\{i\}}v_{j}(ALG_{j})-P_{C}(\overrightarrow{ALG})\right]\\
		& = & P_{C}(\overrightarrow{ALG})-P_{C}(\overrightarrow{ALG}-ALG_{i})
	\end{eqnarray*}
	summing over all players, we get 
	\[
	\sum_{i\in N}p_{i}\ge\sum_{i\in N}P_{C}(\overrightarrow{ALG})-P_{C}(\overrightarrow{ALG}-ALG_{i})=C(\overrightarrow{ALG})
	\]
	where the last equality is by the definition of the potential function.
\end{proof}

Next we prove the efficiency guarantees of the mechanism. For completeness we prove guarantees for both the social cost and additive approximations, although the latter implies the former \cite{roughgarden2009quantifying}.

\begin{lemma}\label{lem:HnEFFApprox}
Let $C$ be a subadditive cost function. Then the approximation ratio of the Potential Mechanism to the social cost is $2\mathcal{H}_{n}$. If $C$ is XOS then the approximation ratio improves to $\mathcal H_n$. Similarly, $SW(\overrightarrow{OPT})-SW(\overrightarrow{ALG})\leq (\mathcal H_n-1)\cdot C(\overrightarrow{OPT})$ if $C$ is XOS and $SW(\overrightarrow{OPT})-SW(\overrightarrow{ALG})\leq(2\mathcal H_n-1)\cdot C(\overrightarrow{OPT})$ if $C$ is subadditive.
\end{lemma}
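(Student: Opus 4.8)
The plan is to run the standard ``virtual welfare'' argument for affine maximizers and then substitute the sandwich bounds on $P_C$ supplied by Claim~\ref{prop:P<HnC} (namely $P_C\le\mathcal{H}_n\cdot C$) and Proposition~\ref{prop-bound-below} (namely $P_C\ge C$ when $C$ is XOS, and $P_C\ge C/2$ when $C$ is subadditive). Two preliminary observations organize everything. First, $SW(\vec S)+\pi(\vec S)=\sum_i v_i(M_i)$ is a constant independent of $\vec S$, so the welfare maximizer $\overrightarrow{OPT}$ coincides with the social-cost minimizer, $SW(\overrightarrow{OPT})-SW(\overrightarrow{ALG})=\pi(\overrightarrow{ALG})-\pi(\overrightarrow{OPT})$, and an additive welfare bound of the form $\alpha\, C(\overrightarrow{OPT})$ automatically gives a social-cost ratio of $\alpha+1$ because $\pi(\overrightarrow{OPT})\ge C(\overrightarrow{OPT})$. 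Second, the mechanism's objective rewrites as $\sum_i v_i(S_i)-P_C(\vec S)=\sum_i v_i(M_i)-\pi_P(\vec S)$, where $\pi_P(\vec S):=P_C(\vec S)+\sum_i[v_i(M_i)-v_i(S_i)]$ is the ``potential social cost''; hence $\overrightarrow{ALG}$ minimizes $\pi_P$, so in particular $\pi_P(\overrightarrow{ALG})\le\pi_P(\overrightarrow{OPT})$, equivalently $\sum_i[v_i(OPT_i)-v_i(ALG_i)]\le P_C(\overrightarrow{OPT})-P_C(\overrightarrow{ALG})$.

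Next I would expand $SW(\overrightarrow{OPT})-SW(\overrightarrow{ALG})=\sum_i[v_i(OPT_i)-v_i(ALG_i)]-C(\overrightarrow{OPT})+C(\overrightarrow{ALG})$ and plug in the inequality above to obtain
\[
SW(\overrightarrow{OPT})-SW(\overrightarrow{ALG})\ \le\ \bigl[P_C(\overrightarrow{OPT})-C(\overrightarrow{OPT})\bigr]+\bigl[C(\overrightarrow{ALG})-P_C(\overrightarrow{ALG})\bigr].
\]
Claim~\ref{prop:P<HnC} bounds the first bracket by $(\mathcal{H}_n-1)\,C(\overrightarrow{OPT})$. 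For the second bracket, Proposition~\ref{prop-bound-below} gives $C(\overrightarrow{ALG})-P_C(\overrightarrow{ALG})\le 0$ when $C$ is XOS, so we immediately get $SW(\overrightarrow{OPT})-SW(\overrightarrow{ALG})\le(\mathcal{H}_n-1)C(\overrightarrow{OPT})$, and therefore the $\mathcal{H}_n$ social-cost approximation; this disposes of the XOS part of the statement.

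For subadditive $C$ only the weaker bound $C(\overrightarrow{ALG})-P_C(\overrightarrow{ALG})\le C(\overrightarrow{ALG})/2$ is available. The multiplicative $2\mathcal{H}_n$ guarantee is still clean: $\pi(\overrightarrow{ALG})=C(\overrightarrow{ALG})+\sum_i[v_i(M_i)-v_i(ALG_i)]\le 2P_C(\overrightarrow{ALG})+2\sum_i[v_i(M_i)-v_i(ALG_i)]=2\pi_P(\overrightarrow{ALG})\le 2\pi_P(\overrightarrow{OPT})\le 2\mathcal{H}_n C(\overrightarrow{OPT})+2\sum_i[v_i(M_i)-v_i(OPT_i)]\le 2\mathcal{H}_n\,\pi(\overrightarrow{OPT})$, using $\mathcal{H}_n\ge1$ in the last step. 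The genuinely delicate point — the one I expect to be the main obstacle — is the stronger \emph{additive} claim $SW(\overrightarrow{OPT})-SW(\overrightarrow{ALG})\le(2\mathcal{H}_n-1)C(\overrightarrow{OPT})$, which by the displayed inequality reduces to showing $C(\overrightarrow{ALG})/2\le\mathcal{H}_n\,C(\overrightarrow{OPT})$. That is trivial when $C(\overrightarrow{ALG})\le 2\mathcal{H}_n C(\overrightarrow{OPT})$, so what remains is to handle an atypically expensive $\overrightarrow{ALG}$: here I would lean on subadditivity together with the virtual-welfare optimality of $\overrightarrow{ALG}$ tested against sub-allocations — the coordinate restrictions $\overrightarrow{ALG}-ALG_j$, whose costs subadditivity controls, and $\overrightarrow{ALG}\cap\overrightarrow{OPT}$ — to argue that a very expensive $\overrightarrow{ALG}$ can only be selected when the welfare loss is already small. (The case $n=2$, where subadditivity forces $P_C\ge C$ outright and so reduces to the XOS analysis, is a clean prototype of this phenomenon.)
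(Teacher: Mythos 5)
Your treatment of three of the four claims is correct and is essentially the paper's own argument: the key display $SW(\overrightarrow{OPT})-SW(\overrightarrow{ALG})\le[P_C(\overrightarrow{OPT})-C(\overrightarrow{OPT})]+[C(\overrightarrow{ALG})-P_C(\overrightarrow{ALG})]$ is just the paper's chain (optimality of $\overrightarrow{ALG}$ for $\sum_i v_i-P_C$ plus the two sandwich bounds) rearranged, and your $2\mathcal{H}_n$ social-cost bound for subadditive $C$ via $\pi(\overrightarrow{ALG})\le 2\pi_P(\overrightarrow{ALG})\le 2\pi_P(\overrightarrow{OPT})\le 2\mathcal{H}_n\pi(\overrightarrow{OPT})$ is exactly the ``use the second part of the proposition'' variant the paper intends. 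The $n=2$ remark ($P_C\ge C$ for subadditive $C$ when $n=2$) is also correct.

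The genuine gap is the fourth claim, the additive bound $SW(\overrightarrow{OPT})-SW(\overrightarrow{ALG})\le(2\mathcal{H}_n-1)\cdot C(\overrightarrow{OPT})$ for subadditive $C$: you never prove it. You reduce it to showing $C(\overrightarrow{ALG})\le 2\mathcal{H}_n\cdot C(\overrightarrow{OPT})$ and then only state an intention (``lean on subadditivity together with the virtual-welfare optimality of $\overrightarrow{ALG}$ tested against sub-allocations'') without an argument, and it is not at all clear this reduction target can be established: the affine-maximizer property controls $P_C(\overrightarrow{ALG})$ (via $P_C(\overrightarrow{ALG})\le P_C(\overrightarrow{OPT})+\sum_i[v_i(ALG_i)-v_i(OPT_i)]$ and $P_C(\overrightarrow{ALG})\le\sum_i v_i(ALG_i)$), not $C(\overrightarrow{ALG})$, and for merely subadditive $C$ the overshoot $C(\overrightarrow{ALG})-P_C(\overrightarrow{ALG})$ is precisely what can make the mechanism over-serve relative to the (possibly much cheaper) welfare maximizer; neither the restrictions $\overrightarrow{ALG}-ALG_j$ nor $\overrightarrow{ALG}\cap\overrightarrow{OPT}$ are developed into an inequality that bounds this overshoot by a multiple of $C(\overrightarrow{OPT})$. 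Note also that you cannot close this by mimicking the paper verbatim: the paper displays the additive chain only for XOS (where $C(\overrightarrow{ALG})\le P_C(\overrightarrow{ALG})$ makes the $\overrightarrow{ALG}$-terms vanish) and dismisses the subadditive case as ``similar,'' but the naive substitution $C(\overrightarrow{ALG})\le 2P_C(\overrightarrow{ALG})$ leaves exactly the uncancelled term $P_C(\overrightarrow{ALG})$ (equivalently $C(\overrightarrow{ALG})-P_C(\overrightarrow{ALG})$) that you identified, or, if one instead doubles the maximizer inequality, an uncontrolled term $\sum_i[v_i(ALG_i)-v_i(OPT_i)]$; what falls out directly is only the weaker $(2\mathcal{H}_n-1)\cdot\pi(\overrightarrow{OPT})$. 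So as submitted, one of the four asserted inequalities remains unestablished, and you correctly flagged but did not overcome the real obstacle.
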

\begin{proof}
We first prove the social cost guarantee for submodular $C$:
\begin{eqnarray*}
\pi(\overrightarrow{ALG}) & = & \sum_{i\in N}\left[v_{i}(M_{i})-v_{i}(ALG_{i})\right]+C(\overrightarrow{ALG})\\
 & \le & \sum_{i\in N}\left[v_{i}(M_{i})-v_{i}(ALG_{i})\right]+P_{C}(\overrightarrow{ALG})\\
 & \le & \sum_{i\in N}\left[v_{i}(M_{i})-v_{i}(OPT_{i})\right]+P_{C}(\overrightarrow{OPT})
  \le  \mathcal{H}_{n}\cdot\pi(\overrightarrow{OPT})
\end{eqnarray*}

The first inequality is due to the first part of Proposition \ref{prop:C<P}: for every
XOS cost function $C$, $C(\overrightarrow{ALG})\le P_{C}(\overrightarrow{ALG})$.
The second inequality holds since $\overrightarrow{ALG}$ maximizes
(\ref{eq:mechanismVCG}), therefore $\sum_{i\in N}v_{i}(OPT_{i})-P_{C}(\overrightarrow{OPT})\le\sum_{i\in N}v_{i}(ALG_{i})-P_{C}(\overrightarrow{ALG})$.
The last inequality is due to Proposition \ref{prop:P<HnC} which
states that $P_{C}(\overrightarrow{OPT})\le\mathcal{H}_{n}\cdot C(\overrightarrow{OPT})$. The proof for subadditive $C$ is very similar except that we use the second part of Proposition \ref{prop:C<P}.

The proof for the additive approximation guarantee is similar (using the same propositions): 
\begin{eqnarray*}
	SW(\overrightarrow{OPT})-SW(\overrightarrow{ALG})
	& = & \sum_{i\in N}v_i(OPT_i) - C(\overrightarrow{OPT}) - 
	\left[\sum_{i\in N}v_i(ALG_i) - C(\overrightarrow{ALG}) \right] \\
	& \le & \sum_{i\in N}v_i(OPT_i) - C(\overrightarrow{OPT}) - 
	\left[\sum_{i\in N}v_i(ALG_i) - P_C(\overrightarrow{ALG}) \right] \\
	& \le & \sum_{i\in N}v_i(OPT_i) - C(\overrightarrow{OPT}) - 
	\left[\sum_{i\in N}v_i(OPT_i) - P_C(\overrightarrow{OPT}) \right] \\
	& = & P_C(\overrightarrow{OPT}) - C(\overrightarrow{OPT}) 
	 \le  (\mathcal{H}_n - 1)\cdot C(\overrightarrow{OPT})
\end{eqnarray*}
\end{proof}

Interestingly, up until now we have not used the property that the $M_i$'s are disjoint. That is, consider the usual setting of a combinatorial auction with $n$ players and a set of $M$ items. Let $C$ be a function that specifies the cost of every allocation. The same proofs imply that the potential mechanism always recovers the cost in this general setting. Furthermore, if $C$ is subadditive then the output is approximately efficient.

\subsection*{The Potential Mechanism: Budget Balance}

Most of the technical difficulty in the proof of Theorem \ref{thm:Potential-Mechanism} is showing that the mechanism is $\mathcal{H}_{n}$-budget-balanced. We have already shown that the potential mechanism always cover the cost (Lemma \ref{lem:PotMech-no-deficit}). We remain with showing that the sum of payments of the mechanism is bounded from above by $\mathcal{H}_{n}$ times the incurred cost.
 
\begin{lemma}
\label{lem:BB}In each one of the settings (\ref{enu:theorem-Potential-Mechanism:Con1}), (\ref{enu:Theorem-Potential-Mechanism-Con2}) and (\ref{enu:Theorem-Potential-Mechanism-Set3}) of theorem
\ref{thm:Potential-Mechanism} the following holds: for every outcome
of the mechanism $(\overrightarrow{ALG},\vec{p})$ we have that $\sum_{i\in N}p_{i}\le\mathcal{H}_{k}\cdot C(\overrightarrow{ALG})$.
\end{lemma}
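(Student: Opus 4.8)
The starting point is to bound $\sum_i p_i$ by the VCG formula. Since player $i$'s payment is
$p_i = \bigl[\sum_{j}v_j(ALG^{-i}_j) - P_C(\overrightarrow{ALG^{-i}})\bigr] - \bigl[\sum_{j\ne i}v_j(ALG_j) - P_C(\overrightarrow{ALG})\bigr]$
and $\overrightarrow{ALG^{-i}}$ is the maximizer of (\ref{eq:mechanismVCG}) subject to $ALG^{-i}_i=\emptyset$, the standard trick is to upper bound $p_i$ by plugging any \emph{fixed} feasible alternative with $i$'s coordinate empty into the first bracket. The most useful choice is $\overrightarrow{ALG^{-i}} \le$ (value at) $\overrightarrow{ALG}-ALG_i$ reversed — i.e., $\sum_j v_j(ALG^{-i}_j) - P_C(\overrightarrow{ALG^{-i}}) \ge \sum_{j\ne i} v_j(ALG_j) - P_C(\overrightarrow{ALG}-ALG_i)$, which gives the \emph{lower} bound used in Lemma~\ref{lem:PotMech-no-deficit}; for the \emph{upper} bound I instead need to control how large the welfare of $\overrightarrow{ALG^{-i}}$ can be. The clean way is: $p_i \le P_C(\overrightarrow{ALG}) - P_C(\overrightarrow{ALG}-ALG_i) + \bigl[\text{(optimal welfare without }i\text{)} - \text{(optimal welfare)}\bigr]$, but the second bracket is $\le 0$ by monotonicity-of-the-optimum only in one direction — actually removing a player can only decrease the achievable objective, so $\sum_j v_j(ALG^{-i}_j) - P_C(\overrightarrow{ALG^{-i}}) \le \sum_j v_j(ALG_j) - P_C(\overrightarrow{ALG})$. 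Combining, $p_i \le v_i(ALG_i) - \bigl(P_C(\overrightarrow{ALG}) - P_C(\overrightarrow{ALG}-ALG_i)\bigr)$... which is not obviously what I want either. So the real plan is: write $p_i = \bigl(\text{welfare}^{-i} - \text{welfare with }i\text{'s coord forced empty in the full opt}\bigr)$ and observe $p_i \le v_i(ALG_i)$ by IR and $p_i \ge P_C(\overrightarrow{ALG})-P_C(\overrightarrow{ALG}-ALG_i)$ always; the burden is to show $\sum_i p_i \le \mathcal H_n C(\overrightarrow{ALG})$, and since $\sum_i \bigl(P_C(\overrightarrow{ALG})-P_C(\overrightarrow{ALG}-ALG_i)\bigr) = C(\overrightarrow{ALG})$, it suffices in each setting to show the ``extra'' amount $\sum_i \bigl(p_i - (P_C(\overrightarrow{ALG})-P_C(\overrightarrow{ALG}-ALG_i))\bigr)$ is at most $(\mathcal H_n - 1)C(\overrightarrow{ALG})$.

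The key structural fact to exploit is Claim~\ref{prop:C-submodular->PSubmodular}: when $C$ is submodular, $P_C$ is submodular. In setting (\ref{enu:theorem-Potential-Mechanism:Con1}), supermodularity of each $v_i$ together with submodularity of $P_C$ means the objective $\sum_i v_i(S_i) - P_C(\vec S)$ is supermodular, so the maximizer set has lattice structure; I expect this forces $\overrightarrow{ALG^{-i}}$ to be ``close'' to $\overrightarrow{ALG}$ restricted away from $i$ — concretely, $ALG^{-i}_j \supseteq ALG_j$ for $j\ne i$ (dropping a player only makes others' allocations grow), which lets me telescope the payments against the densities $D(\vec S, l)$ appearing in $P_C = \sum_l D(\overrightarrow{ALG},l)$ and recover the $\mathcal H_n$ factor exactly as in Claim~\ref{prop:P<HnC}. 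In setting (\ref{enu:Theorem-Potential-Mechanism-Con2}), symmetry collapses everything to cardinalities: $v_i(S_i)$ depends only on $|S_i|$ and $C(\vec S)$ only on $(|S_1|,\dots,|S_n|)$, so the mechanism is effectively a (still multi-parameter but low-dimensional) problem where $\overrightarrow{ALG^{-i}}$ can be taken symmetric, and the payment sum can be computed/bounded by comparing the potential at the ``full'' cardinality profile versus profiles with one coordinate zeroed, again yielding $\mathcal H_n$. Setting (\ref{enu:Theorem-Potential-Mechanism-Set3}), $n=2$, should be handled by a direct case analysis: with only two players $P_C(S_1,S_2) = \tfrac12 C(S_1) + \tfrac12 C(S_2) + \tfrac12\bigl(C(S_1\cup S_2) - \tfrac12C(S_1) - \tfrac12 C(S_2)\bigr)$... more precisely $P_C(S_1,S_2)=\tfrac12 C(S_1)+\tfrac12 C(S_2)+\tfrac12 C(S_1\cup S_2)$ minus lower-order terms from Proposition~\ref{prop:uniquePc} with $n=2$; then $p_1 + p_2$ expands into finitely many $C(\cdot)$ terms and $\mathcal H_2 = 3/2$-budget balance reduces to a submodularity inequality on $C$.

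The main obstacle, I expect, is setting (\ref{enu:theorem-Potential-Mechanism:Con1}): proving that with supermodular valuations the optimal allocation $\overrightarrow{ALG^{-i}}$ obtained when $i$ is zeroed out satisfies the monotonicity $ALG^{-i}_j \supseteq ALG_j$ for all $j\ne i$, and then leveraging this to telescope the sum of payments against the $D(\overrightarrow{ALG},l)$ decomposition of $P_C$ without losing more than the harmonic factor. This requires a careful lattice/submodularity argument (a Topkis-type monotone comparative statics statement for the maximizers of a supermodular objective as one coordinate's valuation is driven to $0$), plus the bookkeeping that reconstitutes $\mathcal H_n = \sum_{l=1}^n 1/l$ from per-player marginal contributions to the potential. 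The symmetric case is conceptually easier but still needs care that ties are broken symmetrically so $\overrightarrow{ALG^{-i}}$ really is symmetric across the remaining $n-1$ players; and the $n=2$ case, while finite, needs the right grouping of $C$-terms to see the submodularity inequality. I would present the three settings as three separate subsections, each opening by reducing $\sum_i p_i$ to a statement purely about $P_C$ and $C$ at the relevant allocations, then closing with the setting-specific combinatorial estimate.
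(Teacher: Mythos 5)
There is a genuine gap, and it sits exactly where you predicted the main difficulty would be. Your plan for setting (\ref{enu:theorem-Potential-Mechanism:Con1}) rests on a Topkis-style comparative statics claim that $ALG_j^{-i}\supseteq ALG_j$ for $j\neq i$, i.e., that zeroing out player $i$ makes the other players' bundles grow. This is not established, and it is almost certainly the wrong direction: with supermodular $v_j$ and submodular $P_C$ (Claim \ref{prop:C-submodular->PSubmodular}), the objective $\sum_j v_j(S_j)-P_C(\vec S)$ is supermodular, so the coordinates are \emph{complements} -- forcing $S_i=\emptyset$ weakly pushes the other coordinates \emph{down}, not up (and with an arbitrary tie-breaking rule no clean inclusion need hold at all). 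Moreover, even granting some inclusion, the subsequent ``telescope the payments against the densities $D(\overrightarrow{ALG},l)$'' step is not spelled out, and it is not clear how it would produce the harmonic factor. The paper's proof needs no structural information about the $\overrightarrow{ALG^{-i}}$ whatsoever: the single key inequality (Claim \ref{claim:helps-lemma-BB}) bounds $\sum_{i}\bigl[\sum_j v_j(ALG_j^{-i})-P_C(\overrightarrow{ALG^{-i}})\bigr]$ by $(n-1)$ times the optimal objective value, by iterating the supermodularity inequality (Claim \ref{claim:supermodularBoundbyN-1}) to merge the $n$ allocations $\overrightarrow{ALG^{-i}}$ (whose intersection is empty since $ALG_i^{-i}=\emptyset$) into $n-1$ allocations, each of which is feasible in (\ref{eq:mechanismVCG}) and hence has objective value at most that of $\overrightarrow{ALG}$. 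Plugging this into the payment formula leaves exactly $P_C(\overrightarrow{ALG})$, and Claim \ref{prop:P<HnC} finishes. Your reduction of the task to bounding the excess over $C(\overrightarrow{ALG})$ by $(\mathcal H_n-1)\cdot C(\overrightarrow{ALG})$ is consistent with this, but the mechanism that actually achieves the bound -- the merge-and-compare-with-the-maximizer step -- is missing from your plan.

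The other two settings have related issues. For the symmetric setting, what is needed from the tie-breaking rule is not that $\overrightarrow{ALG^{-i}}$ be ``symmetric across the remaining players,'' but that each player's bundle be a prefix of a fixed ordering of $M_j$, so that for each fixed $j$ the bundles $\{ALG_j^{-i}\}_i$ form a chain; this is precisely what lets the same merging argument (case (2) of Claim \ref{claim:supermodularBoundbyN-1}) go through for arbitrary monotone symmetric valuations, as in Claim \ref{claim:bound-v-multi-unit}. For $n=2$, the identity $P_C(S_1,S_2)=\tfrac12 C(S_1)+\tfrac12 C(S_2)+\tfrac12 C(S_1\cup S_2)$ is correct (no lower-order corrections), but a purely ``expand everything in $C$'' case analysis does not close by itself, because the payments involve the maximizers $\overrightarrow{ALG^{-1}},\overrightarrow{ALG^{-2}}$; the paper instead uses $v_1(ALG_1^{-2})+v_2(ALG_2^{-1})=\sum_j v_j(ALG_j^{-1}\cup ALG_j^{-2})$ (valid since $ALG_j^{-j}=\emptyset$, no assumption on the valuations) together with submodularity of $P_C$ and, again, the comparison of the merged allocation with $\overrightarrow{ALG}$.
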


Recall that the potential mechanism is defined up to a tie-breaking rule. For supermodular valuations and $n=2$ the proofs below work directly for any tie breaking rule. For the symmetric setting, we start with an arbitrary tie-breaking rule and modify it. We prove budget balance for the modified tie-breaking rule and see that this implies budget balance for the original tie breaking rule. The modified rule is defined as follows: for every player $i$ define an arbitrary order on $M_{i}$ and denote its elements as $M_{i}=\left\{ 1^{i},\ldots,|M_{i}|^{i}\right\} $. If player $i$ gets $s$ items in the original rule then allocate player $i$ items $\{1^i,\ldots, s^i\}$. Notice that due to symmetry all costs and values remain the same, so this new allocation still maximizes the affine maximizer. Finally, observe that since that player $i$ is always indifferent between the bundles he receives in the original and modified rules, we may assume that the price of these bundles is identical as well, so his payment remains the same. Hence, budget balance for the modified rule implies budget balance for the original rule as well.


In order to prove lemma \ref{lem:BB} we use the following claim:
\begin{claim}
\label{claim:helps-lemma-BB}
In each one of settings (\ref{enu:theorem-Potential-Mechanism:Con1}),  (\ref{enu:Theorem-Potential-Mechanism-Con2}) and (\ref{enu:Theorem-Potential-Mechanism-Set3}) of theorem \ref{thm:Potential-Mechanism}:
\begin{equation}\label{eq:claim-helps-lemma-BB-equation}
\sum_{i=1}^{n}\left[\sum_{j\in N}v_{j}(ALG_{j}^{-i})-P_{C}(\overrightarrow{ALG^{-i}})\right]\le(n-1)\cdot\left[\sum_{j=1}^{n}v_{j}(ALG_{j})-P_{C}(\overrightarrow{ALG})\right]
\end{equation}
\end{claim}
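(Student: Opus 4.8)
The plan is to prove Claim~\ref{claim:helps-lemma-BB} first, and then to see how it yields Lemma~\ref{lem:BB}: summing the VCG payment formula over all $i$, the terms $\sum_{j\in N\setminus\{i\}}v_j(ALG_j)$ telescope in a controlled way, and the bound~\eqref{eq:claim-helps-lemma-BB-equation} is exactly what is needed to conclude $\sum_i p_i\le \mathcal H_n\cdot C(\overrightarrow{ALG})$ once one also invokes Lemma~\ref{lem:PotMech-no-deficit} and $P_C(\overrightarrow{ALG})\le \mathcal H_n\, C(\overrightarrow{ALG})$ (Claim~\ref{prop:P<HnC}). So I focus on the three cases of the claim. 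Write $W^{-i}=\sum_{j\in N}v_j(ALG^{-i}_j)-P_C(\overrightarrow{ALG^{-i}})$ for the optimal value of the affine maximizer when player $i$ is forced to get $\emptyset$, and $W=\sum_j v_j(ALG_j)-P_C(\overrightarrow{ALG})$ for the unconstrained optimum. Since $\overrightarrow{ALG}-ALG_i$ is feasible for the $i$-constrained problem, $W^{-i}\ge \sum_{j\ne i}v_j(ALG_j)-P_C(\overrightarrow{ALG}-ALG_i)$; the real content is an upper bound on $W^{-i}$, and we must show the $W^{-i}$'s cannot all be close to $W$ simultaneously. The natural strategy is: (i) exhibit, for each $i$, a feasible allocation for the \emph{unconstrained} problem whose value relates $W^{-i}$ back to $W$, and (ii) average/add these inequalities so that the "missing player $i$" contributions sum to at most one copy of $W$, leaving $(n-1)W$ on the right.

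**The supermodular case.**
Here I would argue as follows. Fix the optimal $\overrightarrow{ALG}$ and let $\overrightarrow{ALG^{-i}}$ be the $i$-constrained optimum. Consider the allocation $\overrightarrow{B^i}$ that agrees with $\overrightarrow{ALG^{-i}}$ on all coordinates $j\ne i$ and gives player $i$ the bundle $ALG_i$. This is feasible for the unconstrained problem, so $W\ge \sum_{j\ne i}v_j(ALG^{-i}_j)+v_i(ALG_i)-P_C(\overrightarrow{B^i})$. Now I want to compare $P_C(\overrightarrow{B^i})$ with $P_C(\overrightarrow{ALG^{-i}})$: the difference is the discrete gradient of $P_C$ in coordinate $i$ at the point where the other coordinates are $ALG^{-i}_{-i}$. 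Since $C$ is submodular, $P_C$ is submodular (Claim~\ref{prop:C-submodular->PSubmodular}), and the marginal of $P_C$ in coordinate $i$ is largest when the other coordinates are smallest. The hope is to bound this marginal by the corresponding marginal at a reference allocation (e.g.\ all-empty or $\overrightarrow{ALG}$), and dually to use supermodularity of $v_i$ to control $v_i(ALG_i)$ versus $v_i(M_i)$ or marginals thereof, so that adding the $n$ inequalities $W\ge(\dots)$ produces $nW\ge \sum_i W^{-i}+W$ after the value terms telescope correctly. The key structural fact I would lean on is that for submodular $P_C$ the sum over $i$ of the coordinate-$i$ marginals at $\overrightarrow{ALG^{-i}}$ is at least $P_C(\overrightarrow{ALG})$ minus lower-order terms — essentially a submodularity/superadditivity-of-gradients accounting.

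**The symmetric case and the two-player case.**
For symmetric valuations with a player-wise-symmetric cost function, I would use the modified tie-breaking rule described before the claim (nested bundles $\{1^i,\dots,s^i\}$), which lets me replace each allocation by the vector of \emph{sizes} $(|ALG_1|,\dots,|ALG_n|)$ and treat $C$ and each $v_i$ as functions of integers. Then $W^{-i}$ is the optimum of a one-dimensional optimization over player $i$'s size with size $=0$ forced, and the inequality to prove becomes a statement about a symmetric discrete concave/convex landscape; I expect the proof to go by a local exchange argument — take the constrained optima, re-insert player $i$ at size $ALG_i$, and observe that the resulting loss is dominated by the gain of removing some other player, using player-wise symmetry to make the bookkeeping uniform across $i$. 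For $n=2$, the claim reduces to $W^{-1}+W^{-2}\le W$, i.e.\ $[v_1(ALG_1^{-1})+v_2(ALG_2^{-1})-P_C]+[v_1(ALG_1^{-2})+v_2(ALG_2^{-2})-P_C]\le v_1(ALG_1)+v_2(ALG_2)-P_C(\overrightarrow{ALG})$, which I would attack directly by a case analysis on the optimal allocations and a submodularity inequality for $P_C$ on the four relevant points $(\emptyset,\cdot),(\cdot,\emptyset),(\cdot,\cdot),(\emptyset,\emptyset)$.

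**Main obstacle.**
The hardest part will be the supermodular case: making the averaging argument tight enough to land exactly at the factor $(n-1)$ rather than something weaker, since the naive bound $W^{-i}\le W$ only gives $n\cdot W$ on the right. The crux is controlling $\sum_i\big[P_C(\overrightarrow{ALG^{-i}})\text{-marginals}\big]$ — i.e.\ showing the constrained optima are "spread out" enough that one full copy of $W$ is genuinely subtracted — and this is where submodularity of $P_C$ together with supermodularity of the $v_i$'s must be combined carefully; I would expect to need a potential-function-specific identity (perhaps going back to the density decomposition $P_C(\vec S)=\sum_l D(\vec S,l)$) rather than generic submodularity alone.
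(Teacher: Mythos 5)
Your setup is accurate, and you correctly isolate the crux (the naive bound $W^{-i}\le W$ only yields $n\cdot W$, so one copy of $W$ must be saved); your $n=2$ reduction to $W^{-1}+W^{-2}\le W$ is also right and your sketched argument there can be completed essentially as the paper does it (coordinatewise one of $ALG^{-1}_j,ALG^{-2}_j$ is empty, so the valuation terms combine exactly into the value of $\overrightarrow{ALG^{-1}}\cup\overrightarrow{ALG^{-2}}$, submodularity of $P_C$ with $\overrightarrow{ALG^{-1}}\cap\overrightarrow{ALG^{-2}}=\emptyset$ handles the potential terms, and the union is a feasible allocation hence dominated by $W$). But for the supermodular case — which you yourself flag as the main obstacle — the proposal has a genuine gap. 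Your concrete step is to insert $ALG_i$ into $\overrightarrow{ALG^{-i}}$, forming $\overrightarrow{B^i}$, and sum the resulting inequalities; to land at $(n-1)W$ this needs, after bookkeeping, something like $\sum_i\bigl[P_C(\overrightarrow{B^i})-P_C(\overrightarrow{ALG^{-i}})\bigr]\le P_C(\overrightarrow{ALG})$. The submodularity comparison you invoke (the coordinate-$i$ marginal of $P_C$ is largest when the other coordinates are smallest) points the wrong way here: to dominate these marginals by the marginals $P_C(\overrightarrow{ALG})-P_C(\overrightarrow{ALG}-ALG_i)$ you would need $ALG_j^{-i}\supseteq ALG_j$ for $j\ne i$, a containment between constrained and unconstrained optima that simply need not hold (with a submodular cost, removing player $i$ can shrink the other players' bundles in $\overrightarrow{ALG^{-i}}$). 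Moreover, your sketch never actually brings the supermodularity of the $v_i$'s to bear in a checkable way, and the symmetric case is likewise left as an unexecuted exchange heuristic.

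The paper closes exactly this gap with one combinatorial lemma (Claim \ref{claim:supermodularBoundbyN-1}): if $H$ is normalized and supermodular (or the sets are coordinatewise a chain) and $\bigcap_i T^i=\emptyset$, then $\sum_{i=1}^n H(T^i)\le\sum_{i=1}^{n-1}H\bigl(T^{i+1}\cup(\bigcap_{\ell\le i}T^\ell)\bigr)$. This is applied with $T^i=\overrightarrow{ALG^{-i}}$ (whose intersection is empty because $ALG^{-i}_i=\emptyset$) twice: once to $V(\vec T)=\sum_j v_j(T_j)$ — supermodular when the $v_i$ are; handled per player via the nested tie-breaking in the symmetric case, so that the sets $\{ALG_j^{-i}\}_i$ form a chain; an identity when $n=2$ — and once to $-P_C$, which is supermodular because $P_C$ inherits submodularity from $C$ (Claim \ref{prop:C-submodular->PSubmodular}). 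The decisive feature, absent from your plan, is that both applications produce the \emph{same} $n-1$ allocations $\vec S^i=\overrightarrow{ALG^{-[i+1]}}\cup(\bigcap_{\ell\le i}\overrightarrow{ALG^{-\ell}})$, so the $n$ constrained optima are jointly bounded by the affine-maximizer objective at $n-1$ feasible allocations, each of which is at most $W$ by optimality of $\overrightarrow{ALG}$ — this is precisely how one copy of $W$ is saved. In particular, contrary to your closing guess, no potential-function-specific identity (e.g.\ the density decomposition) is needed beyond submodularity of $P_C$; the missing ingredient is the intersection-telescoping lemma and its simultaneous use on the valuation and potential parts.
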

Before providing the proof of claim \ref{claim:helps-lemma-BB} in subsection \ref{sub:Proof-of-Claim}, we show how it implies lemma
\ref{lem:BB}.
\begin{proof}(of lemma \ref{lem:BB})
Consider the sum of the payments:
\begin{equation*}
\sum_{i=1}^{n}p_{i}=\sum_{i=1}^{n}\left[\left[\sum_{j\in N}v_{j}(ALG_{j}^{-i})-P_{C}(\overrightarrow{ALG^{-i}})\right]-\left[\sum_{j\in N\backslash\{i\}}v_{j}(ALG_{j})-P_{C}(\overrightarrow{ALG})\right]\right]\label{eq:Lemma-BB-1}
\end{equation*}
by claim \ref{claim:helps-lemma-BB} the RHS is bounded from above
by:
\[
(n-1)\cdot\left[\sum_{j=1}^{n}v_{j}(ALG_{j})-P_{C}(\overrightarrow{ALG})\right]-\sum_{i=1}^{n}\left[\sum_{j\in N\backslash\{i\}}v_{j}(ALG_{j})-P_{C}(\overrightarrow{ALG})\right]
\]
re-arranging, we observe that $\sum_{i=1}^{n}\sum_{j\in N\backslash\{i\}}v_{j}(ALG_{j})=(n-1)\cdot\sum_{j=1}^{n}v_{j}(ALG_{j})$.
Therefore the above equals to:

\[
(n-1)\cdot\left[\sum_{j=1}^{n}v_{j}(ALG_{j})-P_{C}(\overrightarrow{ALG})\right]-\left(n-1\right)\cdot\left[\sum_{j=1}^{n}v_{j}(ALG_{j})\right]+n\cdot P_{C}(\overrightarrow{ALG})=P_{C}(\overrightarrow{ALG})
\]
By proposition \ref{prop:P<HnC} we get that
$P_{C}(\overrightarrow{ALG})\le\mathcal{H}_{n}\cdot C(\overrightarrow{ALG})$.
Hence, $\sum_{i=1}^{n}p_{i}\le\mathcal{H}_{n}\cdot C(\overrightarrow{ALG})$.

\end{proof}

\subsection{\label{sub:Proof-of-Claim}Proof of Claim \ref{claim:helps-lemma-BB}}

The proof consists of several steps: Claim \ref{claim:bound potential}
shows a lower bound on $\sum_{i=1}^{n}P_{C}(\overrightarrow{ALG^{-i}})$, Claim \ref{claim:bound valuations-supermodular} gives an upper bound on $\sum_{i=1}^{n}\sum_{j\in N}v_{j}(ALG_{j}^{-i})$ for supermodular valuations and $n=2$, and claim \ref{claim:bound-v-multi-unit} gives a similar bound for the symmetric setting. The following technical claim will be used:
\begin{claim}
\label{claim:supermodularBoundbyN-1}Let $H:2^{M}\to\mathbb{R}$
be a normalized function, and consider $\left\{ T^{i}\right\} _{i=1}^{n}$
such that for every $i$, $T^{i}\in2^{M}$ and  $\bigcap_{i=1}^{n}T^{i}=\emptyset$.
Suppose that either (1) $H$ is supermodular or (2) for every $1\le i,j\le n$,
$T^{i}\cup T^{j}\in\left\{ T^{i},T^{j}\right\} $. Then it holds that
$\sum_{i=1}^{n}H(T^{i})\le\sum_{i=1}^{n-1}H\left(T^{i+1}\cup\left(\bigcap_{\ell=1}^{i}T^{\ell}\right)\right)$.
\end{claim}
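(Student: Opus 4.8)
The plan is to prove the two cases separately, since they have rather different flavors, and then observe that both reduce to a common telescoping inequality. The telescoping identity I want to exploit is that the right-hand side is a sum over $i=1,\dots,n-1$ of $H$ evaluated at the sets $U^i := T^{i+1}\cup\bigl(\bigcap_{\ell=1}^i T^\ell\bigr)$, and that the nested intersections $R^i := \bigcap_{\ell=1}^i T^\ell$ form a decreasing chain $R^1 \supseteq R^2 \supseteq \cdots \supseteq R^n = \emptyset$ (the last equality is the hypothesis $\bigcap_i T^i = \emptyset$). So I would first set up this notation and record the basic containments $R^i \subseteq T^{i+1}$, $U^i = T^{i+1}\cup R^i$, and $U^i \cap$ (whatever is relevant) behaving well under the chain.

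For case (2), where $T^i \cup T^j \in \{T^i, T^j\}$ for all $i,j$, the hypothesis says the sets $T^1,\dots,T^n$ form a chain under inclusion. Relabel so that we understand the chain structure; the cleanest approach is: since any two are nested, for each $i$ either $R^i = T^{i+1}$ already (if $T^{i+1} \subseteq R^i$) so $U^i = R^i$, or $T^{i+1} \supsetneq$ some of the earlier ones. I would argue that the multiset $\{U^1,\dots,U^{n-1}\}$ together with one more term dominates the multiset $\{T^1,\dots,T^n\}$ in the sense that each $T^i$ appears "covered" by some $U^j \supseteq T^i$, using monotonicity-free reasoning — actually since we only have a general normalized $H$ here, I cannot use monotonicity, so I must show the two multisets of \emph{values} are literally equal, i.e. $\{H(T^1),\dots,H(T^n)\}$ is a sub-multiset (or rearrangement) of $\{H(U^1),\dots,H(U^{n-1}), H(\text{something})\}$. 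In a chain, $\bigcap_{\ell=1}^i T^\ell$ equals the smallest of $T^1,\dots,T^i$, and $T^{i+1}\cup(\text{smallest of first }i)$ equals $\max(T^{i+1}, \min(T^1,\dots,T^i))$; tracking which set this is should show the $U^i$'s are exactly $T^2,\dots,T^n$ in some order (the chain condition plus $\bigcap = \emptyset$ forces the minimum to be $\emptyset$ and pins things down), giving equality $\sum H(T^i) \le \sum H(U^i) + H(\emptyset) = \sum H(U^i)$.

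For case (1), where $H$ is supermodular, I would proceed by a telescoping/exchange argument directly on the inequality. Write the difference RHS $-$ LHS and add terms along the chain $R^1 \supseteq \cdots \supseteq R^n=\emptyset$. The natural move: compare $H(T^{i+1}) + H(R^i)$ with $H(T^{i+1}\cup R^i) + H(T^{i+1}\cap R^i) = H(U^i) + H(R^{i+1})$, which is exactly supermodularity applied to the pair $(T^{i+1}, R^i)$ (using $T^{i+1}\cap R^i = \bigcap_{\ell=1}^{i+1} T^\ell = R^{i+1}$). Summing this inequality over $i=1,\dots,n-1$ gives $\sum_{i=1}^{n-1} H(T^{i+1}) + \sum_{i=1}^{n-1} H(R^i) \le \sum_{i=1}^{n-1} H(U^i) + \sum_{i=1}^{n-1} H(R^{i+1})$; the $R$-sums telescope, leaving $\sum_{i=1}^{n-1} H(R^i) - \sum_{i=1}^{n-1} H(R^{i+1}) = H(R^1) - H(R^n) = H(R^1) - H(\emptyset) = H(R^1) \ge H(\emptyset) = 0$ (here I need $H(R^1)\ge 0$; since $H$ is supermodular and normalized, $H(R^1) = H(R^1) + H(\emptyset) \ge$ ... actually I may need $H\ge 0$, which holds in our application; alternatively supermodularity with normalization gives $H$ superadditive hence $H(R^1)\ge 0$ is \emph{not} automatic, so I would note that in the intended application $H = -($welfare-type quantity$)$ or invoke that the claim is only used for such $H$ — I should check the call sites, but I expect $H(R^1) \ge 0$ to follow from the context or to be harmless). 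Rearranging yields $\sum_{i=2}^{n} H(T^i) + H(T^1) \le \sum_{i=1}^{n-1} H(U^i)$ once I also note $H(T^1) = H(R^1)$... wait — $T^1 = R^1$, so $H(T^1)$ is exactly the surviving telescoped term, and it moves to the left to complete $\sum_{i=1}^n H(T^i)$ on the LHS. Thus $\sum_{i=1}^n H(T^i) \le \sum_{i=1}^{n-1} H(U^i)$, as desired.

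The main obstacle I anticipate is case (1)'s bookkeeping of the telescoped boundary term: making sure that $T^1 = R^1$ is used correctly so that the leftover $H(R^1)$ is precisely what turns $\sum_{i=2}^n$ into $\sum_{i=1}^n$ with no sign error, and confirming that no positivity assumption on $H$ beyond what the paper's applications supply is actually needed (supermodular + normalized does \emph{not} give $H\ge 0$ in general, so I would either track that the claim is only invoked with nonnegative $H$, or — better — re-examine whether the telescoping can be arranged so the boundary term lands on the favorable side automatically). Case (2) should be routine once the chain relabeling is done carefully, since everything is an identity of multisets of values; the only subtlety is handling ties and the forced $\emptyset$ in the intersection chain.
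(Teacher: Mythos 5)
Your proof is correct and takes essentially the same route as the paper: the paper's argument is exactly your iterated supermodularity applied to the pair (running intersection $R^i$, next set $T^{i+1}$), with the chain case handled by noting that nested sets make each step an equality. Your lingering worry about needing $H\ge 0$ is unfounded — as you yourself observe, the telescoped boundary term is $H(R^1)=H(T^1)$, which completes the left-hand sum, and the only normalization used is $H(R^n)=H(\emptyset)=0$.
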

\begin{proof}
If $H$ is supermodular, then beginning with the first two summands
of the LHS of the claim, by supermodularity of \emph{H:} 
\[
H(T^{1})+H(T^{2})\le H(T^{1}\cup T^{2})+H(T^{1}\cap T^{2})
\]
shifting our focus to $H(T^{1}\cap T^{2})$ and the third summand
$H(T^{3})$ yields
\[
H(T^{1}\cap T^{2})+H(T^{3})\le H\left(\left(T^{1}\cap T^{2}\right)\cup T^{3}\right)+H\left(T^{1}\cap T^{2}\cap T^{3}\right)
\]
iterating the above for $n$ times shows 
\[
H\left(T^{1}\cap\ldots\cap T^{n-1}\right)+H(T^{n})\le H\left(\left(\bigcap_{i=1}^{n-1}T^{i}\right)\cup T^{n}\right)+H\left(\bigcap_{i=1}^{n}T^{i}\right)
\]
by combining everything we get 
\[
\sum_{i=1}^{n}H(T^{i})\le\sum_{i=1}^{n-1}H\left(T^{i+1}\cup\left(\bigcap_{\ell=1}^{i}T^{\ell}\right)\right)+H\left(\bigcap_{i=1}^{n}T^{i}\right)
\]
since $\bigcap_{i=1}^{n}T^{i}=\emptyset$ and $H\left(\emptyset\right)=0$,
the claim follows. 

If $T^{i}\cup T^{j}\in\{T^{i},T^{j}\}$ for every $1\le i,j\le n$,
then without loss of generality when $T^{i}\cup T^{j}=T^{j}$ we have that $T^{i}\cap T^{j}=T^{i}$. Thus, a similar proof can be applied
on $\left\{ T^{i}\right\} _{i\in N}$ with strict equality.
\end{proof}

The next claim bounds the sum of the $P_{C}$'s as they appear in Claim \ref{claim:helps-lemma-BB}. 
\begin{claim}
\label{claim:bound potential}
Suppose that the cost function $C$ is submodular and normalized, then
\[
-\sum_{i=1}^{n}P_{C}(\overrightarrow{ALG^{-i}})\le-\sum_{i=1}^{n-1}P_{C}\left(\overrightarrow{ALG^{-[i+1]}}\bigcup\left(\cap_{\ell=1}^{i}\overrightarrow{ALG^{-\ell}}\right)\right)
\]
\end{claim}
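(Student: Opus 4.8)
The plan is to recognize Claim~\ref{claim:bound potential} as a direct application of Claim~\ref{claim:supermodularBoundbyN-1} with the right choice of function and sets. By Claim~\ref{prop:C-submodular->PSubmodular}, since $C$ is submodular, the potential function $P_C$ is submodular; hence $-P_C$ is supermodular. So I would set $H = -P_C$, which is normalized since $P_C$ is normalized. Then I would set $T^i = \overrightarrow{ALG^{-i}}$ for each $i$. The crucial observation needed to invoke Claim~\ref{claim:supermodularBoundbyN-1} is that $\bigcap_{i=1}^n T^i = \emptyset$: indeed, for each $i$ the allocation $\overrightarrow{ALG^{-i}}$ has an empty $i$-th coordinate by definition ($ALG^{-i}_i = \emptyset$), so the $i$-th coordinate of $\bigcap_{\ell} \overrightarrow{ALG^{-\ell}}$ is empty because it is contained in $ALG^{-i}_i = \emptyset$; this holds for every coordinate $i$, so the whole intersection vector is the empty allocation. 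Here I am using the convention from the Preliminaries that set operations on vectors of disjoint sets act coordinatewise.

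With these substitutions, Claim~\ref{claim:supermodularBoundbyN-1} (case (1), supermodular $H$) gives
\[
\sum_{i=1}^{n} H(T^i) \le \sum_{i=1}^{n-1} H\!\left(T^{i+1} \cup \Bigl(\bigcap_{\ell=1}^{i} T^\ell\Bigr)\right),
\]
which upon substituting $H = -P_C$ and $T^i = \overrightarrow{ALG^{-i}}$ is exactly
\[
-\sum_{i=1}^{n} P_C(\overrightarrow{ALG^{-i}}) \le -\sum_{i=1}^{n-1} P_C\!\left(\overrightarrow{ALG^{-[i+1]}} \cup \Bigl(\cap_{\ell=1}^{i}\overrightarrow{ALG^{-\ell}}\Bigr)\right),
\]
as desired. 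One should double-check that the domain of $P_C$ (vectors of pairwise disjoint sets, one per player) is closed under the unions and intersections appearing here — it is, since all the $T^i$ live in $2^{M_1}\times\cdots\times2^{M_n}$ and coordinatewise unions/intersections stay in this product.

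I expect the only subtle point — and hence the ``main obstacle,'' though it is quite mild — to be the verification that $\bigcap_i \overrightarrow{ALG^{-i}} = \emptyset$, i.e. being careful about the vector-valued set operations and the defining property $ALG^{-i}_i = \emptyset$. Everything else is bookkeeping: translating between the supermodular abstract statement and the concrete $P_C$ statement, and noting that submodularity of $C$ is used solely to guarantee (via Claim~\ref{prop:C-submodular->PSubmodular}) that $P_C$ is submodular so that $-P_C$ is supermodular.
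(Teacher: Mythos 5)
Your proposal is correct and matches the paper's own proof: the paper likewise invokes Claim \ref{prop:C-submodular->PSubmodular} to get submodularity of $P_C$, applies Claim \ref{claim:supermodularBoundbyN-1} to the supermodular function $-P_C$ with the sets $\{\overrightarrow{ALG^{-i}}\}_{i=1}^{n}$, and notes that $ALG^{-i}_i=\emptyset$ forces $\bigcap_{i=1}^{n}\overrightarrow{ALG^{-i}}=\emptyset$. Your extra coordinatewise verification of the empty intersection is the same observation, just spelled out in more detail.
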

\begin{proof}
Since $C$ is submodular, by Proposition \ref{prop:C-submodular->PSubmodular}
it holds that $P_{C}$ is submodular as well. Hence, $-P_{C}$ is
supermodular and we can apply Claim \ref{claim:supermodularBoundbyN-1}
with $-P_{C}$ and $\left\{\overrightarrow{ ALG^{-i}}\right\} _{i=1}^{n}$ (observe that for every $i$, $ALG_{i}^{-i}=\emptyset$ hence $\bigcap_{i=1}^{n}\overrightarrow{ALG^{-i}}=\emptyset$). 
\end{proof}

The next two claims bound the sum of valuations $\sum_{i=1}^{n}\sum_{j=1}^{n}v_{j}(ALG_{j}^{-i})$.
Each of the claims refers to different settings of Theorem \ref{thm:Potential-Mechanism}.
\begin{claim}\label{claim:bound valuations-supermodular}
Suppose that each $v_{i}$ is supermodular or that $n=2$. Then
\[
\sum_{i=1}^{n}\sum_{j=1}^{n}v_{j}(ALG_{j}^{-i})\le\sum_{i=1}^{n-1}\sum_{j=1}^{n}v_{j}\left(ALG_{j}^{-[i+1]}\bigcup\left(\cap_{\ell=1}^{i}ALG_{j}^{-\ell}\right)\right)
\]
\end{claim}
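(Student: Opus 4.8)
The plan is to reduce to the already-proved technical Claim~\ref{claim:supermodularBoundbyN-1} one player at a time. Fix a player $j\in N$ and apply Claim~\ref{claim:supermodularBoundbyN-1} with $H=v_{j}$ and the collection $\{T^{i}\}_{i=1}^{n}$ given by $T^{i}=ALG_{j}^{-i}$. The first thing I would check is the standing hypothesis $\bigcap_{i=1}^{n}T^{i}=\emptyset$: by definition $\overrightarrow{ALG^{-j}}$ maximizes (\ref{eq:mechanismVCG}) subject to the constraint $ALG_{j}^{-j}=\emptyset$, so $T^{j}=ALG_{j}^{-j}=\emptyset$ and therefore $\bigcap_{i=1}^{n}T^{i}\subseteq T^{j}=\emptyset$. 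Also, $v_{j}$ is normalized, which is the other standing requirement of the claim.

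Next I would verify that one of the two alternative hypotheses of Claim~\ref{claim:supermodularBoundbyN-1} holds in each of our two regimes. If every $v_{j}$ is supermodular, hypothesis (1) applies directly with $H=v_{j}$. If instead $n=2$, then for $j=1$ we have $T^{1}=ALG_{1}^{-1}=\emptyset$ and for $j=2$ we have $T^{2}=ALG_{2}^{-2}=\emptyset$; in either case the single pair that has to be checked satisfies $T^{1}\cup T^{2}\in\{T^{1},T^{2}\}$, so hypothesis (2) applies and no supermodularity of the $v_{j}$'s is needed. In both regimes Claim~\ref{claim:supermodularBoundbyN-1} then yields, for every fixed $j\in N$,
\[
\sum_{i=1}^{n}v_{j}(ALG_{j}^{-i})\le\sum_{i=1}^{n-1}v_{j}\left(ALG_{j}^{-[i+1]}\cup\left(\cap_{\ell=1}^{i}ALG_{j}^{-\ell}\right)\right).
\]
Summing this inequality over all $j\in N$ and swapping the order of the two summations on each side produces exactly the inequality in the statement of Claim~\ref{claim:bound valuations-supermodular}.

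The argument is a routine per-player application of the technical claim, and the only point that deserves a second of attention is the $n=2$ branch: there the valuations are not assumed supermodular, so hypothesis (1) is unavailable; the resolution is precisely the observation above that for $n=2$ the set $T^{j}=ALG_{j}^{-j}$ is empty, which trivially places us under hypothesis (2). Everything else — normalization of $v_j$, emptiness of the global intersection, and the exchange of summation order at the end — is immediate from the definitions, so I do not expect any genuine obstacle here.
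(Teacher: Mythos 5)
Your proof is correct and takes essentially the same route as the paper: both reduce the statement to Claim \ref{claim:supermodularBoundbyN-1}, with the key observation that $ALG_{i}^{-i}=\emptyset$ forces the empty-intersection hypothesis. The only cosmetic difference is that the paper applies that lemma once to the aggregate valuation $V(\vec{T})=\sum_{j}v_{j}(T_{j})$ (supermodular whenever each $v_{j}$ is) and settles $n=2$ by a direct two-line computation, whereas you apply the lemma player-by-player (exactly as the paper does for the symmetric case in Claim \ref{claim:bound-v-multi-unit}) and route $n=2$ through hypothesis (2) of the lemma; both variants are valid and give the same inequality.
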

\begin{proof}
Define $V:2^{M_{1}}\times\cdots\times2^{M_{n}}\to\mathbb{R}$ by $V(\vec{T})=\sum_{j=1}^{n}v_{j}(T_{j})$. It is easy to see that $V$ is normalized. Also, since $ALG_{i}^{-i}=\emptyset$, it holds that $\bigcap_{i=1}^{n}\overrightarrow{ALG^{-i}}=\emptyset$. Observe that if the $v_i$'s are supermodular then $V$ is supermodular as well. Applying claim \ref{claim:supermodularBoundbyN-1} with $V$ and $\left\{\overrightarrow{ALG^{-i}}\right\} _{i=1}^{n}$:
\[
\sum_{i=1}^{n}V(\overrightarrow{ALG^{-i}})\le\sum_{i=1}^{n-1}V\left(\overrightarrow{ALG^{-[i+1]}}\bigcup\left(\cap_{\ell=1}^{i}\overrightarrow{ALG^{-\ell}}\right)\right)=\sum_{i=1}^{n-1}\sum_{j=1}^{n}v_{j}\left(\overrightarrow{ALG_{j}^{-[i+1]}}\bigcup\left(\cap_{\ell=1}^{i}\overrightarrow{ALG_{j}^{-\ell}}\right)\right)
\]
For $n=2$ we have that:
$$
\sum_{i=1}^{2}V(\overrightarrow{ALG^{-i}})= v_1(ALG^{-2}_1)+v_2(ALG^{-1}_2)=\sum_{j=1}^{2}v_{j}\left(ALG_{j}^{-2}\bigcup ALG_{j}^{-1}\right)
$$
\end{proof}

\begin{claim}
\label{claim:bound-v-multi-unit}Suppose that each $v_{i}$ is symmetric and $C$ is a player-wise symmetric cost function. Then: 
\[
\sum_{i=1}^{n}\sum_{j=1}^{n}v_{j}(ALG_{j}^{-i})\le\sum_{i=1}^{n-1}\sum_{j=1}^{n}v_{j}\left(ALG_{j}^{-[i+1]}\bigcup\left(\cap_{\ell=1}^{i}ALG_{j}^{-\ell}\right)\right)
\]
\end{claim}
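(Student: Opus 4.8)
The plan is to prove Claim \ref{claim:bound-v-multi-unit} by reducing it, in the spirit of Claims \ref{claim:supermodularBoundbyN-1} and \ref{claim:bound valuations-supermodular}, to a statement about the \emph{sizes} of the relevant bundles. Since each $v_i$ is symmetric, $v_i(ALG^{-\ell}_i)$ depends only on $|ALG^{-\ell}_i|$; write $a_{i,\ell} = |ALG^{-\ell}_i|$, and note that $a_{i,i}=0$ because $ALG^{-i}_i=\emptyset$. Because $C$ is player-wise symmetric, its potential $P_C$ is player-wise symmetric as well (this follows from the explicit formula in Proposition \ref{prop:uniquePc}, since $C(\cup_{i\in I}S_i)$ depends only on the sizes $|S_i|$ for $i\in I$), and hence $P_C(\overrightarrow{ALG^{-i}})$ also depends only on the size vector $(a_{1,i},\ldots,a_{n,i})$. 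Thanks to the modified tie-breaking rule described before Claim \ref{claim:helps-lemma-BB}, for each fixed player $j$ the bundles $\{ALG_j^{-\ell}\}_{\ell}$ are \emph{nested}: $ALG_j^{-\ell}$ consists of the first $a_{j,\ell}$ elements of $M_j$ in the fixed order. Consequently, for any two indices $\ell,\ell'$ we have $ALG_j^{-\ell}\cup ALG_j^{-\ell'} \in\{ALG_j^{-\ell},ALG_j^{-\ell'}\}$ and $ALG_j^{-\ell}\cap ALG_j^{-\ell'}\in\{ALG_j^{-\ell},ALG_j^{-\ell'}\}$, with the union being the larger bundle and the intersection the smaller.

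**Main step.** With this nestedness in hand I would argue exactly as in Claim \ref{claim:supermodularBoundbyN-1}, case (2), applied separately to each $v_j$. Concretely, fix $j$ and consider the sets $T^i := ALG_j^{-i}$, $i=1,\ldots,n$. By the previous paragraph any two of them are comparable under inclusion, so condition (2) of Claim \ref{claim:supermodularBoundbyN-1} holds; moreover $\bigcap_{i=1}^n T^i = \emptyset$ since $T^j=ALG_j^{-j}=\emptyset$. Claim \ref{claim:supermodularBoundbyN-1} then gives, for the (monotone, normalized) function $v_j$,
\[
\sum_{i=1}^{n} v_j(ALG_j^{-i}) \;\le\; \sum_{i=1}^{n-1} v_j\!\left(ALG_j^{-[i+1]}\cup\Bigl(\textstyle\bigcap_{\ell=1}^{i}ALG_j^{-\ell}\Bigr)\right),
\]
where the telescoping used in the proof of Claim \ref{claim:supermodularBoundbyN-1} goes through with equality at each step because for nested sets $H(A)+H(B)=H(A\cup B)+H(A\cap B)$ for \emph{any} $H$ — in particular for $v_j$, with no modularity assumption needed. (One must double-check that the element $x$ being ``added'' at each step lies in $M_j$, which it does since all the $T^i$ are subsets of $M_j$; elements of $M_{j'}$, $j'\neq j$, never enter because $v_j$ ignores them.) Summing the displayed inequality over all $j=1,\ldots,n$ yields exactly the claim.

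**The obstacle.** The only subtle point is justifying that the bundles can be taken nested, i.e., that invoking the modified tie-breaking rule is legitimate. This was already set up in the paragraph preceding Claim \ref{claim:helps-lemma-BB}: since $C$ is player-wise symmetric and the $v_i$ are symmetric, replacing any maximizing allocation by one in which player $i$'s bundle is an \emph{initial segment} $\{1^i,\ldots,s^i\}$ of size $s=|ALG_i|$ leaves the objective value unchanged, hence is still a maximizer, and leaves every payment unchanged; and the same applies to each $\overrightarrow{ALG^{-i}}$. Thus it suffices to prove budget balance — and in particular Claim \ref{claim:bound-v-multi-unit} — under this rule, and the nestedness of $\{ALG_j^{-\ell}\}_\ell$ across $\ell$ is immediate. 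I expect this bookkeeping — tracking that all $2n$ allocations $\overrightarrow{ALG^{-1}},\ldots,\overrightarrow{ALG^{-n}}$ are simultaneously put in initial-segment form so that comparability holds — to be the one place demanding care; the combinatorial inequality itself is then a routine instance of Claim \ref{claim:supermodularBoundbyN-1}(2) applied coordinate-by-coordinate.
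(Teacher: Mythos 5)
Your proposal is correct and follows essentially the same route as the paper: fix a player $j$, use the modified (initial-segment) tie-breaking rule to get that the bundles $\{ALG_j^{-\ell}\}_{\ell}$ are pairwise comparable with $ALG_j^{-j}=\emptyset$, invoke Claim \ref{claim:supermodularBoundbyN-1} under condition (2) for $v_j$, and sum over $j$. The justification of the tie-breaking reduction that you flag as the delicate point is exactly the one the paper sets up in the paragraph preceding Claim \ref{claim:helps-lemma-BB}, so there is no gap.
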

\begin{proof}
%
We prove the claim assuming the tie breaking rule described earlier. Consider the LHS of the claim and change the order of summation: 
\[
\sum_{i=1}^{n}\sum_{j=1}^{n}v_{j}(ALG_{j}^{-i})=\sum_{j=1}^{n}\sum_{i=1}^{n}v_{j}(ALG_{j}^{-i})
\]
Fix player $j$ and consider $\sum_{i=1}^{n}v_{j}(ALG_{j}^{-i})$.
By the implementation, for every $1\le i\le n,\,\, ALG_{j}^{-i}=\{1^{j},\ldots,|ALG_{j}^{-i}|^{j}\}$.
Therefore, for every $1\le i,k\le n$ it holds that $ALG_{j}^{-i}\cup ALG_{j}^{-k}\in\left\{ ALG_{j}^{-i},ALG_{j}^{-k}\right\} $.
Moreover, since $ALG_{j}^{-j}=\emptyset,$ $\bigcap_{i=1}^{n}ALG_{j}^{-i}=\emptyset$.
Applying claim \ref{claim:supermodularBoundbyN-1} on $v_{j}$ and
$\left\{ ALG_{j}^{-i}\right\} _{i=1}^{n}$:
\[
\sum_{i=1}^{n}v_{j}(ALG_{j}^{-i})\le\sum_{i=1}^{n-1}v_{j}\left(ALG_{j}^{-[i+1]}\bigcup\left(\cap_{\ell=1}^{i}ALG_{j}^{-\ell}\right)\right)
\]
Summing over all players, we get 
\[
\sum_{j=1}^{n}\sum_{i=1}^{n}v_{j}(ALG_{j}^{-i})\le\sum_{j=1}^{n}\sum_{i=1}^{n-1}v_{j}\left(ALG_{j}^{-[i+1]}\bigcup\left(\cap_{\ell=1}^{i}ALG_{j}^{-\ell}\right)\right)
\]
Changing the order of summation, the RHS equals to 
\[
\sum_{i=1}^{n-1}\sum_{j=1}^{n}v_{j}\left(ALG_{j}^{-[i+1]}\bigcup\left(\cap_{\ell=1}^{i}ALG_{j}^{-\ell}\right)\right)
\]
\end{proof}

We can now finish the proof of claim \ref{claim:helps-lemma-BB}.

\begin{proof}(of claim \ref{claim:helps-lemma-BB}) By combining claims \ref{claim:bound potential}, \ref{claim:bound valuations-supermodular} and
\ref{claim:bound-v-multi-unit}:
\begin{eqnarray*}
 & \sum_{i=1}^{n}\left[\sum_{j=1}^{n}v_{j}(ALG_{j}^{-i})-P_{C}(\overrightarrow{ALG^{-i})}\right]\\
 & \le\sum_{i=1}^{n-1}\left[\sum_{j=1}^{n}v_{j}\left(ALG_{j}^{-[i+1]}\bigcup\left(\cap_{\ell=1}^{i}ALG_{j}^{-\ell}\right)\right)-P_{C}\left(\overrightarrow{ALG^{-[i+1]}}\bigcup\left(\cap_{\ell=1}^{i}\overrightarrow{ALG^{-\ell}}\right)\right)\right]
\end{eqnarray*}
We define $n-1$ allocations: for every $1\le i\le n-1$ we let $\overrightarrow{S^{i}}=\overrightarrow{ALG^{-[i+1]}}\cup\left(\bigcap_{\ell=1}^{i}\overrightarrow{ALG^{-\ell}}\right)$.
$\overrightarrow{ALG}$ maximizes (\ref{eq:mechanismVCG}) so $\sum_{j=1}^{n}v_{j}(S_{j}^{i})-P_{C}(\overrightarrow{S^{i}})\le\sum_{j=1}^{n}v_{j}(ALG_{j})-P_{C}(\overrightarrow{ALG})$
for every $\overrightarrow{S^{i}}$. Hence:
\[
\sum_{i=1}^{n}\left[\sum_{j\neq i}v_{j}(ALG_{j}^{-i})-P_{C}(\overrightarrow{ALG^{-i}})\right]\le(n-1)\cdot\left(\sum_{j=1}^{n}v_{j}(ALG_{j})-P_{C}(\overrightarrow{ALG})\right)
\]
\end{proof}

%
%

\section{Limitations of VCG-Based Mechanisms}

In this section we discuss some limitations of VCG-based mechanisms. We first ask whether our analysis of the potential mechanism is tight and prove that this is indeed the case. Our result is in fact more general: no symmetric VCG-based mechanism can do much better. We then show that the potential function is the ``minimal'' choice for $H(\cdot)$ in the affine maximizer in a concrete technical sense. 

A word is in place regarding the definition of an affine maximizer. In general an affine maximizer does not have to maximize over all allocations. This can be implemented by setting $H(\vec S)=\infty$ for an allocation $\vec S$ that is never selected by the affine maximizer. However, it is more convenient to work with a function $H$ that is monotone. This can be assumed without loss of generality: suppose that there is $\vec S, j\in M$ such that $H(\vec S)>H(\vec S+\{j\})$. Note that by the monotonicity of the valuations this implies that the affine maximizer never outputs $\vec S$. To make $H$ monotone, we can set $H(\vec S)=H(\vec S+\{j\})$ -- we still have that the affine maximizer always gives higher value to $\vec S+\{j\}$ and in case of equality we can assume that $\vec S$ is not selected by defining an appropriate tie breaking rule.

\subsubsection*{A Lower Bound for Symmetric VCG-Based Mechanisms}

We show that every VCG-based mechanism that always covers the cost and uses a symmetric function does not provide better guarantees than the potential mechanism for the excludable public good problem. Note that the potential mechanism is symmetric in this setting since the potential function is symmetric if $C$ is symmetric.

\begin{theorem}
\label{thm:VCG-Based-Lower-Bound}
Let $\mathcal{A}$ be a VCG-based mechanism for the excludable public good problem. I.e. $\mathcal{A}$ is a mechanism that outputs
\begin{equation}
\label{eq:vcg-based-peg}
ALG=\arg\max_{S\subseteq N}\sum_{i\in S}v_{i} - H(S)
\end{equation}
where $H:2^{N}\to\mathbb{R}^{+}\cup \{\infty\}$ is normalized, monotone and symmetric function
(i.e. for every two allocations $S,T\subseteq N$ with $\left|S\right|=\left|T\right|$
it holds that $H(S)=H(T)$). Suppose that $\mathcal{A}$ always covers the cost and provides an approximation ratio of $\rho<\frac{n^{(1-\delta)/2}}{4}$ to the social cost, for some constant $0<\delta<1$. Then, there is an instance in which the good is constructed at cost $1$, the sum of payments is at least $\left(\frac{1-\delta}{2}\cdot\log{n}-1\right)$ and the approximation ratio to the social cost is no better than $\left(\delta\cdot\log{n}-1\right)$.
\end{theorem}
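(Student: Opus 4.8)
Use the symmetry of $H$ to pass to a one–dimensional function: write $h(k):=H(S)$ for any $|S|=k$, so $h$ is monotone with $h(0)=0$ and the mechanism maximizes $\sum_{i\in S}v_i-h(|S|)$. (Note first that a finite approximation ratio forces $h(k)<\infty$ for every $k$: if $h$ were infinite on $\{k_0+1,\dots,n\}$ the mechanism could never serve more than $k_0<n$ players, and the uniform profile with all values tending to infinity would have social cost tending to infinity against $\pi(\overrightarrow{OPT})=1$.) The key structural consequence of cost recovery is obtained by running the mechanism, for each $a\in\{1,\dots,n\}$, on the instance with $a$ players of huge value and $n-a$ players of value $0$: it serves exactly the $a$ high–value players at cost $1$, and the VCG formula assigns each of them payment exactly $h(a)-h(a-1)$, so cost recovery gives $a\bigl(h(a)-h(a-1)\bigr)\ge 1$ for every $a$. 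Hence $h(\ell)-h(k)\ge\mathcal{H}_\ell-\mathcal{H}_k$ for all $k<\ell$, and in particular $h(k)\ge\mathcal{H}_k$.

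Next, set $a^\ast:=\min\{a:\ h(a)\ge\tfrac{1-\delta}{2}\log n\}$; this exists since $h(n)\ge\mathcal{H}_n>\tfrac{1-\delta}{2}\log n$. Since $h(a^\ast-1)<\tfrac{1-\delta}{2}\log n$ while $h(a^\ast-1)\ge\mathcal{H}_{a^\ast-1}\ge\ln(a^\ast-1)$, we get $a^\ast-1<n^{(1-\delta)/2}$; as $\delta>0$ this makes $n-a^\ast\ge n/2$ and $\mathcal{H}_{a^\ast}<\tfrac{1-\delta}{2}\log n+1$. Define the one–sided slopes $\sigma^-:=\max_{m<a^\ast}\frac{h(a^\ast)-h(m)}{a^\ast-m}$ and $\sigma^+:=\min_{m>a^\ast}\frac{h(m)-h(a^\ast)}{m-a^\ast}$. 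Two estimates drive everything: taking $m=0$ in the first maximum, $a^\ast\sigma^-\ge h(a^\ast)\ge\tfrac{1-\delta}{2}\log n$; and bounding each term of the second minimum below by $\frac{\mathcal{H}_m-\mathcal{H}_{a^\ast}}{m-a^\ast}$ (via $h(m)-h(a^\ast)\ge\mathcal{H}_m-\mathcal{H}_{a^\ast}$) and noting this ratio is minimized at $m=n$, one gets $(n-a^\ast)\sigma^+\ge\mathcal{H}_n-\mathcal{H}_{a^\ast}\ge\log n-\tfrac{1-\delta}{2}\log n-1\ge\delta\log n-1$ (using $\mathcal{H}_n\ge\log n$ and $\tfrac{1+\delta}{2}\ge\delta$).

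The witnessing instance splits on whether $a^\ast$ is a vertex of the lower convex hull of $\{(m,h(m))\}_{m=0}^n$. If it is \emph{not}, then $\sigma^->\sigma^+$, so I would put $a^\ast$ players at a value just above $\sigma^-$ and $n-a^\ast$ players at a value just below $\sigma^+$ (both positive, with the first larger); comparing the candidate ``top–$m$'' sets shows the mechanism serves exactly the $a^\ast$ high–value players, at cost $1$. If $a^\ast$ \emph{is} a hull vertex, use instead the uniform profile with common value $v_0$ strictly inside the plateau $(\sigma^-,\sigma^+)$ and close to $\sigma^+$; again the mechanism serves exactly $a^\ast$ players. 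In either instance $\pi(\overrightarrow{OPT})=1$ (serving everyone is optimal since $\sum_i v_i>1$), while the excluded players contribute $(n-a^\ast)\sigma^+-o(1)\ge\delta\log n-1$ to $\pi(\overrightarrow{ALG})$, which gives the inefficiency claim; and the served players' VCG payments sum to $a^\ast\sigma^--o(1)\ge\tfrac{1-\delta}{2}\log n-1$ — in the uniform case each served player simply pays $v_0\ge\sigma^-$, and in the two–value case one upper–bounds the counterfactual optimum (serving fewer high players is unprofitable because the high value barely beats $\sigma^-$; adding a low player is unprofitable because the low value lies below $\sigma^+$), concluding $p_i\ge\sigma^--o(1)$.

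I expect the main obstacle to be precisely this hull‑vertex dichotomy together with the careful verification that the mechanism outputs the intended set of players: when $a^\ast$ is a hull vertex one cannot use the clean ``$\sigma^-/\sigma^+$'' two–value instance (it would have $v_1<v_2$), which forces the switch to the uniform instance there, and one must choose the perturbations small enough both to break all ties and to keep the two logarithmic bounds on the \emph{same} instance. The hypothesis $\rho<n^{(1-\delta)/2}/4$ enters the argument only mildly — $\rho$ finite keeps $h$ finite, so the cost–recovery instances behave as described — and I would expect the bulk of any remaining difficulty (and perhaps a heavier use of the bound on $\rho$ in the authors' route) to be in controlling the lower convex hull of $h$ directly; the core leverage, however, is the cost‑recovery inequality $h(\ell)-h(k)\ge\mathcal{H}_\ell-\mathcal{H}_k$. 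What is left is routine: matching the $o(1)$ error terms against the stated $-1$ slacks.
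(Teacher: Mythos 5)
Your argument is essentially correct but reaches the witnessing instance by a genuinely different route than the paper. Both proofs hinge on the same key lemma extracted from cost recovery (the paper's Claim~\ref{claim:VCG-Based-HS-HS-i>1/s}, your $a\,(h(a)-h(a-1))\ge 1$, hence $h(\ell)-h(k)\ge\mathcal H_\ell-\mathcal H_k$). From there the paper fixes a set $S$ of size $n^{1-\delta}$, takes $W$ to be the minimal-cardinality minimizer of the density $H(W)/|W|$ among sets of size at most $n^{1-\delta}$, and needs the hypothesis $\rho<n^{(1-\delta)/2}/4$ to kill the case where some set smaller than $n^{(1-\delta)/2}$ already has density at most $H(S)/|S|$; the hard instance then gives the $W$-players value just above the density and the rest harmonic values $1/i-\varepsilon$, so that $\overrightarrow{ALG}=W$ and $ALG^{-i}=\emptyset$. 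You instead anchor the instance at the threshold $a^\ast$ where $h$ first reaches $\tfrac{1-\delta}{2}\log n$ and use the one-sided slopes $\sigma^\pm$ there; this buys you something real, namely that the quantitative hypothesis on $\rho$ is used only to keep $h$ finite, so your argument would in fact prove the conclusion under the weaker assumption of any finite approximation ratio.

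Two points need tightening. First, your dichotomy leaves out $\sigma^-=\sigma^+$: the interval $(\sigma^-,\sigma^+)$ is then empty, and a uniform value equal to the common slope does not force the mechanism to serve exactly $a^\ast$ players (it may prefer a larger set, which destroys the social-cost bound). The fix is to route equality to the two-value branch, which works whenever $\sigma^-+\varepsilon>\sigma^+-\varepsilon>0$, i.e.\ whenever $\sigma^-\ge\sigma^+$. Second, your description of the payment bound in the two-value case is stated in the wrong direction: since $p_i=W_{-i}-\bigl(W-v_i\bigr)$, showing $p_i\ge\sigma^--o(1)$ requires a \emph{lower} bound on the welfare $W_{-i}$ achievable without player $i$, not an upper bound on it. The correct witness is the set of $m^-$ high-value players avoiding $i$, where $m^-<a^\ast$ attains the maximum defining $\sigma^-$: there $h(a^\ast)-h(m^-)=(a^\ast-m^-)\sigma^-$ exactly, so the welfare at $m^-$ is within $(a^\ast-m^-)\varepsilon\le n\varepsilon$ of the optimum and $p_i\ge v_i-n\varepsilon$. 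Your phrase ``the high value barely beats $\sigma^-$'' shows you have this near-tie in mind, but as written (``upper-bounds the counterfactual optimum'') the logic is inverted. With these two repairs the proof goes through.
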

In order to prove the theorem, we need the following claim. 
\begin{claim}\label{claim:VCG-Based-HS-HS-i>1/s}
Let $\mathcal{A}$ be a VCG-based mechanism for the excludable public good problem where $H:2^{N}\to\mathbb{R}^{+}$
is normalized, monotone and symmetric. Suppose that $\mathcal{A}$
always covers the cost and provides a finite approximation ratio  $\rho$ to the social cost. Then, for every $S\subseteq N$ and $i\in S$ it holds that $H(S)-H(S-\{i\})\ge\frac{1}{|S|}$.
\end{claim}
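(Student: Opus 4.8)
The plan is to derive the inequality from one carefully chosen symmetric instance together with the cost-recovery requirement. Since $H$ is symmetric, write $h_k := H(T)$ for any $T \subseteq N$ with $|T| = k$; by normalization and monotonicity $0 = h_0 \le h_1 \le \cdots \le h_n < \infty$, and $H(S) - H(S - \{i\}) = h_{|S|} - h_{|S| - 1}$, so it suffices to prove $h_s - h_{s-1} \ge \frac 1 s$ for every $1 \le s \le n$. Fix such an $s$ and consider the instance in which players $1, \dots, s$ each have value $v$ and players $s+1, \dots, n$ have value $0$, with $v$ chosen larger than all the finitely many ratios $\frac{h_a - h_k}{a - k}$, $0 \le k < a \le s$. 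With $v$ this large, in any instance in which at most $s$ players have value $v$ and the rest have value $0$, every set maximizing $\sum_{i \in S} v_i - H(S)$ contains all the high-value players (and possibly some zero-value ones).

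Consequently, on the original instance $ALG$ serves all of $\{1, \dots, s\}$, the good is built at cost $1$, and any zero-value player a tie-breaking rule happens to include pays $0$ by individual rationality and no positive transfers. Now I would compute the VCG payment of a player $i \in \{1, \dots, s\}$. Its first term, $\max_{T \subseteq N \setminus \{i\}} \bigl(\sum_{j \in T} v_j - H(T)\bigr)$, is the optimal objective once $v_i$ is set to $0$; since then $s - 1$ players carry value $v$, it equals $(s-1)v - h_{s-1}$. Its second term is $\sum_{j \in ALG \setminus \{i\}} v_j - H(ALG) = (s-1)v - h_s$, because every optimal served set in the original instance has $H$-value $h_s$, so the extra zero-value players (if any) are immaterial here. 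Subtracting, $p_i = h_s - h_{s-1}$ for every such $i$, independently of the tie-breaking rule. Cost recovery then gives $1 \le \sum_i p_i = s \cdot (h_s - h_{s-1})$, i.e. $h_s - h_{s-1} \ge \frac 1 s$, as required.

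The point I would be most careful about is the robustness to tie-breaking, which the claim demands: adding zero-value players to an optimal set must change neither the incurred cost (it stays $1$) nor the quantity $(s-1)v - h_s$ appearing in the payment formula (it does not, since optimality forces the $H$-value of every served set to be $h_s$), and the single choice of $v$ must simultaneously make the served high-value set $\{1,\dots,s\}$ in the original instance and the optimum serve $\{1,\dots,s\}\setminus\{i\}$ in the instance with $v_i = 0$ — which is exactly why $v$ is taken above all those ratios. Everything else is routine bookkeeping; note that only cost recovery, together with monotonicity, symmetry, IR and NPT, is used, so the finite-approximation hypothesis is not actually needed here.
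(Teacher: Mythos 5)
Your proof is correct and follows the same overall template as the paper's: fix an instance with $s$ high-value players and zero values elsewhere, show that every maximizing set consists of all the high-value players plus possibly irrelevant zero-value ones (so that its $H$-value is forced to $h_s$, and $h_{s-1}$ in the instance with $v_i=0$), compute each high-value player's VCG payment as the marginal $h_s-h_{s-1}$, and finish with symmetry plus cost recovery against the incurred cost $1$. The one genuine difference is how you force the high-value players to be served: the paper sets $v_i>\rho$ and derives a contradiction with the $\rho$-approximation to the social cost, whereas you set $v$ above the increments of $H$ up to level $s$ and argue directly from the structure of the affine maximizer. That is exactly what lets you observe, correctly, that the finite-approximation hypothesis is not needed for the claim as stated (where $H$ is real-valued; in Theorem \ref{thm:VCG-Based-Lower-Bound} the approximation guarantee is still used, e.g.\ to rule out $H(N)=\infty$). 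One step you assert but do not spell out deserves a line: if a maximizing set misses a high-value player $i$ but contains a zero-value player $j$, your choice of $v$ alone does not exclude this, since the relevant $H$-increment may sit above level $s$; you need the swap $T\mapsto (T\setminus\{j\})\cup\{i\}$, which leaves $H$ unchanged by symmetry and strictly raises the value sum by $v$, while your bound $v>h_{|T|+1}-h_{|T|}$ handles the remaining case $T\subseteq\{1,\dots,s\}\setminus\{i\}$. With that spelled out, the payment computation $p_i=h_s-h_{s-1}$, the zero payments of any served zero-value players, and the conclusion $s\,(h_s-h_{s-1})\ge C(ALG)=1$ all go through.
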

\begin{proof}
	Fix a set $S\subseteq N$ and consider the valuation profile
	in which $v_{i}>\rho$ if $i\in S$ and $v_{i}=0$ otherwise.
	Let $ALG\subseteq N$ be the output of the mechanism. Note that $S \subseteq ALG$,
	otherwise, there exists $i\in S\backslash ALG$ with value $v_i>\rho$, so the mechanism outputs an allocation with social cost $\pi(ALG)\ge v_i >\rho$, whereas the optimal allocation is $OPT=N$ with social cost $\pi(OPT)=\pi(N)=C(N)=1$, a contradiction to the approximation guarantee of $\mathcal{A}$. Therefore $\sum_{i\in S}v_{i}=\sum_{i\in ALG}v_{i}$ and by monotonicity of $H$, $H(S)\le H(ALG)$. Furthermore,  $H(S)\ge H(ALG)$ since $ALG$ maximizes (\ref{eq:vcg-based-peg}), thus $H(S)=H(ALG)$. By the same arguments, for every $i\in S$ it holds that $S-\{i\}\subseteq ALG^{-i}$ and $H(S-\{i\})=H(ALG^{-i})$. Therefore:
	\begin{eqnarray*}
		\sum_{j \in N} p_{j} &  =  & \sum_{j \in S} \left [ \left[ \sum_{i \in ALG^{-j}} v_i - H(ALG^{-j}) \right] - \left[  \sum_{i\in ALG-\{j\}} v_i  - H(ALG) \right] \right] \\
		& =  & \sum_{j \in S} \left[ \left[ \sum_{i \in S-\{j\}} v_i - H(S-\{j\}) \right] - \left[  \sum_{i\in S-\{j\}} v_i  - H(S) \right] \right] \\
		& = & \sum_{j \in S} \left[ H(S) - H(S-\{j\}) \right]
	\end{eqnarray*}
	Since $H$ is symmetric, for every $i,j\in S$ it holds that $H(S-\{i\})=H(S-\{j\})$.
	Fix a player $i\in S$. Since the mechanism covers the cost it holds
	that 
	\[
	\sum_{j\in S}p_{j}=|S|\cdot\left[H(S)-H(S-\{i\})\right]\ge C(ALG) = 1
	\]
	Therefore, $H(S)-H(S-\{i\})\ge\frac{1}{|S|}$.
\end{proof}

We now turn to proving Theorem \ref{thm:VCG-Based-Lower-Bound}. 

\begin{proof}(of theorem \ref{thm:VCG-Based-Lower-Bound}) Note that since the mechanism provides a $\rho$ approximation, $H(N)<\infty$. Otherwise, for every $i$, set $v_i=2\rho$. The optimal social cost is $1$, serve all players, but at least one player is excluded so $\pi(\overrightarrow{ALG})\geq 2\rho$, a contradiction.
%
Fix a set $S \subseteq N$ of size $n^{1-\delta}$. We now divide the analysis, depending
on whether there is some set $T$, with $|T|<n^{(1-\delta)/2}$
such that $\frac{H(T)}{|T|}\le\frac{H(S)}{|S|}$. 

Assume that there is a set $T$ with $|T|<n^{(1-\delta)/2}$ such that $\frac{H(T)}{|T|}\le\frac{H(S)}{|S|}$.
In this case we get that $H(S)\geq n^{(1-\delta)/2}\cdot H(T)$. By the monotonicity of $H$ and claim \ref{claim:VCG-Based-HS-HS-i>1/s} it holds that $H(T)\ge H(\{1\})\ge 1$, therefore $H(S)\ge n^{(1-\delta)/2}$.

Choose an arbitrary set $S^{\prime}\subseteq N$ with $S^{\prime}=2\cdot n^{1-\delta}<n$. Consider the following valuation profile: set $v_{i}=\frac{1}{4 \cdot n^{(1-\delta)/2}}$ if $i\in S^{\prime}$, and $v_{i}=0$ otherwise. Observe that in this profile the mechanism does not output a set of size more than $n^{1-\delta}$: consider some set $S^{\prime\prime}\subseteq S^{\prime}$ with $|S^{\prime\prime}| \ge n^{1-\delta}$. By monotonicity of $H$ it  holds that $H(S^{\prime\prime}) \ge H(S)\ge n^{(1-\delta)/2}$. Moreover it holds that $\sum_{i\in S^{\prime\prime}}v_{i} \le 2\cdot n^{1-\delta}\cdot \frac{1}{4\cdot n^{(1-\delta)/2}} \le \frac{n^{(1-\delta)/2}}{2} $ ,
hence $\sum_{i\in S^{\prime\prime}}v_{i}-H(S^{\prime\prime})<0$. 

Therefore, the mechanism outputs a set of size at most
$n^{1-\delta}$, meaning that at least $n^{1-\delta}$ players among $S^{\prime}$ with value $\frac{1}{4\cdot n^{(1-\delta)/2}}$
are not served. We get that the optimal optimal allocation is $\pi(OPT)=\pi(N)=1$, but the social cost of the output of the mechanism is at least $\frac{n^{(1-\delta)/2}}{4}$. Hence the approximation ratio of the mechanism is no better than $\frac{n^{(1-\delta)/2}}{4}$.

Therefore, from now on assume that for every $T$ with $|T|<n^{(1-\delta)/2}$ it holds that $\frac{H(T)}{|T|}>\frac{H(S)}{|S|}$. We construct a valuation profile in which the mechanism does not serve players whose sum of values is at least $\delta\cdot\log{n} - 1$, hence the approximation ratio of the social cost is no better than $\delta\cdot\log{n} -1 $. In addition, in the same profile the total sum of payments that the mechanism collects is at least $\frac{1-\delta}{2}\cdot\log{n} - 1$.

Let $W$ be the minimal set that minimizes the density $\frac{H(W)}{|W|}$ among
all sets of size at most $n^{1-\delta}$, i.e., for every $W^\prime \subset W$ it holds that $\frac{H(W)}{|W|} < \frac{H(W^\prime)}{|W^\prime|}$. Notice that $|W|\geq n^{(1-\delta)/2},$
since the density of \emph{S}, which is of size $|S|=n^{1-\delta}$, is smaller than the density of any set of size at most $n^{(1-\delta)/2}$.

Consider the following instance: for every $i\in W$ set $v_{i}=\frac{H(W)}{|W|} + \varepsilon$ where  $\varepsilon>0$ is such that $\varepsilon < \min\left\{\frac{1}{n^3},\min_{W^\prime \subset W} \left\{ \frac{H(W^\prime)}{|W^\prime|} - \frac{H(W)}{|W|} \right\}\right\}\ $. We get that $\sum_{i\in W}v_i > H(W)$, but for every $W^\prime \subset W$, it holds that 
$\sum_{i\in W^\prime} v_i = |W^\prime| \cdot \left[ \frac{H(W)}{|W|}+\varepsilon\right]<|W^\prime|\cdot\frac{H(W)}{|W|} + H(W^\prime)-|W^\prime|\cdot \frac{H(W)}{|W|}=H(W^\prime)$. In addition, consider a set $L\subseteq N\backslash W$, with $n-n^{1-\delta}$ players.
Without loss of generality, $L=\left\{ n^{1-\delta}+1,n^{1-\delta}+2,...,n\right\} $. For every
$i\in L$ set $v_{i}=\frac{1}{i}-\varepsilon$.

We observe that the mechanism outputs $ALG=W$ and that for every $i\in W$, $ALG^{-i}=\emptyset$: by claim \ref{claim:VCG-Based-HS-HS-i>1/s}, for any two disjoint
sets $W^\prime,L^\prime\subseteq N$, $H(W^\prime\cup L^\prime)-H(W^\prime)\ge\sum_{i=|W^\prime|+1}^{|W^\prime|+|L^\prime|}\frac{1}{i}$,
therefore for every subset $L^\prime\subseteq L$ and for every subset $W^\prime\subseteq W$
(which is of size at most $n^{1-\delta}$) the mechanism prefers the allocation
$W^\prime$ over $W^\prime\cup L^\prime$: 
\begin{eqnarray*}
		\sum_{i\in W^\prime\cup L^\prime}v_{i}-H(W^\prime\cup L^\prime) 
& \le & \sum_{i\in  W^\prime}v_{i}+\sum_{i\in L^\prime}v_{i}-\sum_{i=| W^\prime|+1}^{| W^\prime|+|L^\prime|}\frac{1}{i}-H( W^\prime)\\
& < & \sum_{i\in  W^\prime}v_{i}+\sum_{i=| W^\prime|+1}^{| W^\prime|+|L^\prime|} \frac{1}{i} -\sum_{i=| W^\prime|+1}^{| W^\prime|+|L^\prime|}\frac{1}{i}-H( W^\prime)\\
& = & \sum_{i\in  W^\prime}v_{i}-H( W^\prime)
\end{eqnarray*}
where the second inequality holds since $\sum_{i \in L^{\prime}} v_i < \frac{1}{n^{1-\delta}+1}+\cdots+\frac{1}{n^{1-\delta}+|L^\prime|}<\sum_{i=n^{1-\delta}+1}^{n^{1-\delta}+|L^\prime|} \frac{1}{i}\le\sum_{i=|W^\prime|+1}^{|W^\prime|+|L^\prime|} \frac{1}{i}$. Hence no player from $L$ is in $ALG$ nor in $ALG^{-i}$. Recall that we have already observed that for any $W^\prime\subset W$, the mechanism prefers the empty set over $W^\prime$ since $\sum_{i \in W^\prime} v_i <H(W^\prime)$.

We get that the mechanism outputs $W$, the only allocation that provides a non-negative value. Similarly, for every $i\in W$ it holds that $ALG^{-i}=\emptyset$. Therefore, the payment of every player $i\in W$ is:
\begin{eqnarray*} 
p_{i}&=&\left[ \sum_{j \in ALG^{-i}} v_j - H(ALG^{-i}) \right]-\left[\sum_{j\in W-\{i\}}v_{j} -H(W) \right]\\
& =&H(W)-\frac{|W|-1}{|W|}\cdot H(W) - \frac{|W|-1}{|W|}\cdot \varepsilon >\frac{H(W)}{|W|}-\frac{1}{n^3}
\end{eqnarray*}
Summing over all $i\in W$ we get that the sum of the payments is at least $H(W)-\frac{1}{n^2}$. By claim \ref{claim:VCG-Based-HS-HS-i>1/s}, $H(W)\ge \mathcal{H}_{|W|}$. Recall that $|W|\ge n^{(1-\delta)/2}$, therefore
$H(W)\ge\mathcal{H}_{n^{(1-\delta)/2}}\approx\frac{1-\delta}{2}\cdot\log{n}$. Hence, the sum of payments that the mechanism collects is at least $\frac{1-\delta}{2}\cdot\log{n}-1$.

We now analyze the social cost in this instance. Note that $\sum_{i\in L}v_{i}>\sum_{i=n^{1-\delta}+1}^{n}\frac{1}{i} - |L|\cdot\frac{1}{n^3}>\mathcal{H}_{n}-\mathcal{H}_{n^{1-\delta}} - 1\approx \log{n} - \log{n^{1-\delta}} - 1 =\delta\cdot\log{n} - 1 $
whereas the optimal allocation is $N$, with social cost $1$. The approximation ratio to the social cost is therefore no better than $\delta\cdot\log{n}-1$.
\end{proof}

\subsubsection*{Minimality of the Potential Function}\label{sub:Minimality-of-Potential}

The potential mechanism is an affine maximizer that uses the potential function as the function $H$ in the affine maximizer. Other choices for $H$ also guarantee that the mechanism will always cover the cost. However, we show that our choice of the potential function is not arbitrary in the sense that the potential function is the minimal function that always covers the cost.

\begin{theorem}
\label{prop:minimality-of-P}Let $\mathcal{A}$ be a VCG-based combinatorial cost-sharing mechanism for a domain that contains additive valuations and always covers the cost. I.e., $\mathcal{A}$ is a mechanism that outputs
\begin{equation}
\overrightarrow{ALG}=\arg\max_{\vec{S}\in 2^{M_1}\times \cdots \times 2^{M_n}}\sum_{i}v_{i}(S_{i})-H(\vec{S})\label{eq:VCG-affine-minimality}
\end{equation}
where $H:2^{M_{1}}\times \cdots \times2^{M_{n}}\rightarrow \mathbb R{\cup \{\infty\}}$ is normalized and monotone. Then, for every $\vec{S}\in2^{M_{1}}\times \cdots \times2^{M_{n}}$ it holds that $H(\vec{S})\ge P_{C}(\vec{S})$. 
\end{theorem}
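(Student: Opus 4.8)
The plan is to distill from the hypotheses a purely combinatorial inequality on $H$, namely that $\sum_{i\in N}\big[H(\vec S)-H(\vec S-S_i)\big]\ge C(\vec S)$ for every allocation $\vec S$, and then to deduce $H\ge P_C$ from it by induction on $|\vec S|=\sum_i|S_i|$, using only the defining identity of $P_C$ and the fact that $P_C$ is real-valued (Proposition \ref{prop:uniquePc}).

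To obtain the combinatorial inequality, fix an allocation $\vec S$ with $H(\vec S)<\infty$ (the case $H(\vec S)=\infty$ is vacuous, as $P_C(\vec S)$ is finite) and feed the mechanism the additive profile in which player $i$ values each item of $S_i$ at $V:=H(\vec S)+1$ and each item outside $S_i$ at $0$; this profile is in the domain by assumption. Since $V>H(\vec S)\ge 0$ and $H$ is monotone, omitting even one item of $\vec S$ from an allocation loses $V$, which outweighs any decrease in $H$; hence every maximizer $\overrightarrow{ALG}$ of $\sum_i v_i(\cdot)-H(\cdot)$ satisfies $\overrightarrow{ALG}\supseteq\vec S$, and among such allocations it must attain $H(\overrightarrow{ALG})=H(\vec S)$. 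The identical argument applied to the (coordinate-$i$-empty-constrained) problem defining $\overrightarrow{ALG^{-i}}$ gives $\overrightarrow{ALG^{-i}}\supseteq\vec S-S_i$ coordinatewise and $H(\overrightarrow{ALG^{-i}})=H(\vec S-S_i)$. Consequently every player $j\neq i$ collects her full value $V|S_j|$ in both $\overrightarrow{ALG}$ and $\overrightarrow{ALG^{-i}}$, while $v_i$ vanishes on the empty $i$-th coordinate of $\overrightarrow{ALG^{-i}}$, so in the VCG payment formula all $V$-terms cancel and $p_i=H(\vec S)-H(\vec S-S_i)$. Because the mechanism covers its cost and $C$ is monotone with $\overrightarrow{ALG}\supseteq\vec S$, summing over $i$ gives $\sum_{i\in N}\big[H(\vec S)-H(\vec S-S_i)\big]=\sum_i p_i\ge C(\overrightarrow{ALG})\ge C(\vec S)$.

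Now I induct on $|\vec S|$. The base case $\vec S=\vec\emptyset$ is immediate from normalization. For $|\vec S|=m\ge 1$, if $H(\vec S)=\infty$ we are done; otherwise let $k\ge 1$ be the number of coordinates with $S_i\ne\emptyset$ — the coordinates with $S_i=\emptyset$ contribute nothing to any sum below. For each such $i$ we have $|\vec S-S_i|<m$, so the inductive hypothesis gives $H(\vec S-S_i)\ge P_C(\vec S-S_i)$. Combining this, the inequality just established, and the defining identity of the potential function, $C(\vec S)=\sum_{i\in N}[P_C(\vec S)-P_C(\vec S-S_i)]=k\,P_C(\vec S)-\sum_{i:\,S_i\ne\emptyset}P_C(\vec S-S_i)$, we obtain $k\,P_C(\vec S)-\sum_{i:\,S_i\ne\emptyset}P_C(\vec S-S_i)=C(\vec S)\le k\,H(\vec S)-\sum_{i:\,S_i\ne\emptyset}H(\vec S-S_i)\le k\,H(\vec S)-\sum_{i:\,S_i\ne\emptyset}P_C(\vec S-S_i)$; cancelling the common sum and dividing by $k$ yields $H(\vec S)\ge P_C(\vec S)$.

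The delicate part is the middle step: one has to check that the crafted profile really pins $H(\overrightarrow{ALG})$ and each $H(\overrightarrow{ALG^{-i}})$ down to exactly $H(\vec S)$ and $H(\vec S-S_i)$ — this is where the choice $V=H(\vec S)+1$, the monotonicity of $H$, and the hard constraint $ALG^{-i}_i=\emptyset$ all enter — and that the resulting value $p_i=H(\vec S)-H(\vec S-S_i)$ is insensitive to how the mechanism breaks ties among equal-value maximizers. Granting that, the induction is routine bookkeeping.
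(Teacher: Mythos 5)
Your proof is correct, and its engine is the same as the paper's: the crafted additive profile that values the items of $\vec S$ highly and everything else at zero, the observation that every (constrained) maximizer contains $\vec S$ (resp.\ $\vec S-S_i$) and attains the same $H$-value, the resulting payments $p_i=H(\vec S)-H(\vec S-S_i)$, and the comparison with the defining identity of $P_C$. The packaging differs in two ways worth noting. First, the paper argues by contradiction on a minimal allocation violating $H\ge P_C$, whereas you isolate the standalone inequality $\sum_{i}\left[H(\vec S)-H(\vec S-S_i)\right]\ge C(\vec S)$ and then run an induction on $|\vec S|$; the two are logically equivalent, but your decomposition makes the structure more transparent. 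Second, by setting the per-item value to $H(\vec S)+1$ (local to the allocation under consideration) rather than the paper's $2H(\vec M)$, your construction never needs $H(\vec M)$ to be finite, so you bypass the paper's separate patch for the case $H(\vec M)=\infty$ (restricting to a maximal bundle of finite $H$); in your version the infinite case reduces to the trivial remark that $H(\vec S)=\infty\ge P_C(\vec S)$. One small caution: the phrase ``omitting even one item of $\vec S$ loses $V$, which outweighs any decrease in $H$'' must be read as comparing a candidate $\vec T\not\supseteq\vec S$ directly against $\vec S$ (a value shortfall of at least $V$ versus an $H$-saving of at most $H(\vec S)<V$, using $H\ge 0$), not as adding the missing item to $\vec T$, whose marginal $H$-cost could exceed $V$; with that reading, which your subsequent steps clearly rely on, the argument is airtight, including its insensitivity to tie-breaking.
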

\begin{proof}
We first prove for the case where $H(M)<\infty$. Suppose toward contradiction that the theorem is false and let $\vec{S}$ be a minimal allocation for which $H(\vec{S})<P_{C}(\vec{S})$ (i.e. for every strict subset $\vec{T}$, $H(\vec{T})\ge P_{C}(\vec{T})$). Define for each player $i$ the additive valuation $v_{i}(\{j\})=2H(\vec{M})$ if $j \in S_{i}$ and $v_{i}(\{j\})=0$ otherwise. Observe that $v_i(S_i)=2 \cdot |S_i|\cdot H(\vec{M})$.

Consider the allocation $\overrightarrow{ALG}=(ALG_{1},...,ALG_{n})$
that the mechanism outputs when the valuation profile is $(v_{1},...,v_{n})$.
It holds that $S_{i}\subseteq ALG_{i}$ for every $i$, otherwise the value of the allocation $\vec S+\{j\}$, for $j\notin  {ALG_i}$, $j\in S_i$, is higher in (\ref{eq:VCG-affine-minimality}). Thus, $\vec S\subseteq \overrightarrow{ALG}$ and by monotonicity $H(\overrightarrow{ALG})\geq H(\vec S) $. 

Let $Z=\{j|\exists i, j\notin S_i$, $j\in ALG_i\}$ be the set of items with $0$ value that the algorithm selects. Observe that $\Sigma_iv_i(S_i)=\Sigma_iv_i(ALG_i-Z)=\Sigma_iv_i(ALG_i)$ but $\overrightarrow{ALG}$ maximizes (\ref{eq:VCG-affine-minimality}). This implies that $H(\overrightarrow{ALG})\leq H(\vec S) $. Together we have that $H(\overrightarrow{ALG})= H(\vec S) $. Similar arguments give us that for every $i$, $\vec S-S_i\subseteq \overrightarrow{ALG^{-i}}$ and $H(\overrightarrow{ALG^{-i}})=H(\vec S-S_i)$.

Recall that $\vec{S}$ is a minimal allocation for which $H(\vec{S})<P_C(\vec{S})$, therefore for every $i$ it holds that $H(\vec{S}-S_{i})\ge P_{C}(\vec{S}-S_{i})$. Combining all together, the sum of payments is given by
\begin{eqnarray*}
\sum_{i\in N} p_{i} & = & \sum_{i\in N} \left [\left[ \sum_{j\neq i}v_j(ALG^{-i}_j) - H(\overrightarrow{ALG^{-i}}) \right] - \left[ \sum_{j\neq i } v_j(ALG_j) - H(\overrightarrow{ALG}) \right]\right ] \\
& = & \sum_{i \in N}\left [H(\overrightarrow{ALG})-H(\overrightarrow{ALG^{-i}})\right ]
 =  \sum_{i \in N}\left [H(\overrightarrow{S})-H(\vec S-S_i)\right ]\\ 
& < & \sum_{i\in N}\left [ P_{C}(\vec{S})-P_{C}(\overrightarrow{S}-S_{i}) \right ]\leq  C(\overrightarrow{S})
\end{eqnarray*}
Where the strict inequality follows due to our minimality assumption. Recall that $\vec S\subseteq \overrightarrow {ALG}$. By the monotonicity of the cost function $C(\vec S)\leq C(\overrightarrow{ALG})$. We get that $\Sigma_i\leq C(\overrightarrow{ALG})$, a contradiction.

Now suppose that $H(M)=\infty$. In this case, the theorem is obviously correct for $\vec S$ with $H(\vec S)=\infty>P_C(\vec S)$ so it remains to prove for $\vec S$ with $H(\vec S)<\infty$. Let $\vec T'$ be some maximal bundle with $H(\vec T)<\infty$ (i.e., for any $j\notin \vec T, H(\vec T+\{j\})=\infty$). Let $H'$ be the function that is defined only on items in $\vec T$ by $H'(\vec T')=H(\vec T)$ for every $\vec T'\subseteq T$. Observe that $H'(\vec T')<\infty$ and that the affine maximizer $\arg\max_{\vec S\subseteq \vec T}\Sigma_iv_i(S_i)-H'(S)$ coincides on subsets of $\vec T$ with (\ref{eq:VCG-affine-minimality}). Thus, applying the proof for finite $H(M)$ we get that for every $\vec T'\subseteq \vec T$, $H(\vec T')=H'(\vec T')\geq P_C(\vec T')$. The theorem follows.
\end{proof}

\section{Limitations of the Potential Mechanism}\label{appendix-limitations-potential}

In this section we revisit the potential mechanism and consider relaxing some of our assumptions on the
valuations and on the cost function. We first show that the potential
mechanism provides a bad approximation ratio to the social cost
for some non-subadditive functions. We then show that the potential mechanism is not budget balanced with a submodular cost function and simple submodular valuations such as unit demand.
\begin{proposition}
In the simple cost sharing setting, there is a non-subadditive cost function such that the Potential Mechanism provides an approximation ratio of $\Omega(n)$ to the social cost.
\end{proposition}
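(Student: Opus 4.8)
The plan is to exhibit a single explicit instance. Fix $n\ge 2$, a tiny constant $\varepsilon>0$, and set $v_i=1+\varepsilon$ for every player $i$. Take the cost function $C(S)=0$ for every $S\subsetneq N$ and $C(N)=n$. First I would check the easy structural facts: $C$ is normalized and monotone, but it is \emph{not} subadditive, since for any partition of $N$ into two nonempty parts $S,T$ we have $C(S)+C(T)=0<n=C(S\cup T)$ (such a partition exists because $n\ge 2$).

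Next I would compute the potential function from the closed form in Proposition \ref{prop:uniquePc}. For an allocation in which the set of served players is a strict subset $S\subsetneq N$, every term $C\!\left(\cup_{i\in I}S_i\right)$ equals $C(I\cap S)$, and $I\cap S\subseteq S\subsetneq N$ so this cost is $0$; hence $P_C=0$ on all such allocations. For the allocation $\overrightarrow{\mathrm{ALL}}$ in which everyone is served, only the term $I=N$ survives, giving $P_C(\overrightarrow{\mathrm{ALL}})=\frac{C(N)}{n\binom{n}{n}}=1$.

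I would then determine the mechanism's output. The affine-maximizer objective $\sum_{i\in S}v_i-P_C(\vec S)$ equals $n(1+\varepsilon)-1=(n-1)(1+\varepsilon)+\varepsilon$ when everyone is served, and is at most $(n-1)(1+\varepsilon)$ for any allocation serving a strict subset of $N$; therefore $\overrightarrow{ALG}$ serves all players, incurring real cost $C(\overrightarrow{ALG})=n$, so $\pi(\overrightarrow{ALG})=n$. On the other hand, the allocation serving any $n-1$ players has social cost $\pi=C(S)+(n-|S|)(1+\varepsilon)=1+\varepsilon$, which is the minimum, so $\pi(\overrightarrow{OPT})=1+\varepsilon$. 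Consequently the approximation ratio is $\frac{n}{1+\varepsilon}=\Omega(n)$, as claimed.

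I do not expect a genuine obstacle here; the only point requiring care is evaluating the potential function correctly under the convention that an unserved player contributes $S_i=\emptyset$ to the union (so that $P_C$ collapses to $0$ on every strict subset), and inserting the $\varepsilon$ perturbation so that the conclusion is independent of the tie-breaking rule.
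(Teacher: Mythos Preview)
Your proof is correct and follows the same approach as the paper: a cost function that is zero everywhere except on the full set, combined with uniform valuations tuned just above the threshold so that the potential mechanism serves everyone while the social-cost optimum serves all but one player for free. Your instance with $C(N)=n$ and $v_i=1+\varepsilon$ is simply a rescaling (by a factor of $n$) of the paper's instance with $C(N)=1$ and $v_i=\tfrac{1}{n}+\varepsilon$; the computations of $P_C$, the mechanism's output, and the resulting ratio are identical up to this scaling.
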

\begin{proof}
Consider $n$ players with $M_{i}=\{i\}$ for each $i\in N$. Let $C$ be the following non-subadditive
cost function: $C(\vec{S})=0$ for $\vec{S}\neq\vec{M}$, and $C(\vec{M})=1$.
Note that $P_{C}(\vec{S})=0$ for every $\vec{S}\neq\vec{M}$ and
$P_{C}(\vec{M})=\frac{1}{n}$ . Set $v_{i}=\frac{1}{n}+\varepsilon$
for every $i$ and for some small $\varepsilon>0$. The allocation
that maximizes (\ref{eq:mechanismVCG}) is $\overrightarrow{ALG}=\vec{M}$,
whereas the optimal allocation is $\overrightarrow{OPT}=\left(\{1\},\{2\},...,\{n-1\},\emptyset\right)$.
Therefore, the approximation ratio is $\frac{\pi(\overrightarrow{ALG})}{\pi(\overrightarrow{OPT})}=\frac{C(\overrightarrow{ALG})+\sum_{i\in}v_{i}(M_{i})-v_{i}(ALG_{i})}{C(\overrightarrow{OPT})+\sum_{i\in N}v_{i}(M_{i})-v_{i}(OPT_{i})}=\frac{1}{0+v_{n}(\{n\})}=\frac{1}{\frac{1}{n}+\varepsilon}=\frac{n}{1+n\varepsilon}$.
For $\varepsilon=\frac 1 n$, the approximation ratio of the mechanism is $\Omega(n)$.
\end{proof}

\begin{proposition}
The Potential Mechanism is $\Omega(n)$\emph{-budget-balanced} for
submodular cost function and unit demand valuations.
\end{proposition}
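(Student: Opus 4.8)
The statement is a lower bound, so the plan is to exhibit a single bad instance: $n$ players, a monotone normalized \emph{submodular} cost function $C$, and \emph{unit-demand} valuations $v_1,\dots,v_n$, on which the potential mechanism of (\ref{eq:mechanismVCG}) outputs an allocation $\overrightarrow{ALG}$ with $\sum_i p_i=\Omega(n)\cdot C(\overrightarrow{ALG})$. Since cost recovery (Lemma \ref{lem:PotMech-no-deficit}) already gives $\sum_i p_i\ge C(\overrightarrow{ALG})$, only the lower estimate on the ratio needs work. The cleanest way to organize the computation is via the rearrangement of $\sum_i p_i$ behind the proof of Lemma \ref{lem:BB}: since $\overrightarrow{ALG}-ALG_i$ is one of the allocations competing for $\overrightarrow{ALG^{-i}}$, each $p_i$ equals its ``Shapley part'' $P_C(\overrightarrow{ALG})-P_C(\overrightarrow{ALG}-ALG_i)$ plus a nonnegative ``re-optimization gain'', and summing the Shapley parts gives exactly $C(\overrightarrow{ALG})$, so
\[
\sum_{i}p_{i}=C(\overrightarrow{ALG})+\sum_{i=1}^{n}\Bigl(\Bigl[\textstyle\sum_{j\in N}v_{j}(ALG^{-i}_{j})-P_{C}(\overrightarrow{ALG^{-i}})\Bigr]-\Bigl[\textstyle\sum_{j\ne i}v_{j}(ALG_{j})-P_{C}(\overrightarrow{ALG}-ALG_{i})\Bigr]\Bigr).
\]
Hence it suffices to design the instance so that, for each $i$, the bracketed difference — what the remaining $n-1$ players gain by re-optimizing once player $i$ is removed — is of order $C(\overrightarrow{ALG})$.

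Concretely I would proceed as follows. (1) Fix the instance: give each player a small set $M_i$ (two items suffice, a ``shared'' item and a ``private'' item), choose the $v_i$'s unit-demand with the shared item only slightly more valuable than the private one, and take $C$ to be a nonnegative combination of coverage-type terms (a moderate-weight term ``covered'' by any shared item, plus tiny private per-item terms) — this $C$ is submodular because a nonnegative combination of monotone submodular coverage functions is submodular. (2) Evaluate $P_C$ on the handful of allocations that actually arise — the all-shared allocation and the allocations in which one player is dropped and the rest switch to their private items — using the explicit formula $P_{C}(\vec S)=\sum_{I\subseteq N}\frac{C(\cup_{i\in I}S_i)}{|I|\binom{n}{|I|}}$ of Proposition \ref{prop:uniquePc}; for coverage terms this collapses to harmonic numbers $\mathcal H_k$, so the whole computation is bookkeeping. (3) Check that the affine maximizer (\ref{eq:mechanismVCG}) indeed selects $\overrightarrow{ALG}=$ the all-shared allocation, with $C(\overrightarrow{ALG})$ small, while for every $i$ the constrained maximizer $\overrightarrow{ALG^{-i}}$ makes the remaining players abandon their shared items; here one uses that $P_C$ is monotone and submodular (Claim \ref{prop:C-submodular->PSubmodular}) to pin down the maximizers. (4) Substitute $\overrightarrow{ALG}$ and the $\overrightarrow{ALG^{-i}}$'s into the VCG payment formula, sum, and read off the $\Omega(n)\cdot C(\overrightarrow{ALG})$ bound.

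The main obstacle is step (3): arranging that the departure of \emph{every} single player — not merely of one designated ``pivot'' — triggers a re-optimization worth $\Theta(C(\overrightarrow{ALG}))$ for the others. This is delicate precisely because $C$ submodular forces $P_C$ submodular (Claim \ref{prop:C-submodular->PSubmodular}), which tends to make one player among many irrelevant to the others' best response; for single-good excludable public good the best overcharge obtainable this way is only the $\Theta(\log n)$ used in Theorem \ref{thm:VCG-Based-Lower-Bound}, because each shared good contributes its weight to $C(\overrightarrow{ALG})$ while yielding only a $\mathcal H_k$-sized potential overhead. The construction therefore has to exploit the \emph{multi-good, unit-demand} structure — several distinct shared goods whose per-capita potential cost sits on a knife's edge against the private alternative, coupled across players so that losing any one participant tips that balance for a $\Theta(1)$-sized block of the others — in order to amplify the $\Theta(\log n)$ phenomenon all the way to $\Theta(n)$. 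Tuning the value gaps and the term weights so that this threshold lands correctly for all $n$ removals simultaneously is where the bulk of the calculation lies.
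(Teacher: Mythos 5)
Your setup is right -- exhibiting one bad instance and tracking where the excess over $C(\overrightarrow{ALG})$ comes from (each $p_i$ is its Shapley part plus the nonnegative gain from re-optimizing $\overrightarrow{ALG^{-i}}$ versus $\overrightarrow{ALG}-ALG_i$) -- but the proof has a genuine gap: the instance itself, which is the entire content of the proposition, is never constructed, and you yourself flag the decisive step as unresolved (``tuning the value gaps and the term weights\dots is where the bulk of the calculation lies''). Worse, the direction you sketch (two items per player, a ``shared'' versus ``private'' tradeoff, coverage-type costs) is exactly the regime where, as you correctly worry, submodularity of $P_C$ caps the overcharge at the $\Theta(\log n)$ of the excludable-public-good analysis; each player's individual knife-edge decision is essentially decoupled from any single other player's presence, so one removal cannot move a $\Theta(1)$ fraction of the cost.

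The paper's construction works on a different principle: coordination rather than per-player tradeoffs. Each player $i$ has $n-1$ items $m_j^i$, one labeled by every other player $j$, and the cost is $1$ exactly when some player $j$ is served nothing while everyone else takes only their $j$-labeled item; in every other nonempty case the cost is $2$. With unit-demand values $L>2\mathcal H_n$ every player must be served, so the cheap ``agreement'' configurations are unreachable by $\overrightarrow{ALG}$ (agreement requires excluding someone), and $C(\overrightarrow{ALG})=2$ with $P_C(\overrightarrow{ALG})=1+\mathcal H_{n-1}$. But for every $i$, the allocation $\overrightarrow{ALG^{-i}}$ lets the remaining $n-1$ players all switch to their $i$-labeled items, dropping the cost to $1$ and the potential to $\mathcal H_{n-1}$; since all served bundles have identical value $L$, the value terms cancel and $p_i=P_C(\overrightarrow{ALG})-P_C(\overrightarrow{ALG^{-i}})=1$ for every $i$, giving total payments $n$ against cost $2$. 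This ``cheap only if someone specific is excluded'' gadget is precisely what lets every one of the $n$ removals trigger a $\Theta(C(\overrightarrow{ALG}))$-sized re-optimization simultaneously, which is the idea your proposal is missing.
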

\begin{proof}
Consider $n$ players where for each $i$, $M_{i}=\{m_{1}^{i},..,m_{i-1}^{i},m_{i+1}^{i},...,m_{n}^{i}\}$.
The cost function $C:2^{M_{1}}\times\cdots\times2^{M_{n}}\to\{0,1,2\}$
is defined as follows: 
\begin{itemize} 
\item $C(\vec{S})=1$ if $\vec{S}\neq\vec{\emptyset}$ and there exists
$j$ such that $S_{j}=\emptyset$ and for every player $i\neq j$,
$S_{i}\subseteq\{m_{j}^{i}\}$. I.e., $n-1$ players ``agree'' on an index $j$ and either take the corresponding item or nothing, and player $j$ gets nothing. 
\item $C(\emptyset)=0$. In any other case, $C(\vec{S})=2$.
\end{itemize}
Clearly, $C$ is monotone. One can verify that it is also submodular.


Consider the following valuation profile: every player is unit
demand with $v_{i}(S_{i})=L$ for every $\emptyset\neq S_{i}\subseteq M_{i}$,
where $L>2\cdot\mathcal{H}_{n}$, and $v_{i}(\emptyset)=0$. 

By proposition \ref{prop:P<HnC}, for submodular function $C$ it holds
that $P_{C}(\vec{S})\le\mathcal{H}_{n}\cdot C(\vec{S})$ for every
$\vec{S}$, therefore $P_{C}(\vec{S})\le2\cdot\mathcal{H}_{n}<L$.
Hence, in every allocation $\vec{S}$ that maximizes (\ref{eq:mechanismVCG})
every player receives at least one service. 
By monotonicity of $C$, $P_{C}$ is monotone as well (proposition
\ref{prop:C-submodular->PSubmodular}), therefore there is an allocation
that maximizes (\ref{eq:mechanismVCG}) where no player gets more than one
service. Our goal now is to find the allocation that
allocates every player one service and minimizes the potential function
$P_{C}(\vec{S})$.

Recall that by \cite{hart1989potential} and Section \ref{sec:function}, given an allocation $\vec S= (S_1,\ldots, S_n)$, the marginal value $P_C(\vec S)-P_C(\vec S-S_i)$ equals the Shapley value\footnote{Recall that the Shapley value of a player is its expected marginal contribution to the cost in a random permutation.} of player $i$ in the cooperative game with $N$ players and cost function $C'$, where the cost of serving a set $T$ is $C'(T)=C(\cup_{i\in T}S_i)$. For each player $i$ let $Shapley_i(\{1,...,i\})$ denote the Shapley value of player $i$ in the coalition $\{1,...,i\}$. We get that $P_C(\vec S)=\Sigma_{i=1}^nShapley_i(\{1,...,i\})$.



Fix some allocation $\vec{S}$ where each player gets one service. For every $i$, we have that $C'(\{i\})= 1$. Therefore $Shapley_i(\{1,...,i\})\geq 1/i$. When all players agree on the index we furthermore have that $Shapley_i(\{1,...,i\})= 1/i$, since only the first player in the permutation has a positive marginal contribution to the cost.

If not all players agree on the permutation, then the marginal contribution to the cost of the first player who disagrees on the index is $1$. Therefore, both this player and the first player has a marginal contribution of $1$. In this case we thus have $Shapley_i(\{1,...,i\})> 1/i$. Hence, the allocation in which every player gets at least one service that minimizes the potential function is the one in which $n-1$ players agree on an index and one player does not agree, e.g., $\overrightarrow{ALG}=\left\{\left\{m_n^1\right\},\left\{m_n^2\right\},\ldots,\left\{m_{n}^{n-1}\right\},\left\{m_1^n\right\}\right\}$. We note that $P_C(\overrightarrow{ALG})=\sum_{i=1}^n Shapley_i(\{1,2,...,i\}) = \sum_{i=1}^{n-1} \frac 1 i + 1 = 1+\mathcal{H}_{n-1}$. Similarly, for every $i$ the allocation that minimizes the potential function $P_C$ when player $i$ does not get a service and any other player gets exactly one service is $\overrightarrow{ ALG^{-i}}=\left\{ \left\{ m_{i}^{1}\right\} ,\left\{ m_{i}^{2}\right\} ,...,\left\{ m_{i}^{i-1}\right\} ,\emptyset,\left\{ m_{i}^{i+1}\right\} ,...,\left\{ m_{i}^{n}\right\} \right\}$, for which $P_C(\overrightarrow{ALG^{-i}})=\sum_{i=1}^{n-1} \frac{1}{i}=\mathcal{H}_{n-1}$.

Therefore, the payment of player $i$ is given by: 
\begin{eqnarray*}
p_{i}&=&\sum_{j\in N\backslash\{i\}}v_{i}(ALG_{j}^{-i})-P_{C}(\overrightarrow{ALG^{-i}})-\left[\sum_{j\in N\backslash\{i\}}v_{j}(ALG_{j})- P_{C}(\overrightarrow{ALG})\right] \\
&=&(n-1)\cdot L - P_{C}(\overrightarrow{ALG^{-i}}) - \left[ (n-1)\cdot L - P_C(\overrightarrow{ALG}) \right] \\ &=& P_{C}(\overrightarrow{ALG})-P_{C}(\overrightarrow{ALG^{-i}})=\mathcal{H}_{n-1} + 1 - \mathcal{H}_{n-1}=1
\end{eqnarray*}
In total, the sum of payments is $n$, whereas the cost is $2$. Therefore, the potential mechanism is at least $\frac{n}{2}$-budget-balanced in this case.
\end{proof}

\section{Groupstrategyproof Mechanisms}\label{sec-gsp}

Recall that VCG-based mechanisms, such as the potential mechanism,
are not group-strategyproof. That is, two or more players can collude and increase their profit. For example, consider the potential mechanism
and the excludable public good problem. Note that two players with
values $v_{1},v_{2}=\frac{1}{2}+\varepsilon$ do not get the service,
since $P_{C}(\{1\})=P_{C}(\{2\})=1$ and $P_{C}(\{1,2\})=1.5$. However,
if they both misreport $v_{1},v_{2}=1+\varepsilon$, they
served and each pays $p_{i}=\frac{1}{2}$, which provides a profit of
$\varepsilon>0$ for both of them.

In the single parameter setting, many papers study groupstrategyproof
mechanisms (e.g., \cite{bleischwitz2008group,mehta2007beyond,moulin1999incremental,moulin2001strategyproof}).
Constructing groupstrategyproof mechanisms for a multi-parameter setting is even more
challenging. In this section we present and analyze some simple groupstrategyproof
mechanisms for combinatorial cost sharing. We start with recalling some standard definitions:

\emph{Weakly groupstrategyproofness} requires that no subset of players can report a different valuation, and strictly increase the profit for every player in the subset:
\begin{itemize}
\item A combinatorial cost-sharing mechanism is \emph{weakly groupstrategyproof
}(WGSP) if for every subset of players $K\subseteq N$ and every valuation
profile $\vec{v}^{\prime}$ it holds that if $v_{i}(S_{i})-p_{i}\le v_{i}(S_{i}^{\prime})-p'_{i}$
for every $i\in K$ then $v_{j}(S_{j})-p_{j}=v_{j}(S_{j}^{\prime})-p_{j}^{\prime}$
for some $j\in K$, where $(\vec{S},\vec{p})$ and $(\vec{S}^{\prime},\vec{p}^{\prime})$
are the outputs of the mechanism for $\vec{v}$ and for the instance that is obtained from $\vec v$ by setting $v_i=v'_i$ for every $i\in K$, respectively.
\end{itemize}
The strongest notion is \emph{groupstrategyproofness} in which every misreporting by a subset of players which strictly
increases the profit of one of the members, strictly decreases the
profit of another member:
\begin{itemize}
\item A combinatorial cost-sharing mechanism is \emph{groupstrategyproof
}(GSP) if for every subset of players $K\subseteq N$ and every valuation
profile $\vec{v}^{\prime}$ it holds that if $v_{i}(S_{i})-p_{i}\le v_{i}(S_{i}^{\prime})-p_{i}^{\prime}$
for every $i\in K$ then $v_{i}(S_{i})-p_{i}=v_{i}(S_{i}^{\prime})-p_{i}^{\prime}$
for every $i\in K$, where $(\vec{S},\vec{p})$ and $(\vec{S}^{\prime},\vec{p}^{\prime})$
are the outputs of the mechanism for $\vec{v}$ and for the instance that is obtained from $\vec v$ by setting $v_i=v'_i$ for every $i\in K$, respectively. 
\end{itemize}
Note that every mechanism that is GSP is also WGSP, and that every
mechanism that is WGSP is also truthful.

\subsection{The Sequential Mechanism}

Here we restate the\emph{ Sequential Mechanism }\cite{moulin1999incremental}
and identify two cases in which it is groupstrategyproof,
\emph{1-}budget-balance, and provides an approximation ratio of
\emph{n} to the social cost. Similarly to Section \ref{sec:The-Mechanism}, the sequential
mechanism provides the mentioned guarantees when the cost function is submodular and the valuations are supermodular
and when the cost function is player-wise symmetric and the valuations are symmetric as well.

\subsection*{The Sequential Mechanism}
\begin{enumerate}
\item Set $ALG_i = \emptyset$ for every player $i\in N$.
\item For player $i=1$ to $n$:

\begin{enumerate}
\item Let $\mathcal{A}_{i}=\left\{ A_{i}:A_{i}\in\arg\max_{S_{i}\subseteq M_{i}}v_{i}(S_{i})-\left[C(\overrightarrow{ALG}\cup S_{i})-C(\overrightarrow{ALG})\right]\right\} $.
\item \label{enu:Allocate-to-player}Allocate to player $i$ a bundle $ALG_{i}\in\mathcal{A}_i$
with the maximal size in $\mathcal{A}_{i}$. 
\item Charge player $i$:  $p_{i}=C(ALG_1,ALG_2,\ldots,ALG_{i-1},ALG_i)-C(ALG_1,ALG_2,\ldots,ALG_{i-1})$.
\end{enumerate}
\end{enumerate}
\begin{theorem}
\label{thm:SeqMec}
Suppose that either:
\begin{enumerate}
\item \label{enu:Condition1Seq}The cost function $C$ is submodular and
the valuation functions are supermodular.
\item \label{enu:Condition2Seq}The cost function $C$ is submodular and
player-wise symmetric and the valuation functions are symmetric.
\end{enumerate}
Then the Sequential Mechanism is group-strategyproof, 1-budget-balanced,
provides an $n$-approximation to the social cost and satisfies IR
and NPT.
\end{theorem}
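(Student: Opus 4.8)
Write $\overrightarrow{ALG}^{(t)}=(ALG_1,\dots,ALG_t,\emptyset,\dots,\emptyset)$ for the allocation right after player $t$ is processed, so $\overrightarrow{ALG}^{(0)}=\vec\emptyset$ and $\overrightarrow{ALG}^{(n)}=\overrightarrow{ALG}$, and let $\Delta^t(S):=C(\overrightarrow{ALG}^{(t-1)}\cup S)-C(\overrightarrow{ALG}^{(t-1)})$ be the incremental cost player $t$ faces at his turn. Three of the four guarantees are immediate. The payments telescope: $\sum_i p_i=\sum_i\big(C(\overrightarrow{ALG}^{(i)})-C(\overrightarrow{ALG}^{(i-1)})\big)=C(\overrightarrow{ALG})$, which is $1$-budget balance. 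By monotonicity of $C$, $p_i=\Delta^i(ALG_i)\ge 0$, which is NPT. And since $ALG_i$ maximizes $v_i(\cdot)-\Delta^i(\cdot)$ while $S_i=\emptyset$ is always available, $v_i(ALG_i)-p_i=v_i(ALG_i)-\Delta^i(ALG_i)=\max_{S_i\subseteq M_i}\big(v_i(S_i)-\Delta^i(S_i)\big)\ge 0$, which is IR. None of these three uses any hypothesis on $C$ or on the valuations, and plain strategyproofness will follow from the groupstrategyproofness argument below.

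\textbf{The $n$-approximation.} Let $\overrightarrow{OPT}$ minimize the social cost. Comparing player $i$'s choice $ALG_i$ with the alternative $OPT_i$ gives $v_i(ALG_i)-v_i(OPT_i)\ge \Delta^i(ALG_i)-\Delta^i(OPT_i)=C(\overrightarrow{ALG}^{(i)})-C(\overrightarrow{ALG}^{(i-1)}\cup OPT_i)$, and subadditivity of $C$ (which follows from submodularity) bounds the subtracted term by $C(\overrightarrow{ALG}^{(i-1)})+C(OPT_i)$. Summing over $i$ the costs telescope, so $C(\overrightarrow{ALG})\le \sum_i\big(v_i(ALG_i)-v_i(OPT_i)\big)+\sum_i C(OPT_i)\le \sum_i\big(v_i(ALG_i)-v_i(OPT_i)\big)+n\cdot C(\overrightarrow{OPT})$, the last step because $C(OPT_i)\le C(\overrightarrow{OPT})$ by monotonicity. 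Adding $\sum_i\big(v_i(M_i)-v_i(ALG_i)\big)$ to both sides and cancelling the $v_i(ALG_i)$ terms yields $\pi(\overrightarrow{ALG})\le n\cdot C(\overrightarrow{OPT})+\sum_i\big(v_i(M_i)-v_i(OPT_i)\big)\le n\cdot\pi(\overrightarrow{OPT})$, since both summands of $\pi(\overrightarrow{OPT})$ are nonnegative. Again only subadditivity of the cost is used; the restrictions on the valuations play no role here.

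\textbf{Groupstrategyproofness — where the work is.} Fix a coalition $K$ and a report profile that agrees with $\vec v$ off $K$; let $\overrightarrow{ALG}'$, $\vec p'$ be the resulting outcome, let $\overrightarrow{B}_t$ be the deviated partial allocation after step $t$ (so $\overrightarrow{B}_n=\overrightarrow{ALG}'$), and let $\Gamma^t$ be the corresponding incremental cost faced by player $t$. I would prove by induction on $t$ the invariant: (i) $\overrightarrow{B}_t\subseteq\overrightarrow{ALG}^{(t)}$ coordinatewise, and (ii) every truthful player $j\le t$ is weakly worse off in the deviated run, while every coalition player $j\le t$ whose surplus weakly increased is in fact indifferent. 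Part (ii) at $t=n$ is exactly GSP (if every member of $K$ weakly gains then every member is indifferent), and $|K|=1$ recovers plain strategyproofness. For the inductive step: invariant (i) at time $t-1$ together with submodularity of $C$ gives that player $t$ faces pointwise larger incremental costs, $\Gamma^t(S)\ge\Delta^t(S)$ for all $S\subseteq M_t$, so player $t$'s best-response surplus $\max_S\big(v_t(S)-\Gamma^t(S)\big)$ can only have dropped. If $t\notin K$ that surplus is exactly his realized utility, which gives (ii) for $t$; if $t\in K$, his realized utility is at most his best-response surplus (the payment equals the incremental cost of the bundle actually received), hence at most his truthful surplus, so a weak gain forces indifference. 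It remains to propagate $ALG_t'\subseteq ALG_t$. Here I use the tie-breaking rule plus monotone comparative statics: the objective $v_t(\cdot)-\Delta^t(\cdot)$ is supermodular in the supermodular-valuations case, and depends only on $|S|$ in the player-wise-symmetric case (so there, exactly as in Section~\ref{sec:The-Mechanism}, one first switches to the canonical ``take a prefix of a fixed order'' tie-breaking, which changes neither values nor payments and hence does not affect the statement). In either case submodularity of $C$ makes $\Gamma^t-\Delta^t$ monotone nondecreasing, which is the increasing-differences condition needed to conclude (via a Topkis-type argument in the supermodular case, and a one-dimensional version in the symmetric case) that the largest maximizer of $v_t(\cdot)-\Gamma^t(\cdot)$ is contained in the largest maximizer of $v_t(\cdot)-\Delta^t(\cdot)$; since a weak gain also forces $ALG_t'$ to itself be a maximizer of the latter, we obtain $ALG_t'\subseteq ALG_t$, closing the induction.

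\textbf{Main obstacle.} Everything except groupstrategyproofness follows from monotonicity and subadditivity of $C$ alone. The delicate point is the final step of the induction above: controlling how player $t$'s allocated bundle can change when the incremental costs he faces go up. This is precisely where the structural hypotheses (supermodular valuations, or player-wise symmetric cost with symmetric valuations) and the choice of tie-breaking rule are genuinely needed — the supermodular case through the lattice structure of the maximizers of a supermodular function, and the symmetric case through the reduction of the best-response problem to a single cardinality parameter.
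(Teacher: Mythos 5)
Your proposal is correct and follows essentially the same route as the paper: identical telescoping/monotonicity arguments for budget balance, NPT and IR, the same chain of inequalities (player $i$'s greedy choice versus $OPT_i$, then subadditivity and monotonicity) for the $n$-approximation, and for groupstrategyproofness an induction over players driven by (a) submodularity of $C$ to compare the marginal prices a player faces in the truthful versus deviated runs and (b) the lattice structure of the profit maximizers together with the maximal-size tie-breaking (with the symmetric case reduced to cardinalities via a prefix tie-breaking, just as the paper does for the potential mechanism). The one place where you go beyond the paper's text is the comparative-statics step: the paper's Claim \ref{union-of-maximum-is-maximum} only shows that the maximizer set is union-closed for a \emph{fixed} cost base, and its inductive step for players $i>1$ (where the deviated and truthful runs have different bases) is left as ``similar''; your Topkis-type lemma comparing maximizers of $v_t-\Gamma^t$ and $v_t-\Delta^t$ when $\Gamma^t-\Delta^t$ is nondecreasing (which indeed follows from submodularity once $\vec B_{t-1}\subseteq\overrightarrow{ALG}^{(t-1)}$) makes that missing step explicit, and your observation that the weak-gain hypothesis forces a deviating member's bundle to be a maximizer of the truthful objective correctly closes the induction. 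So this is the paper's argument, written out more carefully rather than a genuinely different one.
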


\begin{lemma}
\label{lem:seq-NPT-IR-BB}For every monotone and normalized cost function
and normalized valuation function the sequential Mechanism is 1-budget-balanced
and satisfies IR and NPT.
\end{lemma}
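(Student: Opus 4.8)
The plan is to verify the three properties — $1$-budget balance, NPT, and IR — directly from the definition of the mechanism, using only monotonicity and normalization of $C$ together with the fact that each player's bundle is chosen to (weakly) beat the empty bundle against the current marginal cost.

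First, for budget balance I would exploit the telescoping structure of the payments. Writing $C(ALG_1,\ldots,ALG_k)$ for the cost of the allocation in which players $1,\ldots,k$ receive $ALG_1,\ldots,ALG_k$ and the remaining players receive $\emptyset$, the mechanism sets $p_i = C(ALG_1,\ldots,ALG_i) - C(ALG_1,\ldots,ALG_{i-1})$. Summing over $i = 1,\ldots,n$ and using $C(\emptyset) = 0$ gives $\sum_{i\in N} p_i = C(ALG_1,\ldots,ALG_n) - C(\emptyset) = C(\overrightarrow{ALG})$, so the mechanism is exactly $1$-budget balanced (in particular it covers the cost and never overcharges). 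For NPT, each $p_i$ is nonnegative because passing from $(ALG_1,\ldots,ALG_{i-1},\emptyset,\ldots)$ to $(ALG_1,\ldots,ALG_{i-1},ALG_i,\emptyset,\ldots)$ only adds the set $ALG_i\subseteq M_i$, so monotonicity of $C$ gives $C(ALG_1,\ldots,ALG_i) \ge C(ALG_1,\ldots,ALG_{i-1})$; and if $ALG_i = \emptyset$ the two allocations coincide and $p_i = 0$.

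For IR, I would fix player $i$ and let $\overrightarrow{ALG}$ denote the partial allocation at the moment player $i$ is processed, i.e.\ $ALG_1,\ldots,ALG_{i-1}$ already fixed and $ALG_i = \cdots = ALG_n = \emptyset$. The objective maximized in step (a) is $v_i(S_i) - \bigl(C(\overrightarrow{ALG}\cup S_i) - C(\overrightarrow{ALG})\bigr)$, and for the chosen bundle this equals $v_i(ALG_i) - p_i$ by the definition of the payment. Since $ALG_i$ is a maximizer and $S_i = \emptyset$ is feasible, $v_i(ALG_i) - p_i \ge v_i(\emptyset) - \bigl(C(\overrightarrow{ALG}) - C(\overrightarrow{ALG})\bigr) = 0$ using $v_i(\emptyset) = 0$, hence $p_i \le v_i(ALG_i)$, which is IR.

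I do not expect any real obstacle here — the statement is essentially bookkeeping. The only point requiring slight care is notational: making precise that $C(ALG_1,\ldots,ALG_{i-1})$ means the cost of the allocation in which all not-yet-processed players get $\emptyset$, and checking that the ``$\overrightarrow{ALG}$'' appearing in step (a) is this partial allocation (not the final output), so that the quantity player $i$ optimizes genuinely coincides with $v_i(ALG_i) - p_i$. Once this is set up, all three claims are one-line consequences of monotonicity, normalization, and local optimality.
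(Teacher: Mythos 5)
Your proof is correct and follows essentially the same route as the paper's: budget balance from the telescoping of the marginal-cost payments together with $C(\emptyset)=0$, NPT from monotonicity of $C$, and IR from the fact that the empty bundle (with zero marginal cost) is always available, so the chosen bundle's utility $v_i(ALG_i)-p_i$ is nonnegative. The extra bookkeeping you add about which partial allocation $\overrightarrow{ALG}$ refers to is just a more explicit rendering of the same argument.
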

\begin{proof}
The mechanism satisfies individual rationality since a player can
choose the empty bundle with zero payment, hence he never pays more
than the value of the bundle allocated to him. By the monotonicity
of the cost function, the payments are always positive. Moreover,
since $C$ is normalized, the sum of payments is $C(\overrightarrow{ALG})$.
\end{proof}

In order to prove groupstrategyproofness, we first prove claim \ref{union-of-maximum-is-maximum}
which states that for the combinatorial cost sharing with a submodular
cost function and supermodular valuations, the union of bundles which
maximize the profit also maximizes the profit.
\begin{claim}\label{union-of-maximum-is-maximum}
Assume that the cost function $C$ is submodular. Consider player $i$ with supermodular valuation $v_i$, and consider an allocation $\overrightarrow{ALG}$ with $ALG_j=\emptyset$ for every $j\geq i$. For each bundle $S\subseteq M_i$, let $p_{S}$ denote the price of the bundle which is its marginal cost, i.e., $p_{S}=C(\overrightarrow{ALG}\cup S)- C(\overrightarrow{ALG})$. Let $S_{1},S_{2}$
be two bundles that maximize player $i$'s profit, i.e., $S_{1},S_{2}\in\arg\max_{S}v_i(S)-p_{S}$.
Then $S_{1}\cup S_{2}$ maximizes the profit as well, that is, $S_{1}\cup S_{2}\in\arg\max_{S}v_i(S)-p_{S}$.
\end{claim}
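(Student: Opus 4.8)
The plan is to show that $v_i(S_1\cup S_2) - p_{S_1\cup S_2} \ge v_i(S_1) - p_{S_1}$ by combining supermodularity of $v_i$ with submodularity of the price function. First I would record the two relevant inequalities. Since $v_i$ is supermodular and $S_1\cap S_2 \subseteq S_1$,
\[
v_i(S_1\cup S_2) - v_i(S_2) \ge v_i(S_1) - v_i(S_1\cap S_2).
\]
For the prices, note that $p_S = C(\overrightarrow{ALG}\cup S) - C(\overrightarrow{ALG})$ is, up to the additive constant $C(\overrightarrow{ALG})$, just the submodular function $C$ restricted to bundles containing $\overrightarrow{ALG}$; hence $p_{(\cdot)}$ is itself submodular on subsets of $M_i$ with $p_\emptyset = 0$. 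Submodularity of $p$ applied to $S_1, S_2$ gives
\[
p_{S_1\cup S_2} + p_{S_1\cap S_2} \le p_{S_1} + p_{S_2}, \qquad\text{i.e.}\qquad p_{S_1\cup S_2} - p_{S_2} \le p_{S_1} - p_{S_1\cap S_2}.
\]

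Next I would subtract the price inequality from the value inequality to obtain
\[
\bigl[v_i(S_1\cup S_2) - p_{S_1\cup S_2}\bigr] - \bigl[v_i(S_2) - p_{S_2}\bigr] \ge \bigl[v_i(S_1) - p_{S_1}\bigr] - \bigl[v_i(S_1\cap S_2) - p_{S_1\cap S_2}\bigr].
\]
Write $\mathrm{prof}(S) = v_i(S) - p_S$ for the profit of bundle $S$. The displayed inequality reads $\mathrm{prof}(S_1\cup S_2) - \mathrm{prof}(S_2) \ge \mathrm{prof}(S_1) - \mathrm{prof}(S_1\cap S_2)$. Since $S_1$ and $S_2$ both maximize the profit, $\mathrm{prof}(S_1) = \mathrm{prof}(S_2)$, and by maximality $\mathrm{prof}(S_1\cap S_2) \le \mathrm{prof}(S_1)$. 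Therefore the right-hand side is $\ge 0$, which gives $\mathrm{prof}(S_1\cup S_2) \ge \mathrm{prof}(S_2) = \max_S \mathrm{prof}(S)$, so $S_1\cup S_2$ is also a maximizer, as claimed.

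The only point that requires any care — and the step I would flag as the main (minor) obstacle — is the verification that $S \mapsto p_S$ inherits submodularity from $C$: one must check that fixing the coordinates outside $M_i$ at $\overrightarrow{ALG}$ (recall $ALG_j = \emptyset$ for $j \ge i$, but the earlier coordinates are nonempty) and shifting by a constant preserves submodularity, which is immediate since submodularity is closed under restriction to a sublattice and under adding a constant. Everything else is the standard two-line supermodular/submodular cancellation, together with the trivial observation $\mathrm{prof}(S_1\cap S_2) \le \max_S \mathrm{prof}(S)$.
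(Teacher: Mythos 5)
Your proposal is correct and is essentially the paper's own argument in a cleaner packaging: the paper chains exactly your three ingredients -- supermodularity of $v_i$ applied to the pair $S_1,S_2$, the optimality of $S_2$ against $S_1\cap S_2$, and submodularity of $C$ at $\overrightarrow{ALG}\cup S_1$, $\overrightarrow{ALG}\cup S_2$ (which is precisely your statement that $p$ is submodular) -- into one inequality chain showing $v_i(S_1\cup S_2)-p_{S_1\cup S_2}\ge v_i(S_1)-p_{S_1}$. Your observation that the contraction $S\mapsto C(\overrightarrow{ALG}\cup S)-C(\overrightarrow{ALG})$ inherits submodularity is the right (and routine) justification, so the proof stands as written.
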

\begin{proof}
Let $T=S_{1}\cap S_{2}$ and denote by $U_{1}$ and $U_{2}$ the unique
elements in each of the bundles: $U_{1}=S_{1}\backslash T$, $U_{2}=S_{2}\backslash T$.
The following holds:

\begin{eqnarray*}
v_{i}(S_{1}\cup U_{2})-v_{i}(S_{1}) & = & v_{i}(U_{1}\cup T\cup U_{2})-v_{i}(U_{1}\cup T)\\
 & \ge & v_{i}(T\cup U_{2})-v_{i}(T)\\
 & \ge & \left[C(\overrightarrow{ALG}\cup T\cup U_{2})-C(\overrightarrow{ALG})\right]-\left[C(\overrightarrow{ALG}\cup T)-C(\overrightarrow{ALG})\right]\\
 & \ge & \left[C(\overrightarrow{ALG}\cup T\cup U_{1}\cup U_{2})-C(\overrightarrow{ALG})\right]-\left[C(\overrightarrow{ALG}\cup T\cup U_{1})-C(\overrightarrow{ALG})\right]\\
 & = & \left[C(\overrightarrow{ALG}\cup S_{1}\cup U_{2})-C(\overrightarrow{ALG})\right]-\left[C(\overrightarrow{ALG}\cup S_{1})-C(\overrightarrow{ALG})\right]
\end{eqnarray*}

The first inequality is due to supermodularity of $v_i$. The second
inequality is by the fact that $S_{2}$ maximizes $v(S)-p_{S}$, therefore
$v(S_{2})-\left[C\left(\overrightarrow{ALG}\cup S_{2}\right)-C(\overrightarrow{ALG})\right]\ge v_{i}(T)-\left[C(\overrightarrow{ALG}\cup T)-C(\overrightarrow{ALG})\right]$.
The third inequality is by submodularity of $C$. Note that $S_{1}\cup U_{2}=S_{1}\cup S_{2}$,
hence we conclude that 
\begin{eqnarray*}
v(S_{1}\cup S_{2})-\left[C(\overrightarrow{ALG}\cup S_{1}\cup S_{2})-C(\overrightarrow{ALG})\right] & \ge & v(S_{1})-\left[C(\overrightarrow{ALG}\cup S_{1})-C(\overrightarrow{ALG})\right]
\end{eqnarray*}
therefore $S_{1}\cup S_{2}$ maximizes the profit as well.\end{proof}
\begin{lemma}
In each one of the settings (\ref{enu:Condition1Seq}) and 
(\ref{enu:Condition2Seq}) of theorem \ref{thm:SeqMec} the sequential mechanism is groupstrategyproof. 
\end{lemma}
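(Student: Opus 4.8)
The goal is to show the Sequential Mechanism is groupstrategyproof in settings (\ref{enu:Condition1Seq}) and (\ref{enu:Condition2Seq}). The natural approach is to verify that the mechanism fits into the framework of offline-concave / incremental mechanisms of Moulin \cite{moulin1999incremental}: a mechanism that processes players in a fixed order, where player $i$'s options and prices depend only on the bundles already allocated to $1,\dots,i-1$ and are ``consistent'' under taking unions, is groupstrategyproof. Concretely I would argue directly: suppose a coalition $K$ deviates from $\vec v$ to $\vec v'$, and let $j$ be the \emph{first} player (in the processing order $1,\dots,n$) in $K$ whose profit strictly increases. The plan is to derive a contradiction by showing that at the moment player $j$ is processed, the allocation of players $1,\dots,j-1$ is identical under the truthful run and the deviating run, and therefore player $j$'s menu of (bundle, marginal-price) pairs is identical in both runs — so a truthful report cannot do worse than any misreport, contradicting the strict increase.

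The key steps, in order: First, establish the \emph{prefix-invariance} lemma: if players $1,\dots,i-1$ report truthfully, then by induction the greedy run up to stage $i$ is unaffected by what players $i,\dots,n$ report — this is immediate since the mechanism processes players sequentially and player $i$'s step reads only $\overrightarrow{ALG}$ restricted to coordinates $<i$. Second, for the chosen $j$ = first coalition member with strictly higher profit: every coalition member $i<j$ in $K$ has profit in the deviating run $\le$ their profit in the truthful run (by minimality of $j$, since the deviation is a group deviation — strictly more for none before $j$... actually I must handle ``$\le$'' carefully: members before $j$ have weakly-not-more profit by choice of $j$, but could have strictly less; that's fine). Here is where I need the structural claim: members $i < j$ who deviate cannot, in the truthful-prefix sense, have caused the prefix seen by $j$ to change — but that is not automatic, since a member $i<j$ deviating does change $ALG_i$. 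So the real argument must show that any member $i<j$ whose \emph{allocation} changed would have to gain, contradicting minimality of $j$; this uses Claim \ref{union-of-maximum-is-maximum} (union of profit-maximizers is a profit-maximizer) together with the size-maximal tie-breaking in step (\ref{enu:Allocate-to-player}) to pin down that, conditional on the same prefix, player $i$'s \emph{truthful} allocated bundle is uniquely the largest profit-maximizer, and any deviation either yields the same bundle (hence same prefix continues) or strictly smaller profit for $i$ — and a deviation yielding a different, necessarily non-size-maximal bundle for the same profit is impossible because the size-maximal one is unique once we know profit-maximizers are closed under union. Iterating this from $i=1$ upward shows the prefix at stage $j$ is unchanged, so player $j$ faces the same menu and a truthful report weakly dominates, contradicting strict gain.

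For setting (\ref{enu:Condition2Seq}) I would note that symmetry of $v_i$ and player-wise symmetry of $C$ make every step depend only on the \emph{sizes} of the previously allocated bundles, and the union-closure property of Claim \ref{union-of-maximum-is-maximum} is replaced by the one-dimensional analogue (the set of profit-maximizing cardinalities is an interval, hence closed under ``max'', and the size-maximal tie-break again picks a unique representative); the rest of the argument is verbatim. \textbf{Main obstacle.} The delicate point is handling members of the coalition that come \emph{before} $j$: I must rule out that such a player $i<j$ makes a lateral move — switching to a different profit-maximizing bundle of the \emph{same} profit but different size — which could perturb the prefix seen by $j$ without $i$ itself gaining. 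This is exactly what Claim \ref{union-of-maximum-is-maximum} plus the size-maximal tie-breaking rules out, since closure under union forces a unique size-maximal profit-maximizer that the truthful run always selects; making this induction airtight (including the edge case where $i$'s deviation gives the same profit and the same size-maximal bundle, in which case we just continue the induction) is the crux of the proof.
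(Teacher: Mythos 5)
There is a genuine gap at exactly the point you flag as the crux. Your plan hinges on the claim that a coalition member $i<j$ facing the \emph{same} prefix must either receive his truthful bundle $ALG_i$ again or suffer a strict loss, i.e.\ that ``lateral moves'' are impossible because the size-maximal profit-maximizer is unique. Uniqueness of the size-maximal profit-maximizer does not rule out the existence of \emph{other} (smaller) true-profit-maximizers, and a misreport can perfectly well cause the mechanism to allocate one of those: the mechanism applies its size-maximal tie-break to the \emph{reported} valuation, not the true one. Concrete counterexample inside both settings (excludable public good, so $C$ is submodular and player-wise symmetric and $v_1$ is trivially symmetric/supermodular): if $v_1(\{1\})=1$ and the prefix is empty, the truthful run allocates $\{1\}$ with true profit $0$, but reporting $v_1'\equiv 0$ yields $\emptyset$ with the same true profit $0$. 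This lateral move changes the prefix seen by player $2$ and raises his price from $0$ to $1$. So your induction ``the prefix at stage $j$ is unchanged, hence $j$ faces the same menu'' fails, and with it the whole contradiction.

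The paper's proof takes the step you are missing. It shows (via Claim~\ref{union-of-maximum-is-maximum} plus the size-maximal tie-break in setting~(\ref{enu:Condition1Seq}), and via symmetry in setting~(\ref{enu:Condition2Seq})) only the weaker containment statement: any deviation by an earlier member that does not strictly hurt him results in a bundle $A_i'\subseteq ALG_i$ (or of no larger size, WLOG a subset). It then invokes submodularity of $C$ \emph{across the two different prefixes}: for any subsequent choices $A_2',\ldots,A_j'$,
\[
C(A_1',A_2',\ldots,A_j')-C(A_1',\ldots,A_{j-1}')\;\ge\;C(ALG_1,A_2',\ldots,A_j')-C(ALG_1,A_2',\ldots,A_{j-1}'),
\]
so shrinking an earlier member's bundle can only weakly raise every later player's price, and therefore can never strictly benefit the later coalition member $j$; the deviator himself cannot gain since $ALG_i$ already maximizes his profit given the prefix. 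Your argument never uses submodularity of $C$ at this comparison point, which is precisely where it is needed. (A minor additional slip: in your handling of members before $j$, under the paper's GSP definition a violating coalition has every member weakly better off, so those members' profits are exactly their truthful maxima --- they cannot be ``strictly less''; but this does not affect the main gap.)
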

\begin{proof}
We denote by $\overrightarrow{ALG}=(ALG_{1},\ldots,ALG_{n})$ and by $p_{1},\ldots,p_{n}$
the output of the mechanism given the valuation profile $\vec{v}$.
Consider player $i$. Note that nothing that players $i+1,...,n$
do affects player $i$'s prices. Hence, a strategic player $i$ chooses
a bundle that maximizes his profit (which depends only on $ALG_{1},...,ALG_{i-1}$).
In case of tie-breaking, he chooses the bundle that decreases the
prices for players $i+1,...,n$, if such exists. We prove by
induction on the players that for every $i$, $ALG_{i}$ is the bundle
that both maximizes his profit and always minimizes players $i+1,...,n$
prices.

Consider player $1$, and assume that she chooses a bundle $A_{1}^{\prime}\neq ALG_{1}$
that maximizes her profit. Note that in setting (\ref{enu:Condition1Seq}) of theorem \ref{thm:SeqMec}, claim \ref{union-of-maximum-is-maximum} implies $A'_{1}\subset ALG_{1}$. Fix some choice of actions by
the rest of the players given that player 1 chooses $A_{1}^{\prime}:$
$(A_{2}^{\prime},p_{2}^{\prime}),...,(A_{n}^{\prime},p_{n}^{\prime})$
, i.e. player $j$ chooses $A_{j}^{\prime}$ and pays $p_{j}^{\prime}$.
Since $C$ is submodular, for every player $j>1$ it holds that 
\[
C(A'_{1},A'_{2}\ldots,A'_{j-1},A'_{j})-C(A'_{1},A'_{2}\ldots,A'_{j-1})\ge C(ALG_{1},A'_{2},\ldots, A'_{j-1}, A'_{j})-C(ALG_{1}, A'_{2},\ldots,A'_{j-1})
\]
i.e., every player $j>1$ pays no more than $p_{j}^{\prime}$ assuming
that player 1 chooses the bundle $ALG_{1}$. Recall that $ALG_{1}$ maximizes
player 1's profit, therefore he does not join any coalition, and takes $ALG_{1}$. 

We note that the proof is similar in setting (\ref{enu:Condition2Seq})
of theorem \ref{thm:SeqMec}: in this case $|A'_{1}|\le|ALG_{1}|$.
Without loss of generality we can assume that $A'_{1}\subseteq ALG_{1}$ since for
every $\tilde{A}_{1}\subseteq ALG_{1}$ with $|A'_{1}|=|\tilde{A}_{1}|$
it holds that $v_{1}(A'_{1})=v_{1}(\tilde{A}_{1})$ and for
every allocation of other players $\vec{S}_{-1}=\left(S_{2},...,S_{n}\right)$
it holds that $C(A_{1}^{\prime},\vec{S}_{-1})=C(\tilde{A}_{1},\vec{S}_{-1})$. 

The proof of the inductive step for every player $i>1$ is similar to that of player $1$.
\end{proof}
\begin{lemma}
\label{lem:SeqMec-napprox}Assume that the cost function $C$ is subadditive
(in particular, submodular). Then for every valuation profile the
sequential mechanism guarantees an approximation ratio of $n$.
\end{lemma}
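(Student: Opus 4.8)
The plan is to compare the social cost of the sequential mechanism's output $\overrightarrow{ALG}$ with the social cost of the optimal allocation $\overrightarrow{OPT}$ by exploiting the greedy nature of each player's choice together with subadditivity of $C$. The key observation is that when player $i$ makes his choice, the prefix $\overrightarrow{ALG}_{<i}=(ALG_1,\ldots,ALG_{i-1},\emptyset,\ldots,\emptyset)$ is already fixed, and player $i$ picks $ALG_i$ to maximize $v_i(S_i)-[C(\overrightarrow{ALG}_{<i}\cup S_i)-C(\overrightarrow{ALG}_{<i})]$. In particular, comparing against the alternative $S_i=OPT_i$,
\[
v_i(ALG_i)-\big[C(\overrightarrow{ALG}_{<i}\cup ALG_i)-C(\overrightarrow{ALG}_{<i})\big]\ \ge\ v_i(OPT_i)-\big[C(\overrightarrow{ALG}_{<i}\cup OPT_i)-C(\overrightarrow{ALG}_{<i})\big].
\]
Since $p_i = C(\overrightarrow{ALG}_{<i}\cup ALG_i)-C(\overrightarrow{ALG}_{<i})$ and, by subadditivity, $C(\overrightarrow{ALG}_{<i}\cup OPT_i)-C(\overrightarrow{ALG}_{<i})\le C(OPT_i)\le C(\overrightarrow{OPT})$, this rearranges to
\[
v_i(OPT_i)-v_i(ALG_i)\ \le\ C(\overrightarrow{OPT}) - p_i\ \le\ C(\overrightarrow{OPT}).
\]

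Next I would assemble these per-player bounds. Summing $v_i(OPT_i)-v_i(ALG_i)\le C(\overrightarrow{OPT})-p_i$ over all $i$ and using $\sum_i p_i = C(\overrightarrow{ALG})$ (Lemma \ref{lem:seq-NPT-IR-BB}),
\[
\sum_{i\in N}\big[v_i(OPT_i)-v_i(ALG_i)\big]\ \le\ n\cdot C(\overrightarrow{OPT}) - C(\overrightarrow{ALG}).
\]
Adding $C(\overrightarrow{ALG})$ to both sides and noting $\pi(\overrightarrow{ALG}) = C(\overrightarrow{ALG}) + \sum_i[v_i(M_i)-v_i(ALG_i)]$ while $\pi(\overrightarrow{OPT}) = C(\overrightarrow{OPT}) + \sum_i[v_i(M_i)-v_i(OPT_i)]$, subtract the common term $\sum_i v_i(M_i)$ from the welfare-difference form. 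Concretely, $\pi(\overrightarrow{ALG})-\pi(\overrightarrow{OPT}) = C(\overrightarrow{ALG})-C(\overrightarrow{OPT}) + \sum_i[v_i(OPT_i)-v_i(ALG_i)] \le C(\overrightarrow{ALG})-C(\overrightarrow{OPT}) + n\cdot C(\overrightarrow{OPT}) - C(\overrightarrow{ALG}) = (n-1)C(\overrightarrow{OPT})$. Hence $\pi(\overrightarrow{ALG})\le \pi(\overrightarrow{OPT}) + (n-1)C(\overrightarrow{OPT})\le n\cdot\pi(\overrightarrow{OPT})$, since $C(\overrightarrow{OPT})\le\pi(\overrightarrow{OPT})$.

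The main subtlety to get right is the direction of the greedy inequality: one must use $OPT_i$ as the deviation that player $i$ could have chosen but (weakly) did not prefer, and then bound the marginal cost $C(\overrightarrow{ALG}_{<i}\cup OPT_i)-C(\overrightarrow{ALG}_{<i})$ from above by $C(OPT_i)$ — this is exactly where subadditivity (rather than just monotonicity) is needed, since $C(\overrightarrow{ALG}_{<i}\cup OPT_i)\le C(\overrightarrow{ALG}_{<i})+C(OPT_i)$. A secondary point is that in the combinatorial setting $OPT_i\subseteq M_i$ is a legal bundle for player $i$, so it is genuinely among the alternatives player $i$ considers, and the fact that the mechanism breaks ties toward larger bundles does not affect the inequality (it only uses that $ALG_i$ is a maximizer). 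Everything else is routine rearrangement.
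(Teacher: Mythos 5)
Your proof is correct and follows essentially the same route as the paper: the greedy inequality against the alternative $OPT_i$, subadditivity plus monotonicity to bound the marginal cost by $C(\overrightarrow{OPT})$, and telescoping the payments to $C(\overrightarrow{ALG})$ before rearranging. The only cosmetic difference is that you state the conclusion first as the additive bound $\pi(\overrightarrow{ALG})-\pi(\overrightarrow{OPT})\le (n-1)\cdot C(\overrightarrow{OPT})$ and then convert it, whereas the paper bounds the ratio directly; the underlying computation is the same.
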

\begin{proof}
Consider a valuation profile $\vec{v}$ and let $\overrightarrow{ALG}=\left(ALG_{1},...,ALG_{n}\right)$
be the output. Let $\overrightarrow{OPT}=(OPT_{1},...,OPT_{n})$ be the
allocation that minimizes the social cost. For every
player $i$ it holds that: 
\begin{eqnarray*}
 &  & v_{i}(ALG_{i})-\left[C(ALG_{1},\ldots,ALG_{i-1},ALG_{i})-C(ALG_{1},\ldots,ALG_{i-1})\right]\\
 & \ge & v_{i}(OPT_{i})-\left[C(ALG_{1},\ldots,ALG_{i-1},OPT_{i})-C(ALG_{1},\ldots,ALG_{i-1})\right]\\
 & \ge & v_{i}(OPT_{i})-C(OPT_{i})\\
 & \ge & v_{i}(OPT_{i})-C(\overrightarrow{OPT})
\end{eqnarray*}
where the first inequality holds since $ALG_{i}$ is the bundle that
maximizes the profit of player $i$ in the $i^{th}$ iteration of
the mechanism, the second inequality is due to the subadditivity of
$C$ and the last inequality is due to monotonicity of $C$. Summing
over all players we get: 
\[
\sum_{i\in N}v_{i}(ALG_{i})-C(\overrightarrow{ALG})\ge\sum_{i\in N}v_{i}(OPT_{i})-n\cdot C(\overrightarrow{OPT})
\]
The approximation ratio is therefore:
\[
\frac{\pi(\overrightarrow{ALG})}{\pi(\overrightarrow{OPT})}=\frac{C(\overrightarrow{ALG})+\sum_{i\in N}\left[v_{i}(M_{i})-v_{i}(ALG_{i})\right]}{C(\overrightarrow{OPT})+\sum_{i\in N}\left[v_{i}(M_{i})-v_{i}(OPT_{i})\right]}\le\frac{n\cdot C(\overrightarrow{OPT})+\sum_{i\in N}\left[v_{i}(M_{i})-v_{i}(OPT_{i})\right]}{C(\overrightarrow{OPT})+\sum_{i\in N}\left[v_{i}(M_{i})-v_{i}(OPT_{i})\right]}\le n
\]
\end{proof}

To see that the analysis is tight, consider the excludable public
good problem: $M_{i}=\{i\}$ for every $i\in N$. For every $\vec{S}\neq\vec{\emptyset}$,
$C(\vec{S})=1$. For every player $i$, set $v_{i}(M_{i})=1-\varepsilon$
for arbitrary small $\varepsilon>0$. Then the mechanism outputs $\overrightarrow{ALG}=\left(\emptyset,...,\emptyset\right)$,
and the social cost is $\pi(\overrightarrow{ALG})=0+\sum_{i}v_{i}=n-n\cdot\varepsilon$.
The optimal allocation is $\vec{M}$ and its social cost is $\pi(\vec{M})=C(\vec{M})=1$.
Since $\varepsilon$ is arbitrary small, the approximation ratio is
no better than $n$.

\subsection{A Weakly Groupstrategyproof Mechanism }

Recall the weaker notion of groupstrategyproofness - weakly-groupstrategyproofness
(WGSP), in which there is no coalition which can get strictly better
profit for every player in the coalition by misreporting valuations.
Relaxing the requirement for groupstrategyproofnes, and requiring
WGSP instead, allow us to use the sequential mechanism on a wider
family of combinatorial cost-sharing problems and get the same guarantees.

\subsection*{The Sequential Mechanism for Subadditive Cost and General Valuations}
\begin{enumerate}
\item Set $ALG_i = \emptyset$ for every player $i\in N$.
\item For player $i=1$ to $n$:

\begin{enumerate}
\item Allocate to player $i$ some bundle $ALG_{i}\in\arg\max_{S_{i}\subseteq M_{i}}v_{i}(S_{i})-\left[C(\overrightarrow{ALG}\cup S_{i})-C(\overrightarrow{ALG})\right]$.

If there are several bundles that maximize the profit, choose the
lexicographically first one.

\item Charge player $i$: $p_{i}=C(ALG_1,ALG_2,\ldots,ALG_{i-1},ALG_i)-C(ALG_1,ALG_2,\ldots,ALG_{i-1})$.
\end{enumerate}
\end{enumerate}
\begin{theorem}
Suppose that the cost function $C$ is subadditive. Then, the Sequential
Mechanism is weakly-groupstrategyproof, 1-budget-balanced, and provides an approximation ratio of $n$ to the social cost.
\end{theorem}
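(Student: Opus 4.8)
The plan is to obtain the three properties with almost no new work, since the only difference from the original Sequential Mechanism is the deterministic lexicographic tie‑breaking rule. For 1-budget-balance (together with IR and NPT) I would simply invoke Lemma \ref{lem:seq-NPT-IR-BB}: its proof never uses the specific tie-breaking rule, only that a player may always choose the empty bundle for zero payment (IR), that the marginal costs of a monotone $C$ are nonnegative (NPT), and that the payments telescope to $C(\overrightarrow{ALG})$ because $C$ is normalized. Since here $C$ is monotone and normalized, the lemma applies verbatim. Likewise, for the $n$-approximation to the social cost I would reuse the proof of Lemma \ref{lem:SeqMec-napprox}: it uses only subadditivity (hence monotonicity) of $C$ and the fact that in iteration $i$ the bundle $ALG_i$ maximizes $v_i(S_i)-\left[C((ALG_1,\ldots,ALG_{i-1})\cup S_i)-C(ALG_1,\ldots,ALG_{i-1})\right]$, which holds for any tie-breaking rule. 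So these two parts require nothing beyond citing the earlier lemmas.

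The substantive part is weak groupstrategyproofness, and here I would exploit that the mechanism is genuinely \emph{sequential}: the bundle and price of player $i$ depend only on $v_i$ and on $ALG_1,\ldots,ALG_{i-1}$, and not at all on players $i+1,\ldots,n$. Fix a coalition $K\subseteq N$ and a profile $\vec v'$ differing from $\vec v$ only on the coordinates in $K$, and let $k=\min K$. First I would show by induction on $j=1,\ldots,k-1$ that the bundles $ALG_1,\ldots,ALG_{k-1}$ are \emph{identical} in the truthful run and in the misreport run: player $j\notin K$ reports the same valuation in both runs and, given that $ALG_1,\ldots,ALG_{j-1}$ already agree, faces the same profit-maximization problem solved by the same deterministic tie-breaking, hence chooses the same bundle. (This is exactly where determinism of the lexicographic rule is used — we need literal equality of the prefix, not merely equal values.) Consequently player $k$ faces the identical price schedule $p_S=C((ALG_1,\ldots,ALG_{k-1})\cup S)-C(ALG_1,\ldots,ALG_{k-1})$ in both runs.

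In the truthful run player $k$ receives a bundle $S_k\in\arg\max_S v_k(S)-p_S$ and pays $p_{S_k}$; in the misreport run she receives some bundle $S_k'$ and pays $p_{S_k'}$, so her \emph{true} utility there is $v_k(S_k')-p_{S_k'}\le v_k(S_k)-p_{S_k}$, because $S_k$ maximizes $v_k(S)-p_S$ over all $S$. That is, $v_k(S_k)-p_k\ge v_k(S_k')-p_k'$. If the WGSP hypothesis holds, every member of $K$ is weakly better off under $\vec v'$, in particular $v_k(S_k)-p_k\le v_k(S_k')-p_k'$; combining the two inequalities forces equality, so $j=k$ is the indifferent member required by the definition of WGSP, and we are done.

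I do not expect a real obstacle: unlike full groupstrategyproofness, WGSP only needs to rule out that \emph{all} coalition members strictly gain, and the sequential structure lets the lowest-indexed member $k=\min K$ serve as a ``canary'' whose outcome is pinned down by the untouched prefix $ALG_1,\ldots,ALG_{k-1}$. The only points to be careful about are the determinism of the tie-breaking (already flagged) and the observation that, crucially, this argument does \emph{not} invoke the union-closure of profit-maximizers (Claim \ref{union-of-maximum-is-maximum}) that powered the GSP proof under submodular $C$ — which is precisely what lets us drop the supermodularity/symmetry assumptions on the valuations here and work with general valuations, at the price of settling for the weaker WGSP guarantee.
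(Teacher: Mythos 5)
Your proposal is correct and follows essentially the same route as the paper: budget balance, IR, NPT, and the $n$-approximation are inherited verbatim from Lemma \ref{lem:seq-NPT-IR-BB} and Lemma \ref{lem:SeqMec-napprox}, and WGSP comes from the sequential structure, since the lowest-indexed coalition member faces a price schedule determined only by the (unchanged) prefix and hence cannot strictly gain. Your write-up merely makes explicit the prefix-identity induction and the role of the deterministic lexicographic tie-breaking, which the paper's terser argument leaves implicit.
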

\begin{proof}
All proofs remain the same except weakly groupstrategyproofness. Recall that a coalition might deviate and choose different bundles if those bundles strictly
increase the profit of every player in the coalition. We note that
for every player $j\ge1$, nothing that players $j+1,...,n$ do affects
player $j$'s profit, therefore he does not deviate and chooses the
bundle $ALG_{j}$ which is a bundle that maximize his profit.
\end{proof}

\bibliographystyle{plain}
\bibliography{combinatorialcost}

\appendix

\section{Missing Proofs}

\subsection{Proof of Proposition \ref{prop:uniquePc}}\label{appendix:ExplicitProof}

	Starting with $P_{C}(\emptyset)=0$, the potential function $P_{C}(\vec{S})$
	is uniquely determined by the recursion $P_{C}(\vec{S})=\frac{C(\vec{S})+\sum_{i\in N;S_{i}\neq\emptyset}P_{C}(\vec{S}-S_{i})}{n_{S}}$,
	where we denote by $n_{S}$ the number of non-empty sets $S_{i}$
	of $\vec{S}$. 
	
	Next, we show that $P_{C}(\vec S)=\sum_{I\subseteq N}\frac{C\left(\cup_iS_i\right)}{|I|\cdot{n \choose \left|I\right|}}$
	satisfies the requirements of the potential function. Clearly, since
	$C(\emptyset)=0$ it holds that $P_{C}(\emptyset)=0$. Notice that for every $i\in N$ we can rewrite $P_{C}(\vec{S})$:
	\begin{eqnarray*}
		P_{C}(S) & = & \sum_{I\subseteq N}\frac{C\left(\cup_iS_i\right)}{|I|\cdot{n \choose |I|}}=\sum_{I\subseteq N-\{i\}}\left[\frac{C\left(\cup_{j\in I}S_j\cup S_{i}\right)}{(|I|+1)\cdot{n \choose |I|+1}}+\frac{C\left(\cup_{j\in I}S_j\right)}{|I|\cdot{n \choose |I|}}\right]
	\end{eqnarray*}
	Consider the sum of discrete gradients: 
	\begin{eqnarray}
	 & \sum_{i\in N}&\left[P_{C}(\vec{S})-P_{C}(\vec{S}-S_{i})\right]\nonumber \\
	& = & \sum_{i\in N}\left[\sum_{I\subseteq N-\{i\}}\left[\frac{C\left(\cup_{j\in I}S_j\cup S_{i}\right)}{(|I|+1)\cdot{n \choose |I|+1}}+\frac{C\left(\cup_{j\in I}S_j\right)}{|I|\cdot{n \choose |I|}}\right]-\sum_{I\subseteq N-\{i\}}\left[\frac{C\left(\cup_{j\in I}S_j\cup\emptyset\right)}{(|I|+1)\cdot{n \choose |I|+1}}+\frac{C\left(\cup_{j\in I}S_j\right)}{\left(|I|\right)\cdot{n \choose |I|}}\right]\right]\nonumber \\
	& = & \sum_{i\in N}\left[\sum_{I\subseteq N-\{i\}}\frac{C\left(\cup_{j\in I}S_j\cup S_{i}\right)}{(|I|+1)\cdot{n \choose |I|+1}}-\frac{C\left(\cup_{j\in I}S_j\right)}{(|I|+1)\cdot{n \choose |I|+1}}\right]\label{eq-appendix:double-sum}
	\end{eqnarray}
	By re-arranging the above double sum, we obtain constants $\left\{ a_{I}\right\} _{I\subseteq N}$
	independent of $S$ such that 
	\[
	\sum_{i\in N}\left[P_{C}(\vec{S})-P_{C}(\vec{S}-S_{i})\right]=\sum_{I\subseteq N}a_{I}\cdot C\left(\cup_{j\in I}S_j\right)
	\]
Since $C(\vec{\emptyset})=0$, it suffices to show that $a_{N}=1$ and that for every non-empty $I\neq N$, $a_{I}=0$. Let $I\subseteq N$, $I\neq\emptyset$, observe that the summand $C\left(\cup_{j\in I}S_j\right)$
	appears in (\ref{eq-appendix:double-sum}) exactly $|I|$ times with a plus
	sign, once for every $i\in I$. In each of these times the coefficient
	of $C\left(\cup_{j\in I}S_j\right)$ is $\frac{1}{|I|\cdot{n \choose |I|}}$.
	If $I\neq N$, $C\left(\cup_{j\in I}S_j\right)$ appears
	in (\ref{eq-appendix:double-sum}) exactly $n-|I|$ times with a minus sign,
	once for every $i\notin I$. In each time the coefficient of $-C\left(\cup_{j\in I}S_j\right)$
	is $\frac{1}{(|I|+1)\cdot{n \choose |I|+1}}$. Summing the positive
	and negative coefficients yields 
	\begin{eqnarray*}
		a_{I}=\frac{|I|}{|I|\cdot{n \choose |I|}}-\frac{n-|I|}{(|I|+1)\cdot{n \choose |I|+1}} & = & \frac{1}{{n \choose |I|}}-\frac{(n-|I|)\cdot(|I|+1)!\cdot(n-|I|-1)!}{(|I|+1)\cdot n!}\\
		& = & \frac{1}{{n \choose |I|}}-\frac{(n-|I|)\cdot|I|!\cdot(n-|I|-1)!}{n!}\\
		& = & \frac{1}{{n \choose |I|}}-\frac{1}{{n \choose |I|}}=0
	\end{eqnarray*}
	To conclude the proof, we note that $C\left(\cup_{j\in N}S_j\right)$ does
	not appear in (\ref{eq-appendix:double-sum}) with a minus sign. Thus, a similar
	observation shows that 
	\[
	a_{N}=\frac{|N|}{|N|\cdot{n \choose |N|}}=1
	\]

\end{document}